\newtheorem{lem}{Lemma}
\newcommand{\rk}{\text{rk}}
\begin{document}

\def\ucl{Department of Physics \& Astronomy, University College London, London, WC1E 6BT, United Kingdom}

\title{Single-Shot Decoding and Fault-tolerant Gates with Trivariate Tricycle Codes}

\author{Abraham Jacob}
\thanks{These authors contributed equally.}
\affiliation{\ucl}

\author{Campbell McLauchlan}
\thanks{These authors contributed equally.}
\affiliation{\ucl}
\affiliation{Centre for Engineered Quantum Systems, School of Physics, University of Sydney, Sydney, NSW 2006, Australia}

\author{Dan E. Browne}
\affiliation{\ucl}

\begin{abstract}
While quantum low-density parity check (qLDPC) codes are a low-overhead means of quantum information storage, it is valuable for quantum codes to possess fault-tolerant features beyond this resource efficiency.
In this work, we introduce trivariate tricycle (TT) codes, qLDPC codes that combine several desirable features: high thresholds under a circuit-level noise model, partial single-shot decodability for low-time-overhead decoding, a large set of transversal Clifford gates and automorphisms within and between code blocks, and (for several sub-constructions) constant-depth implementations of a (non-Clifford) $CCZ$ gate. 
TT codes are CSS codes based on a length-3 chain complex, and are defined from three trivariate polynomials, with the 3D toric code (3DTC) belonging to this construction. We numerically search for TT codes and find several candidates with improved parameters relative to the 3DTC, using up to 48$\times$ fewer data qubits as equivalent 3DTC encodings. 
We construct syndrome-extraction circuits for these codes and numerically demonstrate single-shot decoding in the X error channel in both phenomenological and circuit-level noise models. Under circuit-level noise, TT codes have a threshold of $0.3\%$ in the Z error channel and $1\%$ in the X error channel (with single-shot decoding).
All TT codes possess several transversal $CZ$ gates that can partially address logical qubits between two code blocks. Additionally, the codes possess a large set of automorphisms that can perform Clifford gates within a code block.
Finally, we establish several TT code polynomial constructions that allows for a constant-depth implementation of logical $CCZ$ gates. We find examples of error-correcting and error-detecting codes using these constructions whose parameters out-perform those of the 3DTC, using up to $4\times$ fewer data qubits for equivalent-distance 3DTC encodings.
%
\end{abstract}

\maketitle

\section{Introduction}

Quantum error-correction (QEC) is a necessary feature of any large-scale quantum computing architecture. The surface code~\cite{bravyi1998quantumcodeslatticeboundary,dennis2002topological,Kitaev_2003,Fowler_2012} is a very promising candidate QEC code, with mature fault-tolerant schemes for performing logic on encoded qubits~\cite{Litinski_2019_GoSCs}, including lattice surgery~\cite{Horsman_2012}, low overhead measurements in all bases~\cite{Gidney_2024_Y_mmt}, and twist and hole braiding~\cite{Brown_2017}. However, even with recent theoretical results on resource savings with the surface code~\cite{gidney2024magicstatecultivationgrowing,gidney2023yokedsurfacecodes}, the very large resource estimates for achieving useful quantum advantage~\cite{beverland2022assessingrequirementsscalepractical} mean that we should consider the use of higher-rate quantum low-density parity check (qLDPC) codes.

The bivariate bicycle (BB) codes~\cite{bravyi2024high} are a class of code with very good encoding rates and relative distances. They perform well under a circuit-level noise model, exhibiting high thresholds comparable to the surface code. As such, they are a promising candidate for a qLDPC code that can help lower the overheads of fault-tolerant quantum computing~\cite{yoder2025tourgrossmodularquantum}. There has been a great deal of investigation into the the classes to which these codes belong, with the aim of finding codes with improved parameters and properties~\cite{Kovalev_Pryadko_2013,voss2025multivariatebicyclecodes,wang2024coprimebivariatebicyclecodes,lin2023quantumtwoblockgroupalgebra,liang2025generalizedtoriccodestwisted,liang2025planarquantumlowdensityparitycheck}, and implementing them in hardware~\cite{ye2025quantumerrorcorrectionlong,wang2025demonstrationlowoverheadquantumerror}.

Meanwhile, there has been interest in developing codes that display other useful features. To address the large time overheads and associated classical compute resources required for decoding qLDPC codes~\cite{beverland2022assessingrequirementsscalepractical}, much effort has gone into finding codes with single-shot decodability~\cite{Campbell_2019,quintavalle2021single,Higgott_2023,scruby2024radial_codes,lin2025GB_codes} and other single-shot properties~\cite{hong2024singleshotpreparationhypergraphproduct,hillmann2024singleshot}. Single-shot codes can allow for decoding to be performed on small amounts of syndrome data, collected over a constant number of QEC cycles, without incurring a significant drop in accuracy. The presence of meta-checks, which help determine when syndrome measurement errors have been made, can assist with this type of strategy, as well as allowing for operations such as lattice surgery to be performed with constant time-overhead~\cite{hillmann2024singleshot}.

A large set of transversal gates and automorphisms is useful for fault-tolerant computing with qLDPC codes, and recent work has focused on finding codes with \mbox{(fold\mbox{-})}transversal Clifford gates~\cite{Kubica_2015,breuckmann2024fold,Quintavalle_2023,malcolm2025computingefficientlyqldpccodes,eberhardt2024logicaloperatorsfoldtransversalgates}, automorphisms~\cite{bravyi2024high,sayginel2025faulttolerantlogicalcliffordgates}, and low-depth non-Clifford gates~\cite{breuckmann2024cupsgatesicohomology,lin2024transversalnoncliffordgatesquantum,golowich2024quantumldpccodestransversal,nguyen2024goodbinaryquantumcodes}. The former two provide logical primitives that can speed up an array of fault-tolerant computing schemes~\cite{malcolm2025computingefficientlyqldpccodes}, even those based on Pauli-based computation (involving the measurement of high-weight logical operators within and between code blocks)~\cite{he2025extractorsqldpcarchitecturesefficient,yoder2025tourgrossmodularquantum}. The latter can provide new avenues towards low-overhead magic-state preparation~\cite{Daguerre_2025} and distillation~\cite{wills2024constantoverheadmagicstatedistillation}.

Despite this recent intense activity, a thorough investigation of codes that \textit{combine} several of these desirable properties has been lacking. To that end we introduce the trivariate tricycle (TT) codes, generalisations of the BB codes, that exhibit several useful features. These codes are based on a length-3 chain complex and hence possess meta-checks for one of the bases, which we take to be the $Z$ check basis. We introduce  a quantity which we call  ``meta-check distance", useful for considerations such as generalised lattice surgery~\cite{hillmann2024singleshot}, and show that it is equal to the $Z$-distance of the code. The 3D toric code (3DTC) belongs to this code family, and indeed, we can lay out the TT codes on a cubic lattice without boundary, with additional long-range connections. We numerically find several medium-size examples of TT codes that exhibit parameters that far exceed those of the surface code and 3DTC, for an equivalent number of logical qubits and code distance. All TT codes found have large $X$ distances, with $d_X > d_Z$, which could be useful for architectures with biased noise qubits~\cite{Ruiz_2025}. We identify a minimum-depth syndrome measurement circuit for the TT codes and numerically evaluate the codes' performance under phenomenological noise and circuit-level noise, finding that TT codes possess high thresholds despite the large check weights. We evaluate the single-shot performance of the codes (under both noise models) using an overlapping window strategy~\cite{scruby2024radial_codes}, numerically observing that $Z$ memory experiment performance displays evidence of single-shot decodability.

We find that several logical $\overline{CZ}$ gates can be performed transversally on the $TT$ code logical qubits between two blocks of code, while shift automorphisms can perform certain Clifford gates within a code block. Up to in-block automorphisms, there are three non-equivalent logical $\overline{CZ}$ gates. For many of the TT codes we discuss, the logical qubits can be divided into one of three sets. Two of the three logical $\overline{CZ}$ gates have a simple form on these logical sets, preserving one and entangling the other two. 

Finally, we introduce several TT code constructions that allow a constant-depth circuit of $CCZ$ gates to preserve the code space. We thereby find codes with non-Clifford gates whose parameters greatly outperform those of the 3DTC. As an example, our $[[48, 3, 4]]$ code with $Z$-distance 4 and $X$-distance 8 uses $1/4$ of the data qubits of the regular 3DTC (with $Z$-distance 4). We find codes with better encoding rates that also admit non-trivial $\overline{CCZ}$ gates, but which seem to be limited to have $Z$-distance 2 for one or more of the logical qubits. They are thus error-correcting in one basis ($d_X>2$) but only error-detecting in the other ($d_Z = 2$). By gauge-fixing one or more logical qubits in these distance-2 codes, we can obtain error-correcting codes ($d\geq 3$) with an additional $CZ$ gate implementable between two code blocks.

This paper is structured as follows. In Section~\ref{sec:TT_codes_intro}, we introduce the code construction and prove several important Lemmas relating to the code parameters and code equivalences. We also introduce a layout for these codes on a 3D cubic grid with additional long-range connections. Section \ref{sec: single-shot} investigates the performance of TT codes under phenomenological noise, showing evidence of a threshold for small-to-medium instances of TT codes in memory experiments. In Section \ref{sec:CLN_Sims} we  present circuit-level noise simulations of TT code performance, demonstrating evidence for thresholds in both $X$ and $Z$ memory experiments and single-shot capabilities for $Z$ memory using an overlapping window decoding method. In Section~\ref{sec:logic_gates}, we introduce the decomposition of logical qubits into the three logical sets, along with the shift automorphisms and transversal $CZ$ gates that are possible in any TT code. We then discuss the sub-constructions that allow for constant-depth $CCZ$ circuits to be defined. We prove that a large class of such constructions exist, and present codes with logical $\overline{CCZ}$ gates found through exhaustive searches over small code sizes. In Section~\ref{sec:conclusion} we conclude and present several directions for future work. In the Appendices, we present further examples of TT codes found through random search, prove that our syndrome measurement circuits function as claimed, and provide further details on methods relating to codes with constant-depth $CCZ$ circuits.

\section{3-Block and Trivariate Tricycle Codes}\label{sec:TT_codes_intro}

We first define arbitrary 3-block group algebra codes (for Abelian group $G$), before introducing the trivariate tricycle codes more explicitly. For a general introduction on group algebra code constructions, see Ref.~\cite{lin2023quantumtwoblockgroupalgebra}.
We will be considering the group algebra $\mathbb{F}_2[G]$, consisting of formal sums of the form $\sum_{g\in G} a_g g$, with $a_g\in \mathbb{F}_2$. The sum and product of two elements from $\mathbb{F}_2[G]$ are defined in the natural way.
For a 3-block code, we introduce three length-1 chain complexes -- $\mathcal{A}_1 \xrightarrow{a}\mathcal{A}_0$, $\mathcal{B}_1 \xrightarrow{b}\mathcal{B}_0$, and $\mathcal{C}_1 \xrightarrow{c}\mathcal{C}_0$ -- where $\mathcal{A}_i,\mathcal{B}_i, \mathcal{C}_i = \mathbb{F}_2[G]$ and $a,b,c\in\mathbb{F}_2 [G]$. We will then be considering the tensor/balanced product complex formed from these three length-1 complexes. For general treatments of balanced products, see Refs.~\cite{panteleev2022asymptotically,Breuckmann_2021_BP_codes,breuckmann2024cupsgatesicohomology}. We need not define the tensor product of $\mathbb{F}_2[G]$-modules in general, since we can simply note that $\mathbb{F}_2[G] \otimes_G \mathbb{F}_2[G] \simeq \mathbb{F}_2[G]$. The tensor product complex is then:
\begin{align}
    &\mathcal{A}_1\otimes_G \mathcal{B}_1\otimes_G \mathcal{C}_1 \xrightarrow{\partial_3}\bigoplus_{i+j+k = 2}\mathcal{A}_i\otimes_G \mathcal{B}_j\otimes_G \mathcal{C}_k \xrightarrow{\partial_2} \nonumber\\
    &\bigoplus_{i+j+k = 1}\mathcal{A}_i\otimes_G \mathcal{B}_j\otimes_G \mathcal{C}_k \xrightarrow{\partial_1} \mathcal{A}_0\otimes_G \mathcal{B}_0\otimes_G \mathcal{C}_0.
    \label{eqn:chain_complex}
\end{align}
The boundary maps can be written in terms of $a$, $b$, and $c$ as:
\begin{align}
    \partial_3 &= \begin{bmatrix}
        a\\
        b\\
        c
    \end{bmatrix}\label{eqn:del_3}\\
    \partial_2 &= \begin{bmatrix}
        0 & c & b\\
        c & 0 & a\\
        b & a & 0
    \end{bmatrix}\\
    \partial_1 &= \begin{bmatrix} a & b & c\end{bmatrix}.\label{eqn:del_1}
\end{align}

We will now construct the trivariate tricycle codes from this abstract chain complex. 
We first set $G = \mathbb{Z}_\ell \times \mathbb{Z}_m \times \mathbb{Z}_p$, for integers $\ell$, $m$, and $p$. Define the following matrices which together generate this group:
\begin{align}
    x &= S_\ell\otimes \mathds{1}_m\otimes \mathds{1}_p ,\\
    y &= \mathds{1}_\ell\otimes S_m \otimes \mathds{1}_p,\\
    z &= \mathds{1}_l \otimes \mathds{1}_m\otimes S_p ,
\end{align}
where $S_\ell$ is the cyclic shift matrix, whose elements are $\delta_{i,i\oplus 1}$, where $\oplus$ here refers to addition mod $\ell$. We then introduce matrices $A$, $B$ and $C$, which will be polynomials in these three variables. These will take the positions of $a$, $b$, and $c$ in Equations~\ref{eqn:del_3}--\ref{eqn:del_1}. For now, we will consider each polynomial to contain three terms:
\begin{align}
    A &= A_1 + A_2 + A_3 \label{eqn: A poly}\\
    B &= B_1 + B_2 + B_3 \label{eqn: B poly} \\
    C &= C_1 + C_2 + C_3 \label{eqn: C poly}
\end{align}
though we will later relax this restriction. We then let $H_X \equiv \partial_1$, $H_Z = \partial_2^\top$ and $M_Z = \partial_3^\top$, where $H_X$ and $H_Z$ represent the usual parity check matrices of a CSS code, and $M_Z$ is a matrix of \textit{meta-checks}, encoding the redundancy in the $Z$ checks. It can be verified that $H_X H_Z^\top = M_Z H_Z = 0$.
To summarise, this results in the following chain complex structure:
\begin{align}
    \mathcal{C}_M \xleftarrow{M_Z} \mathcal{C}_Z \xleftarrow{H_Z}\mathcal{Q} \xrightarrow{H_X}\mathcal{C}_X
\end{align}
with all spaces defined as vector spaces over $\mathbb{F}_2$: $\mathcal{Q}, \mathcal{C}_X, \mathcal{C}_Z$, and $\mathcal{C}_M$ correspond to qubits, $X$-checks, $Z$-checks and $Z$-meta-checks, respectively. The matrices are given by:
\begin{align}
    M_Z &= \begin{bmatrix}
        A^\top & B^\top & C^\top
    \end{bmatrix}\label{eqn:PCM_MZ}\\
    H_Z &= \begin{bmatrix}
        0 & C^\top & B^\top\\
        C^\top & 0 & A^\top\\
        B^\top & A^\top & 0
    \end{bmatrix} \label{eqn:PCM_Z}\\
    H_X &= \begin{bmatrix} A & B &C\end{bmatrix}.\label{eqn:PCM_X}
\end{align}

\subsection{Code parameters and explicit examples}\label{sec:code_params}

\begin{table*}
    \begin{center}
        \def\arraystretch{1.3}
        \[\begin{array}{|c|c|c|c|c|c|c|c|}
        \hline
        \mathbf{[[n, k, d]]} & \begin{array}{c}
        \textbf{Encoding } \\
        \textbf{ Rate } \mathbf{r}
        \end{array} & \mathbf{\ell, m, p} & \mathbf{A} & \mathbf{B} & \mathbf{C} & 
        \mathbf{d_X} & \mathbf{d_Z} \\
        \hline \hline [[72,6,6]] & 1 / 12 & 4,3,2 & 1+ y + xy^2 & 1 + yz  + x^2y^2 & 1 + xy^2z + x^2y & 
        12 & 6\\
        \hline [[180,12, 8]] & 1 / 15 & 5,4,3 & 1+ x^2y^3z + x^4y & 1 + x^3  +x^4z^2 & 1 + x^3y^3 + x^4yz^2 & 
        20 & 8 \\
        \hline [[432,12, 12]] & 1 / 36 & 6,6,4 & 1+ xyz^3 + x^3y^4z^2 & 1 + x^3yz^2 + x^3y^2z^3 & 1 + x^4y^3z^3 + x^5z^2 & 
        \leq 36 & 12 \\
        \hline
        \end{array}\]
    \end{center}
    \caption{Example trivariate tricycle codes and their parameters, along with the polynomials $A$, $B$ and $C$ that define them. We report the encoding rates $r = k/n$ along with the $Z$ and $X$ distances, which are inequivalent for TT codes.   \label{tab:mycodes}}
\end{table*}

Here, we discuss the parameters of and equivalences among TT codes, before introducing explicit examples. 
A trivariate tricycle (TT) code is defined on $n= 3\ell m p$ data qubits. We begin by discussing the number of logical qubits. We have: 
\begin{lem}\label{lem:Num_logical_qubits}
For a TT code with $H_Z$ and $H_X$ defined in Equations~\ref{eqn:PCM_Z} and \ref{eqn:PCM_X}, respectively, the number of logical qubits is given by:
\begin{align}
\begin{split}
    k = -\frac{n}{3} + \dim(\ker A \cap \ker B\cap  \ker C) + \\
    \dim(\ker[0\; C\; B]\cap \ker[C\;0\;A]\cap \ker[B\; A\; 0]).
\end{split}
\end{align}
\end{lem}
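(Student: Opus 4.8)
The plan is to compute $k$ from the standard CSS dimension formula $k = n - \mathrm{rank}(H_X) - \mathrm{rank}(H_Z)$, valid here because $H_X H_Z^\top = 0$ has already been recorded, and then to evaluate the two ranks by applying rank--nullity to the \emph{transposed} parity-check matrices, whose block structures immediately produce intersections of kernels. Concretely: since $H_X = [A\ B\ C]$ has $\ell m p$ rows and $3\ell m p$ columns, $\mathrm{rank}(H_X) = \mathrm{rank}(H_X^\top) = \ell m p - \dim\ker H_X^\top$, and the block-column form $H_X^\top = [A^\top\ B^\top\ C^\top]^\top$ gives $\ker H_X^\top = \ker A^\top \cap \ker B^\top \cap \ker C^\top$. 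Likewise $H_Z$ is square of size $3\ell m p$, and $H_Z^\top = \partial_2$ has block rows $[0\ C\ B]$, $[C\ 0\ A]$, $[B\ A\ 0]$ (from $H_Z = \partial_2^\top$ together with Eq.~\eqref{eqn:PCM_Z}), so $\mathrm{rank}(H_Z) = 3\ell m p - \dim\ker H_Z^\top$ with $\ker H_Z^\top = \ker[0\ C\ B] \cap \ker[C\ 0\ A] \cap \ker[B\ A\ 0]$, which is already exactly the intersection appearing in the statement.

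The one remaining step is to pass from $\ker A^\top \cap \ker B^\top \cap \ker C^\top$ to $\ker A \cap \ker B \cap \ker C$ without changing the dimension; note that although $\dim\ker P^\top = \dim\ker P$ always holds for a square $P$, this is not automatic for the intersection, so a genuine argument is needed. Since $A$, $B$, $C$ are polynomials in the commuting shift matrices $x,y,z$ and $S_\ell^\top = S_\ell^{-1}$, the transpose $P^\top$ of any such polynomial $P$ is obtained by the substitution $x\mapsto x^{-1}$, $y\mapsto y^{-1}$, $z\mapsto z^{-1}$. This substitution is induced by the inversion automorphism $g\mapsto g^{-1}$ of $G = \mathbb{Z}_\ell\times\mathbb{Z}_m\times\mathbb{Z}_p$, hence extends to a ring automorphism $\tau$ of $\mathbb{F}_2[G]$; under the identification $\mathbb{F}_2[G]\cong\mathbb{F}_2^{\ell m p}$ it is in particular an $\mathbb{F}_2$-linear isomorphism of the qubit-block space, $P^\top = \tau(P)$, and $\tau^2 = \mathrm{id}$. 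A one-line computation, $\tau(P^\top v) = \tau(P^\top)\,\tau(v) = P\,\tau(v)$, then shows $P^\top v = 0 \iff \tau(v)\in\ker P$, i.e.\ $\tau$ restricts to an isomorphism from $\ker P^\top$ onto $\ker P$ for each $P\in\{A,B,C\}$, and therefore carries $\ker A^\top \cap \ker B^\top \cap \ker C^\top$ isomorphically onto $\ker A \cap \ker B \cap \ker C$; in particular these spaces have equal dimension.

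Putting the pieces together gives $\mathrm{rank}(H_X) = \ell m p - \dim(\ker A \cap \ker B \cap \ker C)$ and $\mathrm{rank}(H_Z) = 3\ell m p - \dim(\ker[0\ C\ B] \cap \ker[C\ 0\ A] \cap \ker[B\ A\ 0])$; substituting into $k = n - \mathrm{rank}(H_X) - \mathrm{rank}(H_Z)$ and using $n = 3\ell m p$, the $3\ell m p$ terms cancel and one is left with the claimed identity. I do not anticipate any real obstacle: everything is elementary $\mathbb{F}_2$ linear algebra, and the only points that require care are the transpose/inversion bookkeeping in the middle paragraph and keeping track of which boundary map acts on which of the four spaces $\mathcal{Q}$, $\mathcal{C}_X$, $\mathcal{C}_Z$, $\mathcal{C}_M$, so that it is the block structures of $H_X^\top$ and $H_Z^\top$ (not those of $H_X$ and $H_Z$) that feed into the kernel intersections.
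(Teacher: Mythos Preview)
Your proposal is correct and follows essentially the same route as the paper: both use the CSS formula $k=n-\mathrm{rk}(H_X)-\mathrm{rk}(H_Z)$, apply rank--nullity to the transposes, read off the kernel of $H_Z^\top$ directly as the stated intersection, and then pass from $\ker A^\top\cap\ker B^\top\cap\ker C^\top$ to $\ker A\cap\ker B\cap\ker C$ via the inversion involution on $\mathbb{F}_2[G]$. The only cosmetic difference is that the paper realises that involution concretely as conjugation by the permutation matrix $\bar{C}=\bar{C}_\ell\otimes\bar{C}_m\otimes\bar{C}_p$ (with $\bar{C}S\bar{C}=S^\top$), whereas you phrase it as the ring automorphism $\tau$ induced by $g\mapsto g^{-1}$; these are the same map.
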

\begin{proof}
As usual for a CSS code, $k = n - \text{rk}(H_X) - \text{rk}(H_Z)$, where we use rk$(M)$ to refer to the rank of matrix $M$. Now note that:
\begin{align}
    \rk(H_X) = \frac{n}{3} - \dim (\ker \, H_X^\top).
\end{align}
Let $\bar{C}_\ell$ be the $\ell \times \ell$ permutation matrix with elements $\delta_{-i,i}$, where the minus is taken mod $\ell$. This has the property that $\bar{C} S_\ell \bar{C} = S_\ell^\top$. Similarly define $\bar{C}_m$ and $\bar{C}_p$, and let $\bar{C} = \bar{C}_\ell \otimes \bar{C}_m\otimes \bar{C}_p$. Then we have:
\begin{align}
    H_X^\top = \begin{bmatrix}
        A^\top\\
        B^\top\\
        C^\top
    \end{bmatrix} = \begin{bmatrix}
        \bar{C} & 0 & 0\\
        0 & \bar{C} & 0\\
        0 & 0 & \bar{C}
    \end{bmatrix} \begin{bmatrix}
        A\\
        B\\
        C
    \end{bmatrix} \bar{C}.
\end{align}
Hence, $H_X^\top$ differs from $[A^\top B^\top C^\top]^\top$ by multiplication on the left and right by invertible matrices (since $\bar{C}$ is self-inverse). Therefore, the dimensions of their kernels are equivalent. Finally,
\begin{align}
    \dim\left(\ker \begin{bmatrix}A \\ B \\ C\end{bmatrix}\right) = \dim(\ker{A}\cap\ker{B}\cap\ker{C}).
\end{align}
Similarly, we can see that rk$(H_Z) = \text{rk}(H_Z^\top) = n - \dim(\ker[0\; C\; B]\cap \ker[C\;0\;A]\cap \ker[B\; A\; 0])$. 
\end{proof}


The code distances $d_X$ and $d_Z$ are defined as:
\begin{align}
    d_X &= \min \lbrace |v| \, : \, v\in \ker (H_Z) \backslash \text{rs}(H_X)\rbrace\\
    d_Z &= \min \lbrace |v| \, : \, v\in \ker (H_X) \backslash \text{rs}(H_Z)\rbrace
\end{align}
where $\text{rs}(M)$ denotes the row space of matrix $M$. In TT codes, $d_X \neq d_Z$ in general, and we observe that in all cases tested, $d_X \geq d_Z$. We can also introduce a quantity we call the \textit{meta-check distance} similarly:
\begin{align}
    d_M = \min\lbrace |v| \, : \, v \in \ker(M_Z)\backslash \text{cs}(H_Z)\rbrace
\end{align}
where $\text{cs}(M)$ denotes the column space of $M$. Because the meta-checks encode redundancies in the $Z$-checks, they can be used to repair faulty syndromes resulting from classical measurement errors. The interpretation of the meta-check distance is that it is the minimum number of classical measurement errors that can be made when measuring the $Z$-checks without triggering any meta-checks. A large meta-check distance ensures good performance under single-shot decoding, since we are likely to repair the syndrome accurately. A large meta-check distance also allows for a reduced time-overhead during logical operations like lattice surgery~\cite{hillmann2024singleshot}. We have the following:

\begin{lem}\label{lem:distances}
    For a TT code defined as above, $d_M = d_Z$, and $d_X = \min\lbrace |v|\, : \, v\in \ker(H_Z^\top) \backslash \text{rs}(M_Z)\rbrace$.
\end{lem}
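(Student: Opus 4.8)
The plan is to produce a single weight-preserving linear isomorphism that simultaneously identifies the affine sets defining $d_X$ with those defining $\min\{|v|:v\in\ker(H_Z^\top)\setminus\text{rs}(M_Z)\}$, and the sets defining $d_M$ with those defining $d_Z$. Recall the permutation matrix $\bar{C}=\bar{C}_\ell\otimes\bar{C}_m\otimes\bar{C}_p$ from the proof of Lemma~\ref{lem:Num_logical_qubits}: it is symmetric, satisfies $\bar{C}^2=\mathds{1}$, and, because it conjugates each of $x,y,z$ to its transpose and these commute, it satisfies $\bar{C}D\bar{C}=D^\top$ for every polynomial $D$ in $x,y,z$. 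Let $\widetilde{C}=\mathrm{diag}(\bar{C},\bar{C},\bar{C})$ be the block-diagonal version acting on the three-block spaces $\mathcal{Q}$ and $\mathcal{C}_Z$. Conjugating each block of $H_Z$ in \eqref{eqn:PCM_Z} by $\bar{C}$ and using $\bar{C}D^\top\bar{C}=D$ gives $\widetilde{C}H_Z\widetilde{C}=H_Z^\top$. Comparing the explicit mirror-image block layouts in \eqref{eqn:PCM_MZ} and \eqref{eqn:PCM_X} gives $\bar{C}\,H_X\,\widetilde{C}=M_Z$, and transposing this (all matrices involved being symmetric permutations) gives $\widetilde{C}\,H_X^\top\,\bar{C}=M_Z^\top$.

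Since $\widetilde{C}$ and $\bar{C}$ are invertible, these three identities translate directly into statements about kernels, row spaces and column spaces. From $\widetilde{C}H_Z\widetilde{C}=H_Z^\top$ and $\widetilde{C}^2=\mathds{1}$ we get $\widetilde{C}(\ker H_Z)=\ker H_Z^\top$ and $\widetilde{C}(\text{cs}\,H_Z)=\text{cs}\,H_Z^\top=\text{rs}\,H_Z$. From $\bar{C}H_X\widetilde{C}=M_Z$ we get $\widetilde{C}(\ker M_Z)=\ker H_X$, and from $\widetilde{C}H_X^\top\bar{C}=M_Z^\top$ we get $\widetilde{C}(\text{rs}\,H_X)=\widetilde{C}(\text{cs}\,H_X^\top)=\text{cs}\,M_Z^\top=\text{rs}\,M_Z$. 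Crucially, $\widetilde{C}$ is a permutation matrix, so $|\widetilde{C}v|=|v|$ for all $v$.

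Putting these together, $\widetilde{C}$ restricts to a weight-preserving bijection $\ker(H_Z)\setminus\text{rs}(H_X)\to\ker(H_Z^\top)\setminus\text{rs}(M_Z)$, so the two sets have the same minimum weight, which is precisely the claimed formula for $d_X$. Likewise $\widetilde{C}$ restricts to a weight-preserving bijection $\ker(M_Z)\setminus\text{cs}(H_Z)\to\ker(H_X)\setminus\text{rs}(H_Z)$, giving $d_M=d_Z$. The only real care required is bookkeeping: keeping track of which $\mathbb{F}_2$-space ($\mathcal{Q}$, $\mathcal{C}_Z$, $\mathcal{C}_M$ or $\mathcal{C}_X$) each object lives in, and keeping the row-space/column-space and $H_Z$-versus-$H_Z^\top$ distinctions straight. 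There is no genuine obstacle: the identities above are forced by the block symmetry of $\partial_2$ together with the palindromic relation between $\partial_1$ and $\partial_3$ and the $\bar{C}$-conjugation trick, which is exactly the structure built into the trivariate tricycle construction.
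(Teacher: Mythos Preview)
Your proof is correct and uses essentially the same idea as the paper: apply the block-diagonal conjugation $\widetilde{C}=\mathrm{diag}(\bar{C},\bar{C},\bar{C})$ to translate between $H_X$, $H_Z$, $H_Z^\top$, and $M_Z$, exploiting $\bar{C}D\bar{C}=D^\top$ and the block symmetry of $H_Z$. The paper carries this out element-by-element (take a minimum-weight $v$, form $\bar{v}=\widetilde{C}v$, check membership and non-membership by hand, then argue the reverse inequality), whereas you first establish the three global matrix identities $\widetilde{C}H_Z\widetilde{C}=H_Z^\top$, $\bar{C}H_X\widetilde{C}=M_Z$, $\widetilde{C}H_X^\top\bar{C}=M_Z^\top$ and then read off the weight-preserving bijections in one shot; this is a cleaner packaging of the same argument.
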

\begin{proof}
    Suppose that $v$ is a minimum-weight ``check logical", that is, it is in $\ker(M_Z)\backslash \text{cs}(H_Z)$. Let $v = [\alpha^\top \; \beta^\top \; \gamma^\top ]^\top$. Then define the following:
    \begin{align}
        \bar{v} = \begin{bmatrix}
            \bar{C} \alpha\\
            \bar{C} \beta\\
            \bar{C} \gamma
        \end{bmatrix}.
    \end{align}
    This vector is in the kernel of $H_X$, since:
    \begin{align}
        H_X \bar{v} = [A \; B \; C] \begin{bmatrix}
            \bar{C} \alpha\\
            \bar{C} \beta\\
            \bar{C} \gamma
        \end{bmatrix} = \bar{C}(A^\top \alpha + B^\top \beta + C^\top \gamma) = 0.
    \end{align}
    Therefore $\bar{v}$ is a logical operator unless it belongs to the row space of $H_Z$. In this case, $\bar{v} = H_Z^\top w$, for some vector $w$. But then:
    \begin{align}
        \begin{bmatrix}
            \bar{C} & 0 & 0\\
            0 & \bar{C} & 0 \\
            0 & 0 & \bar{C}
        \end{bmatrix} v = H_Z^\top w \implies v &= \begin{bmatrix}
            \bar{C} & 0 & 0\\
            0 & \bar{C} & 0 \\
            0 & 0 & \bar{C}
        \end{bmatrix} H_Z^\top w \\
        &= H_Z \begin{bmatrix}
            \bar{C} & 0 & 0\\
            0 & \bar{C} & 0 \\
            0 & 0 & \bar{C}
        \end{bmatrix} w,
    \end{align}
    and so $v \in \text{cs}(H_Z)$, which contradicts the assumption that it was a check logical. Hence, $\bar{v}$ is a logical $Z$ operator. It has the same weight as $v$ since $\bar{C}$ does not change the weight of the vectors $\alpha$, $\beta$ and $\gamma$. Since we assumed that $v$ was a minimum-weight check logical, we see that $d_Z \leq d_M$. One can, in exactly the same way, show that $d_M \leq d_Z$.

    One can show the second part of the Lemma in the same way, by considering a minimum weight $X$-logical, $v$, and the corresponding $\bar{v}\in \ker(H_Z^\top) \backslash \text{rs}(M_Z)$, which has the same weight. 
\end{proof}


Certain different polynomials produce codes with equivalent parameters. Observing these equivalences can speed up code searches. We first observe that, trivially, if we permute $A$, $B$ and $C$, we obtain an equivalent code. Furthermore, taking the transpose of $A$, $B$ and $C$ also results in an equivalent code: the ranks of $H_X$ and $H_Z$ can easily be shown to be unchanged, and the distances $d_X$, $d_Z$, and $d_M$ are unchanged as a result of Lemma~\ref{lem:distances}. We finally show that, if one of the polynomials has a common monomial factor, this can be removed without affecting the code parameters. Without loss of generality, we treat $A$ as the polynomial that can be simplified in this way.

\begin{lem}\label{lem:code_equivalence}
    Consider the $[[n,k,d]]$ TT code defined by polynomials $A$, $B$ and $C$, with parity check matrices $H_X$ and $H_Z$. Suppose that $A = \alpha^\top \tilde{A}$ for some monomial $\alpha$ and polynomial $\tilde{A}$. Then define the TT code with parity check matrices $\tilde{H}_X$ and $\tilde{H}_Z$, based on polynomials $\tilde{A}$, $B$ and $C$. This defines an $[[n,k,d]]$ code. Additionally, the distances $d_X$ and $d_Z=d_M$ are separately preserved.
\end{lem}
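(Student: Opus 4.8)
The plan is to exhibit explicit invertible matrices that transform $H_X$ into $\tilde H_X$ (up to column permutation) and likewise $H_Z$ into $\tilde H_Z$, so that the ranks — and hence $k$ — are preserved, and then argue that the relevant kernel/row-space quotients are carried isomorphically and weight-preservingly, so that the distances match. The key observation is that multiplication by the monomial $\alpha$ (equivalently, by $\alpha^\top = \alpha^{-1}$ since monomials are permutation matrices) is an invertible, weight-preserving operation on $\mathbb{F}_2[G]$. Writing $A = \alpha^\top \tilde A$, we have $A^\top = \tilde A^\top \alpha$, and the idea is to absorb the $\alpha$ factors into a change of basis on the appropriate qubit/check blocks.

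Concretely, first I would handle $H_X = [A \; B\; C] = [\alpha^\top \tilde A \; B \; C]$. Factoring, $H_X = \alpha^\top [\tilde A \; \alpha B \; \alpha C]$ — but this changes $B$ and $C$, which is not what we want. Instead, the cleaner route is to act on the qubit register: $H_X = [\tilde A\; B\; C]\,\mathrm{diag}(\alpha^\top, \mathds{1}, \mathds{1})$, so $H_X$ differs from $\tilde H_X$ by right-multiplication by an invertible (monomial) matrix, hence $\rk(H_X) = \rk(\tilde H_X)$. For the $Z$ checks, from Equation~\ref{eqn:PCM_Z} one computes
\begin{align}
    H_Z = \begin{bmatrix} 0 & C^\top & B^\top\\ C^\top & 0 & A^\top\\ B^\top & A^\top & 0\end{bmatrix}
    = \begin{bmatrix} 0 & C^\top & B^\top\\ C^\top & 0 & \tilde A^\top\alpha\\ B^\top & \tilde A^\top\alpha & 0\end{bmatrix},
\end{align}
and the $\alpha$ in the $(2,3)$ and $(3,3)$ blocks — both acting on the third qubit block from the right — can be stripped by the same right-multiplication by $\mathrm{diag}(\mathds{1},\mathds{1},\alpha)$; what remains is $[0\;C^\top\;B^\top;\,C^\top\;0\;\tilde A^\top;\,B^\top\;\tilde A^\top\;0]$ up to also conjugating the first two check rows, which I would realize by left-multiplication by a suitable invertible block matrix (pushing an $\alpha^\top$ through the first two rows). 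One must check these row/column operations are mutually consistent, i.e. that the same diagonal qubit-basis change $\mathrm{diag}(\mathds{1},\mathds{1},\alpha)$ that fixes $H_Z$ is compatible with the one used for $H_X$ (it is, since $H_X$ only involves the third block through $C$, which is untouched). This gives $\rk(H_Z) = \rk(\tilde H_Z)$, and with $n$ fixed, $k$ is preserved.

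For the distances, I would invoke Lemma~\ref{lem:distances} to reduce $d_M = d_Z$ to a single statement and then argue as follows: the qubit-basis change $P = \mathrm{diag}(\alpha^\top,\mathds{1},\mathds{1})$ (resp.\ its analogue for the meta-check picture) is a permutation matrix, so it maps $\ker(H_X)$ bijectively and weight-preservingly onto $\ker(\tilde H_X)$, and similarly maps $\mathrm{rs}(H_Z)$ onto $\mathrm{rs}(\tilde H_Z)$ (using that $H_Z$ and $\tilde H_Z$ are related by left- and right-multiplication by invertibles, one of which is exactly $P$ on the relevant block); hence it induces a weight-preserving bijection on the quotient $\ker(H_X)\backslash\mathrm{rs}(H_Z)$, giving $d_Z = \tilde d_Z$. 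The same argument with the roles of $X$ and $Z$ swapped, using the second half of Lemma~\ref{lem:distances} ($d_X = \min\{|v|: v\in\ker(H_Z^\top)\backslash\mathrm{rs}(M_Z)\}$) and the fact that $M_Z = [A^\top\; B^\top\; C^\top]$ transforms under monomial multiplication just like $H_X^\top$ did, yields $d_X = \tilde d_X$.

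The main obstacle I anticipate is bookkeeping: making sure that the left-multiplications used to tidy up the check-side of $H_Z$ are genuinely by invertible matrices over $\mathbb{F}_2[G]$ (monomials are fine, but if the argument seems to need inverting a polynomial, something has gone wrong and the operation must be rerouted through the qubit register instead), and that the single qubit-basis change is simultaneously valid for $H_X$, $H_Z$, and $M_Z$. Once the transformation is pinned down as a composition of a qubit-side monomial permutation and check-side invertible block operations, all three conclusions — invariance of $k$, of $d_Z = d_M$, and of $d_X$ — follow uniformly, and the remaining steps are the routine verifications I have sketched rather than anything substantive.
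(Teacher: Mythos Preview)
Your overall strategy—exhibit a single monomial qubit permutation $P$ together with check-side invertible operations that carry $(H_X,H_Z,M_Z)$ to $(\tilde H_X,\tilde H_Z,\tilde M_Z)$, hence a weight-preserving code equivalence—is correct and is essentially what the paper does (the paper phrases it element-by-element, taking a minimum-weight logical $v_Z$ and applying $\mathrm{diag}(\alpha^\top,1,1)$ to it, then directly verifying the image is still a non-stabilizer logical; but it is the same permutation).

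However, your $H_Z$ bookkeeping is off in a way that, if left as written, breaks the compatibility claim. The entries $A^\top$ in $H_Z$ sit at positions $(2,3)$ and $(3,2)$, not $(2,3)$ and $(3,3)$; they do \emph{not} both act on the third qubit block, so right-multiplication by $\mathrm{diag}(\mathds{1},\mathds{1},\alpha)$ does not strip both $\alpha$'s. More importantly, you cannot use one qubit-side permutation $\mathrm{diag}(\alpha^\top,1,1)$ for $H_X$ and a \emph{different} one $\mathrm{diag}(1,1,\alpha)$ for $H_Z$ and then declare them ``compatible'': for the distance argument you need a single permutation of physical qubits. The correct relation is
\[
H_Z\,\mathrm{diag}(\alpha,\mathds{1},\mathds{1}) \;=\; \mathrm{diag}(\mathds{1},\alpha,\alpha)\,\tilde H_Z,
\]
which uses commutativity of the group algebra and $A^\top=\tilde A^\top\alpha$. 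This says that the \emph{same} $P=\mathrm{diag}(\alpha^\top,\mathds{1},\mathds{1})$ that gives $H_X=\tilde H_X P$ also satisfies $H_Z P^\top = M\tilde H_Z$ with $M=\mathrm{diag}(\mathds{1},\alpha,\alpha)$ invertible, so $P$ carries $\ker(H_X)$ to $\ker(\tilde H_X)$ and $\mathrm{rs}(H_Z)=\mathrm{im}(H_Z^\top)$ to $\mathrm{rs}(\tilde H_Z)$ simultaneously. With that correction your argument goes through cleanly, and the $d_X$ case (via $M_Z$ and the second half of Lemma~\ref{lem:distances}) is genuinely analogous.
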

\begin{proof}
    The fact that $n$ is unchanged is trivial. We show that $k$ is unchanged. Notice that:
    \begin{align}
        \tilde{H}_X = [\alpha A | B | C] = [A | B | C]\begin{bmatrix}
            \alpha & 0 & 0 \\
            0 & 1 & 0\\
            0 & 0 & 1
        \end{bmatrix},
    \end{align}
    and so, owing to the fact that $\alpha$ is invertible, $\tilde{H}_X$ is related to $H_X$ by an invertible matrix, meaning that rk$\tilde{H}_X = \text{rk}H_X$. Now consider some vector $v\in \ker H_Z$. It can be easily seen that:
    \begin{align}
        \begin{bmatrix}
            1 & 0 & 0\\
            0 & \alpha & 0\\
            0 & 0 & \alpha 
        \end{bmatrix} v \in \ker \tilde{H}_Z.
    \end{align}
    Hence, once again owing to the invertibility of $\alpha$, we have an isomorphism between $\ker H_Z$ and $\ker \tilde{H}_Z$. Therefore, rk$H_Z = \text{rk}\tilde{H}_Z$. Hence, $k = n - \ker H_X - \ker H_Z = n - \ker \tilde{H}_X - \ker \tilde{H}_Z$ is unchanged.

    Now consider the distances $d_X$ and $d_Z$. Suppose that $v_Z = [v_1^\top,v_2^\top,v_3^\top]^\top\in \ker H_X$ is a minimum-weight logical operator. Then clearly $v_Z' = [v_1^\top \alpha,v_2^\top,v_3^\top]^\top \in \ker \tilde{H}_X$. Hence, $v_Z'$ represents either a logical or a $Z$-stabilizer of the code $(\tilde{H}_X,\tilde{H}_Z)$. If it is a stabilizer then it is in the image of $\tilde{H}_Z^\top$, i.e., $v_Z' = \tilde{H}_Z^\top w$, for some $w = [w_1^\top, w_2^\top, w_3^\top]^\top$. But $v_Z' = \text{diag}[\alpha^\top , 1, 1] v_Z$. Hence:
    \begin{align}
        v_Z &= \begin{bmatrix}
            \alpha & 0 & 0\\
            0 & 1 & 0\\
            0 & 0 & 1
        \end{bmatrix} \tilde{H}_Z^\top  \begin{bmatrix}
            1 & 0 & 0\\
            0 & \alpha^\top & 0\\
            0 & 0 & \alpha^\top
        \end{bmatrix} \left( \begin{bmatrix}
            1 & 0 & 0\\
            0 & \alpha & 0\\
            0 & 0 & \alpha
        \end{bmatrix} \begin{bmatrix}
            w_1\\
            w_2\\
            w_3
        \end{bmatrix}\right)\\
        &= H_Z^\top w'
    \end{align}
    where $w ' = [w_1^\top , w_2^\top \alpha^\top , w_3^\top \alpha^\top ]^\top$. Hence, $v_Z \in \text{im} H_Z^\top$, contradicting the assumption that it is a logical operator of the code $(H_X, H_Z)$. Hence, we conclude that $v_Z'$ is a logical operator of $(\tilde{H}_X,\tilde{H}_Z)$. Therefore:
    \begin{align}
        \tilde{d}_Z \leq |v_Z'| = |\alpha v_1| + |v_2| + |v_3| = |v_Z| = d_Z,
    \end{align}
    where $\tilde{d}_Z$ and $d_Z$ are $Z$-distances of the modified and original codes, respectively. We can similarly show that $d_Z\leq \tilde{d}_Z$,
    and so the $Z$-distances are equivalent. By Lemma~\ref{lem:distances}, we find that $\tilde{d}_M = d_M$ as well. An analogous argument shows that $\tilde{d}_X = d_X$. Hence all distances are equivalent.
\end{proof}

We finish this section by listing some TT codes, based on weight-3 polynomials (i.e., those with three distinct terms), that display parameters that outperform those of common topological codes in 2D and 3D, such as the surface code and 3D toric code. In Table~\ref{tab:mycodes}, we list some examples of TT codes found through a random numerical search, along with their parameters and the matrices $A$, $B$ and $C$ that define each code. We list both the $X$ and $Z$ distances, which are upper bounds based on numerical searches performed using the BP+OSD decoder in the technique outlined in Section 6 of Ref. \cite{bravyi2024high}, and a custom version of QDR \cite{Pryadko2022} using software from the CSSLO repository \cite{mark_webster_csslo_2024}.We list further TT code examples in Appendix~\ref{app:TT_code_tables}.


\subsection{3D Layout of TT codes}\label{sec:3D_layout}

\begin{figure*}
\begin{tikzpicture}
\node[inner sep=0pt] at (0,0)     {\includegraphics[width=0.8\linewidth]{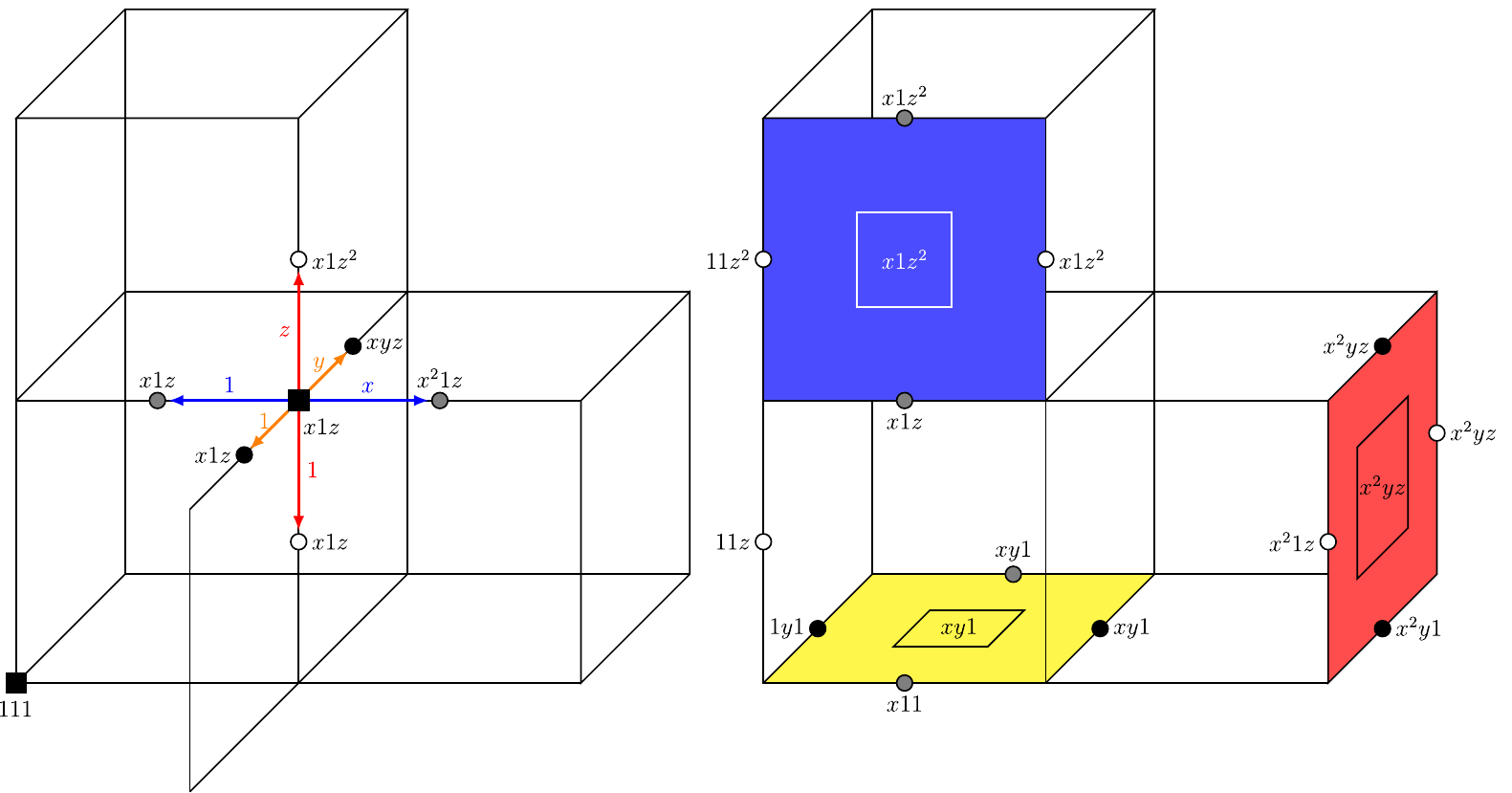}};
\node[inner sep=0pt] at (-7,4) {(a)};
\node[inner sep=0pt] at (0,4) {(b)};
\end{tikzpicture}
    \caption{Layout of the 3D toric code on a cubic lattice. The code's defining polynomials are $A = 1+x$, $B=1+y$, $C=1+z$. Qubits are placed on edges, $X$-checks on vertices and $Z$-checks on faces. (a) $X$-checks are connected to adjacent data qubits by terms in $A$ (for Left qubits, shaded grey), $B$ (for Centre qubits, shaded black), or $C$ (for Right qubits, shaded white). Example $X$-check $x1z$ and its connected qubits is shown. (b) $Z$-checks fall into three families, coloured red ($Z_a$), blue ($Z_b$), and yellow $(Z_c)$, respectively. Each family corresponds to an orthogonal orientation of a plane in the lattice. $Z$-checks are connected to data qubits on the edges surrounding the face by terms from (transposes of) two of the three polynomials. As an example, the red $Z$-checks are connected to Centre qubits by terms in $C^\top$ and Right qubits by terms in $B^\top$. Three example $Z$-checks and their connected qubits are shown. TT codes can be locally laid out on the same 3D cubic lattice, with additional long-range connections. For TT codes built from weight-3 polynomials, the $X$ checks gain 3 long-range connections, and the $Z$ checks gain 2 long-range connections.}
    \label{fig:checks}
\end{figure*}

A conceptual aid exists for bivariate bicycle codes upon laying them out on a 2D grid, with additional long-range connections and with periodic boundary conditions. Similarly, we can locally lay out the Tanner graphs of the TT codes on a 3D cubic lattice with long-range connections. This is most readily illustrated using the 3DTC, which is a special case of a TT code with  $A = 1+x$, $B=1+y$ and $C=1+z$. We lay the code out on a lattice with qubits on the edges, $X$-checks on the vertices and $Z$-checks on the faces. This structure is shown in Fig.~\ref{fig:checks}. We also can assign meta-checks to the 3-cells (cubes) of the lattice (see Fig.~\ref{fig:CCZ_cube} of Section~\ref{sec:CCZ_gates}). 

Also similar to the case of bivariate bicycle codes, the 3D layout elucidates a useful monomial labelling. For a general TT code, we define three blocks of data qubits, labelled L, C and R which correspond to the Left, Center and Right blocks of columns in $H_X$ and $H_Z$. In Fig.~\ref{fig:checks}, the L, C and R qubits inhabit edges pointing in the $x$, $y$ and $z$ directions, respectively (they are coloured grey, black and white). The number of data qubits in each block is given by $\ell m p$, and as such, we can label each qubit in one of the blocks by a distinct monomial $\alpha\in \mathcal{M} \equiv \lbrace x^i y^j z^k : i=1,\ldots,\ell, j=1,\ldots, m,k=1,\ldots, p\rbrace$.

There are also $\ell m p$ $X$-checks (rows of $H_X$), and we also label each of these with a monomial in $\mathcal{M}$. We do so in such a way that $X$-check $\alpha \in\mathcal{M}$ is connected to L qubits $A_i \alpha$ (for all terms $A_i$ in $A$), C qubits $B_i\alpha$, and R qubits $C_i\alpha$. For the 3DTC, this is shown in Fig.~\ref{fig:checks}(a). We explicitly consider an $X$-check labelled by $xz\in\mathcal{M}$. Meanwhile, we have three sets of $Z$ checks, which correspond to the three blocks of rows from $H_Z$. We will refer to these checks, and colour them in Fig.~\ref{fig:checks}(b), as $Z_a$ (red), $Z_b$ (blue), and $Z_c$ (yellow) checks. Geometrically, in the 3DTC, these checks correspond to square faces oriented in three orthogonal directions. The checks in each of these three sets are labelled with monomials from $\mathcal{M}$. The $Z_a$, $Z_b$ and $Z_c$ checks are connected to CR, LR and LC qubits, respectively via terms from $A^\top$, $B^\top$ or $C^\top$. In Fig.~\ref{fig:checks}(b) we consider explicit examples for checks from each set. There are, finally, $\ell m p$ meta-checks resulting from rows in $M_Z$, which we once again label by monomials from $\mathcal{M}$. Meta-check $\alpha$ is such that it is ``connected to" $Z_a$-checks $A^\top \alpha$, $Z_b$-checks $B^\top \alpha$, and $Z_c$ checks $C^\top \alpha$ (see Fig.~\ref{fig:CCZ_cube}). By this we mean that the product of all the $Z$ checks connected to $\alpha$ is the identity. 

A TT code with weight-3 polynomials has 3 additional long-range connections for each $X$-check relative to the 3DTC, and 2 extra long-range connections for each $Z$-check. However, the code's Tanner graph may still possess a 3D toric layout: it does so whenever it possesses a spanning subgraph isomorphic to the Cayley graph of $\mathbb{Z}_\mu\times \mathbb{Z}_\lambda\times \mathbb{Z}_\nu$ for integers $\mu, \lambda,\nu$. Its Tanner graph may also be composed of a single or multiple connected components, though in this paper we focus on those with only a single connected component. In Appendix~\ref{app:TT_code_tables} we report conditions for codes to possess a 3D toric layout and a connected Tanner graph.

\section{Single-shot decoding of TT codes under phenomenological noise} \label{sec: single-shot}

\begin{figure}
\subfloat[\label{fig1}]{\includegraphics[width=\linewidth]{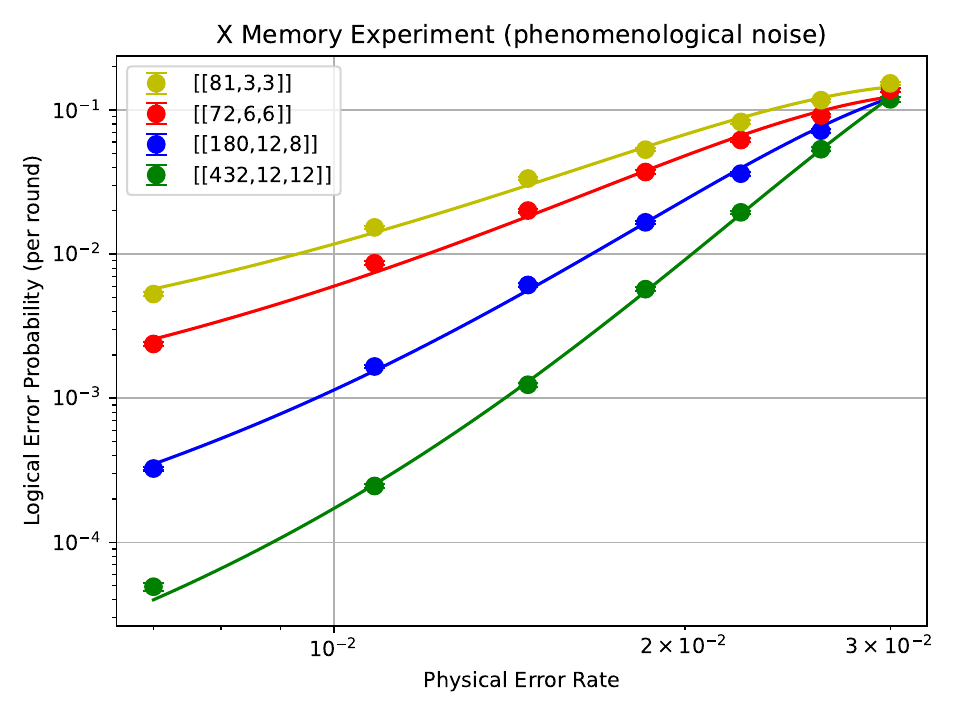}}\\
\subfloat[\label{fig2}]{\includegraphics[width=\linewidth]{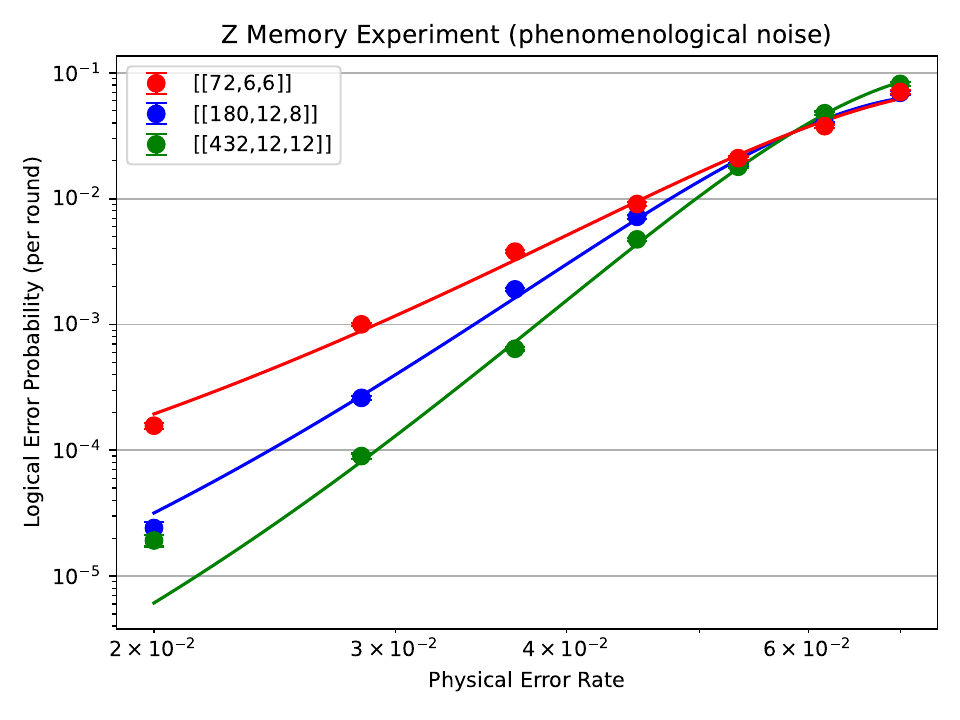}}
\caption{Performance of the TT codes under a phenomenological noise model. We present logical error rates (LERs) in memory experiments performed over $2d_Z$ rounds of syndrome measurements. (a) TT and 3DTC performance in $X$ memory experiments, for small instances of TT codes from Table~\ref{tab:mycodes}. For all instances considered, the TT codes perform better, while encoding more logical qubits than the $[[81,3,3]]$ 3DTC (yellow). Decoding was performed with BP+OSD-CS0 over the full syndrome history. (b) TT code LERs in $Z$ memory experiments with a (2,1) windowing strategy. The asymmetric distances of the TT codes lead to strong error suppression in the $X$ error channel ($Z$ memory), resulting in order of magnitude reductions in the LER, compared to the $Z$ error channel ($X$ memory).}
\label{fig:phenom_data}
\end{figure}

\begin{figure}
    \centering
    \includegraphics[width=\linewidth]{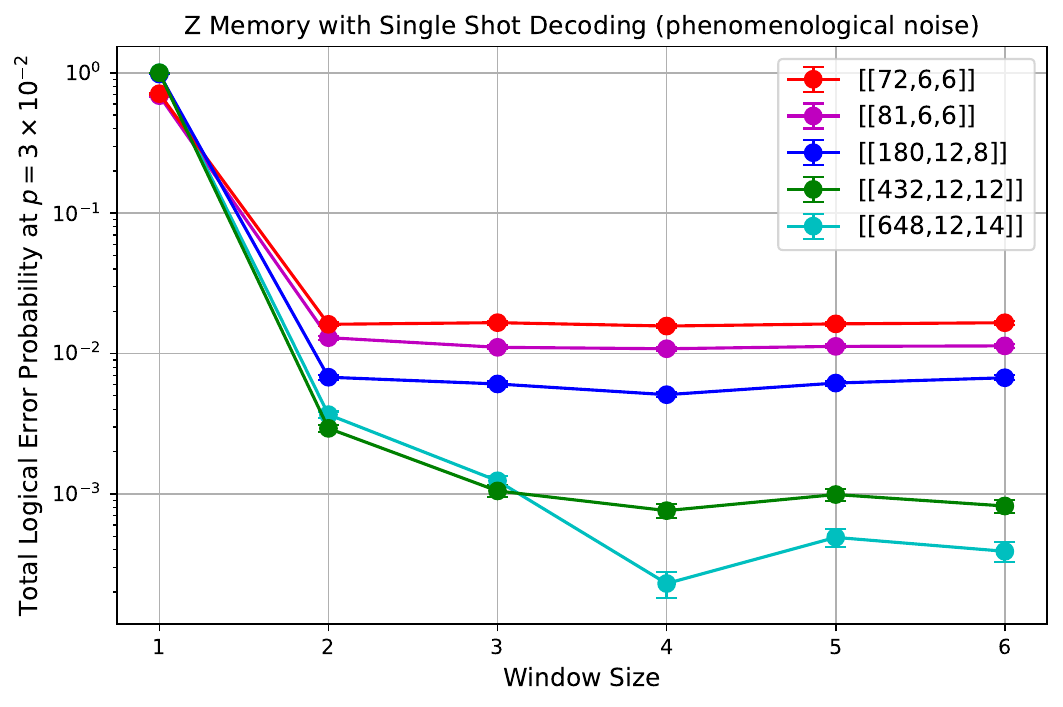}
    \caption{Total logical error probability for $Z$ memory vs window size used in the overlapping window strategy, for several TT codes, listed in \Cref{tab:codes_2}. The plateaus demonstrates that optimal decoding performance can be obtained with a fixed window and commit size, indicating full syndrome histories do not need to considered. Simulations were performed under a phenomenological noise model with physical noise rate $p=3\times 10^{-2}$ and $N=14$ rounds. The decoding windows had commit regions of size 1.}
    \label{fig:window_size_comp}
\end{figure}

We wish to investigate TT code performance numerically, and we begin by performing simulations under a simplified, ``phenomenological" noise model. We separately perform $Z$ and $X$ memory experiment simulations: we prepare all logical qubits in the $\ket{0}$ state (resp. the $\ket{+}$ state), perform stabilizer measurement for $r$ rounds, then measure out all data qubits in the $Z$ basis (resp. $X$ basis) to determine the state of the logical qubits at the end of the experiment. In between rounds, single-qubit depolarising noise channels are applied to data qubits with strength $p$ and, independently, each measurement suffers a classical readout error with probability $p$. Our simulations were carried out using the Python packages LDPC \cite{Roffe_LDPC_Python_tools_2022} and Stim \cite{Gidney_C_Stim}. 

In Fig.~\ref{fig:phenom_data}(a), we present data for the $X$ memory experiments, examining the codes presented in Table~\ref{tab:mycodes}. We plot logical error probability (per round) for each code, without normalising for the number of logical qubits. We choose to run the memory experiment simulations for $r = 2d$ rounds, where $d$ is the code distance. Decoding is performed using a BP+OSD decoder (without any windowing)~\cite{roffe_decoding_2020,Roffe_LDPC_Python_tools_2022} and we perform quadratic fits to the data. We additionally plot the results for the $[[81,3,3]]$ 3DTC for comparison, showing that all TT codes tested out-perform this code for the range of physical error rates examined.

We perform an estimate of the threshold for the $X$ memory experiment -- the crossing-point of the three TT code curves in Fig.~\ref{fig:phenom_data}(a) -- which is at around $3.0\%$. We also determine pseudo-thresholds for the three codes -- the solution to the break-even equation $p_L (p) = p$, where $p$ is the physical error rate and $p_L$ is the per-round logical error rate in the $X$ memory experiment. (Note we define the pseudo-threshold differently here to in Section~\ref{sec:CLN_Sims}, where we define it as the solution to $p_L(p) = kp$. Under phenomenological noise, the solutions to this latter equation are far above threshold and hence have little meaning.) We find the following estimates for the pseudo-thresholds under this particular error channel. $[[81,3,3]]$ 3D toric code: $0.87\%$; $[[72,6,6]]$ TT code: $1.3\%$; $[[180,12,8]]$ TT code: $1.9\%$; $[[432,12,12]]$ TT code: $2.3\%$.

We then test for single-shot decodability in the $Z$ memory experiment. We test for single-shot properties using an overlapping window method and a BP+OSD decoder~\cite{scruby2024radial_codes,lin2025GB_codes,Skoric_2023,tan_2023_window} (see the LDPC Python package for an implementation~\cite{roffe_decoding_2020,Roffe_LDPC_Python_tools_2022}). The idea of an overlapping window method is that we perform decoding inside a ``window" of $w$ measurement rounds. The subset of the syndrome that was extracted in the rounds that fall within this window is passed to a BP+OSD decoder and a most-likely correction of that restricted syndrome is determined. We then consider a ``commit region" of the first $c$ rounds in the window. Any component of the correction in the window that lies in the commit region is committed to, while the component outside of this region is discarded. The window is then slid $c$ rounds into the future and decoding performed again, in the same way. We refer to this as a $(w,c)$ windowing strategy. Since we are decoding with restricted information (ignoring the check measurements outside of each window), this strategy does not in general perform as well as decoding with full syndrome information. However, the latter's temporal overhead scales with the number of rounds, which becomes impractical. 
Single-shot codes, meanwhile, allow for a threshold to be maintained even as the window size is kept constant and the total number of syndrome extraction rounds is increased, which leads to much lower decoding overheads~\cite{Brown_2016_SS_GCC}.

In Fig.~\ref{fig:phenom_data}(b), we present data for the $Z$ memory experiment simulations, examining codes from Table~\ref{tab:mycodes}, and employing a $(2,1)$ windowing strategy. Under this strategy, we observe a threshold for $Z$ memory of $5.8\%$. The pseudo-thresholds of the three codes (for $Z$ memory) sit above this threshold and are: $[[72,6,6]]$ TT code: $7.0\%$; $[[180,12,8]]$ TT code: $6.8\%$; $[[432,12,12]]$ TT code: $6.6\%$. We note that in the low noise regime the decoder struggled to continue suppressing errors. For all error rates, $Z$ memory simulations were significantly slower than $X$ memory simulations. 

We also examine the $Z$ memory performance under different windowing strategies. In Fig.~\ref{fig:window_size_comp} we present results for a $(w,1)$ windowing strategy, where we vary $w$ between 1 and 6. We present results for a range of codes from Table~\ref{tab:mycodes} and Table~\ref{tab:codes_2}. We fix the total number of syndrome extraction rounds to be $r=14$ and the physical error rate to be $p=3\times 10^{-2}$. We find that the logical error rates for $Z$ memory experiments for the TT codes rapidly plateau as we increase the window size, meaning that the performance of the decoder does not continue to increase as more syndrome information is taken into account during each round of decoding, indicating single-shot properties. 

\section{Circuit Level Noise Simulations} \label{sec:CLN_Sims}

We conduct $X$ and $Z$ memory simulations, and verify the existence of $(w,c)$ single-shot decodability under a circuit level noise model for both the TT and 3D Toric codes. We utilise depth-13, interleaved syndrome measurement circuits detailed in Appendix~\ref{app:SM_circuits}. 
We define three sets of $n/3$ $Z$-measure qubits (for measuring $Z_a$, $Z_b$ and $Z_c$ checks), and for two of these, we alternate using them for $Z_b$ and $Z_c$ checks in adjacent rounds, again to minimise circuit-depth, lastly we use $n/3$ $X$-measure qubits. However, we note that omitting resets on the $X$-measure qubits to minimise the circuit depth, likely has a minimal effect on memory performance~\cite{gehér2025resetresetquestion,harper2025characterisingfailuremechanismserrorcorrected}. Overall we use $4n/3$ measure qubits (for $n$ data qubits in the code). In Appendix~\ref{app:SM_circuits}, we prove that the circuits result in the required transformations of the stabilizer tableaux. We stress that we have not optimised the ordering of these circuits for circuit-level distance, and further improvements could potentially be made.


\begin{figure}


\subfloat[]{\includegraphics[width=\linewidth]{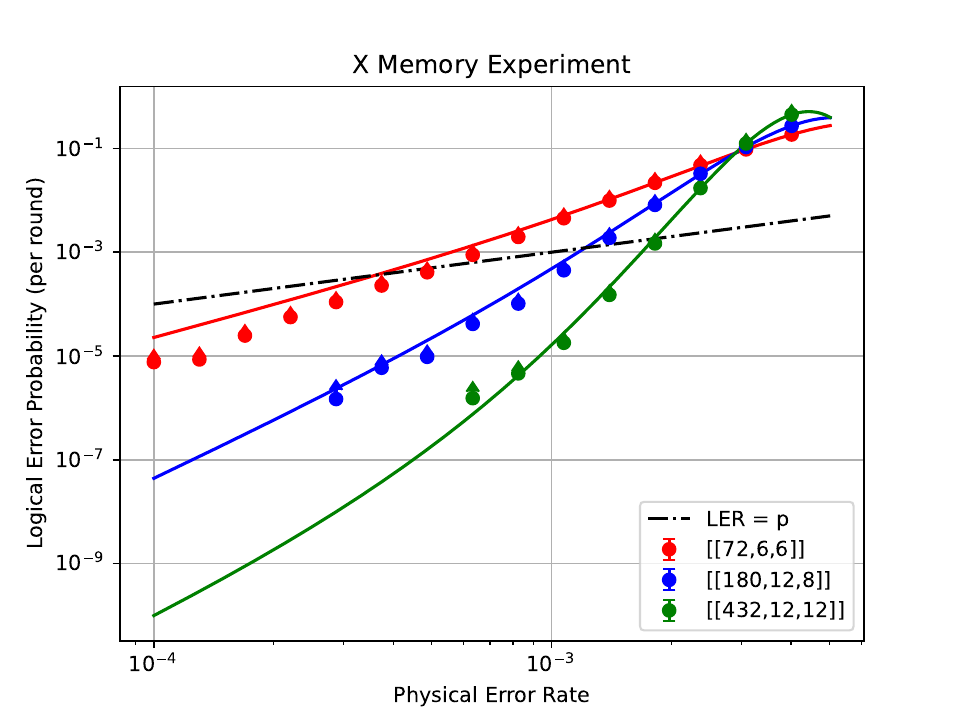}}\\
\subfloat[]{\includegraphics[width=\linewidth]{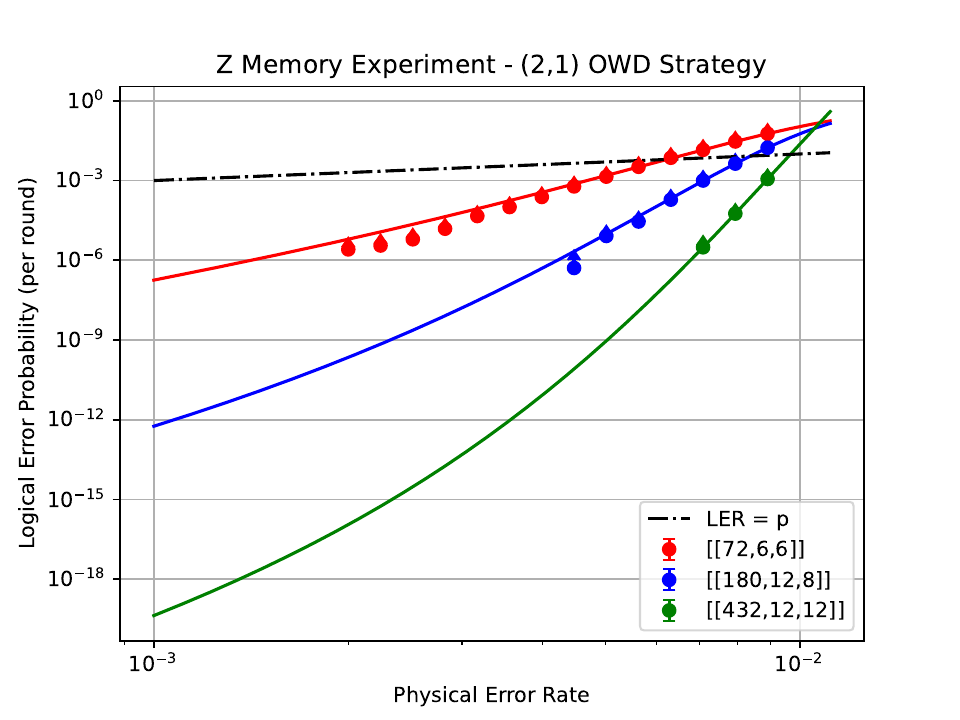}}
\caption{Performance of the TT codes under circuit level noise. (a) Logical error rate per round in $X$ memory experiments, decoded using BP+OSD-CS30, over the full syndrome history. A crossing between instances of TT codes is found at a physical error rate of $2.9 \times 10^{-3}$. (b) Logical error rate per round in $Z$ memory experiments. Decoding was performed using a $(2,1)$ overlapping window strategy (Sec. \ref{sec: single-shot}) with BP+OSD-CS0. We find a crossing at $1.1 \times 10^{-2}$. Arrowheads show the upper limit of the logical error rate.}
\label{fig:decoding_data}
\end{figure}

We use a standard depolarizing, circuit-level noise model, where all errors occur with probability $p$. All qubits undergo bit/phase flip errors after resets, and depolarising noise while idling. All qubits also undergo two-qubit (correlated) depolarising errors after each CNOT gate. Lastly, measure qubits undergo measurement flips (implemented via $X$ or $Z$ errors on measure qubits prior to measurement). The results are shown in Table \ref{Table: Toric_vs_TT_thresholds}. For toric code simulations, we use a modified version of the syndrome measurement circuit, omitting gates associated with $J_3$ terms, where $J \in \{A, B, C\}$, in the polynomials Eqns.(\ref{eqn: A poly}-\ref{eqn: C poly}). These circuits may not be optimal for the 3DTC, but we omit noise channels on the idling layers thereby introduced. Hence, our 3DTC results are likely an upper-bound on the code performance.  

For memory experiments, we perform $r$ rounds of syndrome measurement, and compute the logical error rate per round ($p_L$) with the formula:
\begin{equation}
    p_L = 1 - (1 - P_L)^{\frac{1}{r}}
\end{equation}
where $P_L$ is the total number of logical errors over all shots. To extrapolate the code performance to lower error rates, we fit trend lines using the \textit{ansatz}, $p_L = p^{d_{circ} / 2} e^{c_0 + c_1 p + c_2 p^2}$~\cite{bravyi2024high}, and numerically compute the coefficients using SciPy \cite{2020SciPy-NMeth}. The parameter $d_{circ}$ is the upper-bound on the circuit level distance, obtained by finding the minimum weight error that triggers no detectors in the detector matrix \cite{derks2024designing}. The detector matrix is obtained using stimbposd \cite{Higgott_O_stimbposd} and the minimum weight error was found using the technique in Section 6 of Ref. \cite{bravyi2024high} and BP+OSD \cite{Roffe_LDPC_Python_tools_2022}. For large detector matrices, for example in the $[[432,12,12]]$ code, BP+OSD struggles to find low weight errors when estimating circuit-level distance. In that case, in fits we use an effective circuit distance $d^{\prime}_{circ} \leq d_{circ}$, the maximum value that produces a trend line that fits the data points. Due to the asymmetry in the $X$ and $Z$ distances, the \textit{limiting} distance is the $Z$ distance. As such, in Table~\ref{Table: Toric_vs_TT_thresholds} we present crossing points for the $X$ memory experiments, which serve as an estimate for a threshold for the TT codes. In the following, we do not normalise for the number of logical qubits encoded, hence the error rates include correlated errors within a code patch.

\subsection{Simulation Results}

In $X$ memory experiments over $d_Z + 1$ rounds, using BP+OSD-CS30 with a parallel BP schedule, we observe a crossing between various small instances of TT codes at $2.9 \times 10^{-3}$. This is slightly higher than the crossing for the Toric code, which occurs at $2.7 \times 10^{-3}$. We also compute the pseudo-thresholds, the physical error rate below which the code suppresses more errors than an equivalent number of un-encoded qubits. This is the solution of the break-even equation $p_L(p) = kp$ \cite{bravyi2024high}, with $k$ the number of logical qubits, and $p$ the physical error rate. We find that the pseudo-thresholds of TT codes are consistently higher than those of 3DTCs. Our simulations suggest that the $[[432,12,12]]$ TT code has a logical error rate $\leq 10^{-9}$ at a physical error rate of $10^{-4}$; meaning, it can preserve 12 logical qubits for over 100 million syndrome cycles, in the low noise regime. We also note that the effective distances $d^{\prime}_{circ}$, used to compute the trend lines are slightly lower than the computed circuit distances.

For $Z$ memory experiments, which are limited by $d_X$, we evaluate the logical error rates over 13 rounds of syndrome measurement, with BP+OSD-CS0 and a parallel BP schedule. Note that $Z$-memory decoding is more complex, with higher-weight syndromes and BP failing more frequently. For this reason, we reduce the order of the OSD post-processing step to reduce runtime. For these simulations, we use a $(2,1)$ overlapping widow decoder to verify the existence of a crossing, while employing single-shot decoding. Under these circumstances we find a crossing at a physical error rate of $1.06 \times 10^{-2}$, and pseudo-thresholds at approximately 1\%. The error suppression in the $X$ error channel is thus far greater than in the $Z$ channel, even using windowing and a lower OSD order. Note that we find that $d^{\prime}_{circ}$ is significantly lower than the computed circuit distances. Possible reasons for this are discussed subsequently.

\begin{figure}
    \centering
    \includegraphics[width=\linewidth]{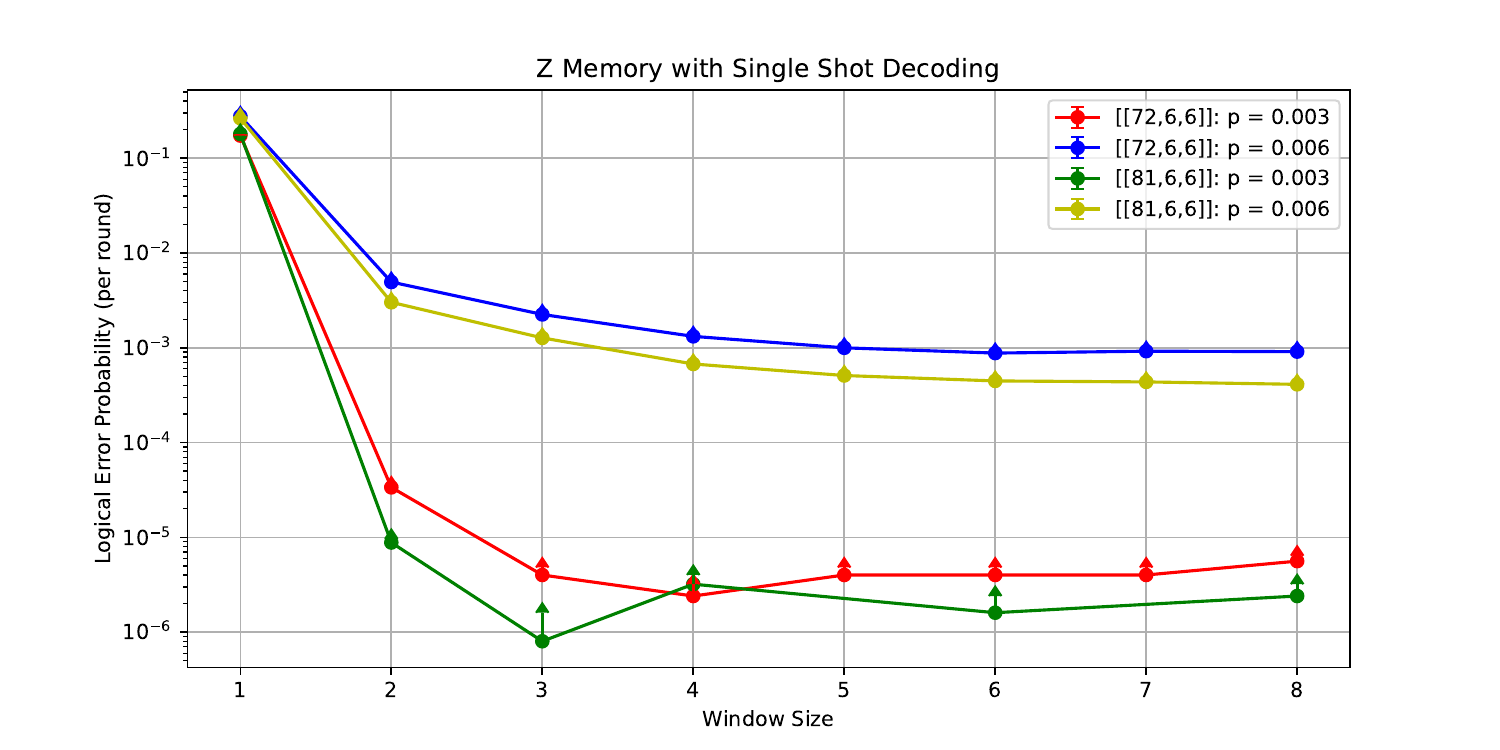}
    \caption{Logical error plateaus, for a commit size of 1 and varying window size, under a circuit level noise model for small TT codes. The $[[72,6,6]]$ and $[[81,6,6]]$ codes both exhibit single-shot decodability in the $X$ error channel, at error rates of $p=0.003$ and $p=0.006$. At $p=0.003$, the plateau begins around $w=3$, however at $p=0.006$ the plateau begins around $w=5$.}
    \label{fig:CLN_window_size_comp}
\end{figure}

We perform $Z$ memory, single-shot experiments under a $(w,1)$ strategy for various values of $w$ and show that a logical error plateau is recovered even under a circuit level noise model. This suggests that TT codes retain single-shot capabilities in realistic settings. In Fig.\ref{fig:CLN_window_size_comp} we explore sub-threshold noise regimes while varying the window size. We observe that the code performance rapidly plateaus as we increase the window size, albeit at a slower rate for larger $p$. That is, at a physical error rate of $p=6 \times 10^{-3}$ the LER plateau occurs at around $w=5$, whereas at a physical error rate of $p=3 \times 10^{-3}$ the LER plateau begins at around $w=3$. It is also interesting to note that the $[[72,6,6]]$ code has a higher logical error rate than the $[[81,6,6]]$ despite having the same $X$-distance and fewer qubits. We suspect this discrepancy is caused by structural differences in the Tanner graphs of the two codes. Of relevance could be the fact that the $[[81,6,6]]$ code has a toric layout, while the $[[72,6,6]]$ code does not. 
A thorough analysis is deferred to future work.

We note that the presence of short cycles between $Z$-checks in the Tanner graph likely degrade the performance of the decoder in these experiments. The BP algorithm assumes a tree-like Tanner graph~\cite{yedidia2003understanding}, in which there are no cycles. However, quantum codes expressly violate this condition~\cite{babar2015fifteen}. The Z-checks in TT codes are high weight and have greater overlapping support, thereby increasing the number of short-cycles. This leads to BP failing frequently, and falling back to OSD to find a correction, degrading performance and increasing runtime. This can be mitigated in two ways, firstly, using automorphism based decoders such as aut-dec \cite{koutsioumpas2025automorphism} which can reduce the impact of short-cycles in BP with low overhead. Alternatively, one can eschew BP entirely and use decoders such as Tesseract \cite{beni2025tesseract} which use graph search techniques where performance is not degraded by short cycles. However, exploration of alternative decoders is deferred to future work.

\newcolumntype{Y}{>{\centering\arraybackslash}X}
\begin{table*}
\begin{center}
        \def\arraystretch{1.3}
        \begin{tabularx}{0.9\textwidth}{|Y|Y|Y|Y|Y|}
        \hline
        \textbf{Code Family} & \textbf{Threshold} & $\mathbf{{[[n,k,d]]}}$ & \begin{tabular}{c}
        \textbf{Circuit} \\ \textbf{Distance}
        \end{tabular} & \begin{tabular}{c}
          \textbf{Pseudo-threshold} \\
             $\mathbf{(\times 10 ^{-3})}$
        \end{tabular} \\
        \hline
        \hline
        \multirow{3}{*}{3D Toric} & \multirow{3}{*}{$2.7 \times 10^{-3}$} & $[[81,3,3]]$ & $\leq 3$ & $0.8$ \\ \cline{3-5}
         &                        & $[[375,3,5]]$ & $\leq 5$ & $1.8$ \\ \cline{3-5}
         &                        & $[[1029,3,7]]$ & $\leq 7$ & $2.0$ \\ \cline{3-5}
        
        \cline{1-2}
        \cline{1-5}
        \multirow{3}{*}{Trivariate Tricycle} & \multirow{3}{*}{$2.9\times 10^{-3}$} & $[[72,6,6]]$ & $\leq 4$ & 1.2 \\ \cline{3-5}
         &                        & $[[180,12,8]]$ & $\leq 7$ & 2.3 \\ \cline{3-5}
         &                        & $[[432,12,12]]$ & $\leq 12$ & 2.5 \\ 
        \hline
        \end{tabularx}        
\caption{Threshold, circuit distance and pseudo-thresholds for 3D Toric and TT codes, in $X$ memory experiments (the basis which limits code performance). The results show that the two codes are very close in terms of performance. The circuit distances are those computed by the method discussed in Sec.\ref{sec:CLN_Sims}.}
\label{Table: Toric_vs_TT_thresholds}
\end{center}
\end{table*}

\section{Logic Gates in Trivariate Tricycle Codes}\label{sec:logic_gates}

In this Section, we explore logic gates that can be applied within TT codes via automorphisms or transversal/constant-depth circuits of $CZ$/$CCZ$ gates. We begin by noting the structure of the logical qubits in TT codes (this applies to other 3-block codes as well). We will find that the logical qubits can fall into one of three sets, with logical Pauli operators for each set mutually commuting between the three sets. 

Let $X(P,Q,S)$, for polynomials $P$, $Q$ and $S$, represent an $X$ operator with support on qubits in the L block corresponding to all monomial terms in $P$, and similarly for $Q$ in the C block and $S$ in the R block (recall that qubits in each block can be labelled by monomials in $\mathcal{M}$). We similarly define $Z(\bar{P}, \bar{Q}, \bar{S})$ as a $Z$ operator with support on $\bar{P}$ qubits in the L block, $\bar{Q}$ qubits in the C block, and $\bar{S}$ qubits in the R block. We have the following: 
\begin{lem}\label{lem:operator_commutation}
    $X(P, Q, S)$ and $Z(\bar{P}, \bar{Q}, \bar{S})$ anti-commute if and only if $ 1\in P\bar{P}^\top + Q\bar{Q}^\top + S\bar{S}^\top$.
\end{lem}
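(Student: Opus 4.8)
The plan is to compute the symplectic inner product of $X(P,Q,S)$ and $Z(\bar P,\bar Q,\bar S)$ directly, by counting the parity of the number of qubits on which both operators are supported, and to recognise that count as an evaluation of a group-algebra product. First I would recall that the qubits in each of the three blocks L, C, R are indexed by monomials $\alpha \in \mathcal{M}$, which are in bijection with group elements of $G = \mathbb{Z}_\ell \times \mathbb{Z}_m \times \mathbb{Z}_p$. An $X$ operator $X(P,Q,S)$ acts nontrivially on the L qubit labelled $\alpha$ precisely when the coefficient of $\alpha$ in the polynomial $P$ is $1$, and similarly for $Q$ in C and $S$ in R; the same holds for $Z(\bar P,\bar Q,\bar S)$ with $\bar P,\bar Q,\bar S$. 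Two such operators anti-commute if and only if the total number of qubits (summed over the three blocks) where both act nontrivially is odd.

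Next I would translate that count into algebra. In block L the number of shared qubits is $\sum_{\alpha\in\mathcal{M}} P_\alpha \bar P_\alpha$, where $P_\alpha,\bar P_\alpha\in\mathbb{F}_2$ are the coefficients, and this sum modulo $2$ is exactly the coefficient of the identity element $1\in G$ in the group-algebra product $P\bar P^\top$, since transposition corresponds to the antipode $\alpha\mapsto\alpha^{-1}$ on monomials and the coefficient of $1$ in $P\bar P^{\top}=\sum_{\alpha,\beta}P_\alpha\bar P_\beta\,\alpha\beta^{-1}$ picks out exactly the terms with $\alpha=\beta$. Adding the contributions from the three blocks, the total overlap parity is the coefficient of $1$ in $P\bar P^\top + Q\bar Q^\top + S\bar S^\top$. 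Hence the operators anti-commute if and only if $1$ appears (with coefficient $1$) in that sum, i.e. $1\in P\bar P^\top + Q\bar Q^\top + S\bar S^\top$, which is the claim.

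The only genuinely delicate point — and the one I would spell out carefully — is the identification ``coefficient of $1$ in $P\bar P^\top$ equals $\sum_\alpha P_\alpha \bar P_\alpha \bmod 2$'', i.e. confirming that the transpose of a circulant-type polynomial matrix over $\mathbb{F}_2[G]$ implements the group inversion $\alpha\mapsto\alpha^{-1}$, consistently with the earlier use of $\bar C S_\ell \bar C = S_\ell^\top$. Once that bookkeeping is fixed, everything else is a routine parity count, so I do not expect any real obstacle; the argument is essentially a restatement of the standard fact that for CSS codes built from group-algebra matrices, commutation of $X$- and $Z$-type operators is governed by the identity coefficient of the corresponding product in $\mathbb{F}_2[G]$.
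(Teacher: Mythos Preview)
Your proposal is correct and follows essentially the same approach as the paper's proof: expand each polynomial in its monomial basis, observe that the overlap count in each block is $\sum_\alpha P_\alpha \bar P_\alpha$, identify this as the coefficient of $1$ in $P\bar P^\top$ via the fact that transpose corresponds to group inversion on monomials, and sum the three blocks. The paper's version is slightly terser but makes exactly the same moves, including flagging the transpose-equals-inverse point that you also single out as the only place requiring care.
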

The proof is analogous with a similar one in Ref.~\cite{bravyi2024high}, and we provide it in Appendix~\ref{app:TT_code_tables} for completeness.

Now suppose we can find polynomials $f$, $g$, $h$ and $j$ that obey:
\begin{align}
    &Af = Bf = Cf = 0 \label{eqn:f_poly_def}\\
    &Ch + Bj = Cg + Aj = Bg + Ah = 0.\label{eqn:hjg_poly_def}
\end{align}
Note that this implies that $f\in \ker A \cap \ker B \cap \ker C$ and $[g^\top,h^\top,j^\top]^\top \in \ker[0 \; C\; B]\cap \ker[C \; 0\; A] \cap  \ker[B \; A\; 0]$. According to Lemma~\ref{lem:Num_logical_qubits}, these sets cannot both be trivial (since $k>0$).

Then the following are all logical operators, where $\alpha\in \mathcal{M}$:
\begin{align}\label{eqn:logical_block_1}
    &\bar{X}_\alpha^{(1)} = X(\alpha f, 0, 0),\; \bar{Z}_\alpha^{(1)} = Z(\alpha h^\top, \alpha g^\top, 0),\nonumber\\
    &\bar{X}_\alpha^{(2)} = X(\alpha g, \alpha h, \alpha j),\; \bar{Z}_\alpha^{(2)} = Z(0, \alpha f^\top , 0), \\
    &\bar{X}_\alpha^{(3)} = X(0, 0, \alpha f),\; \bar{Z}^{(3)}_\alpha = Z(0, \alpha j^\top, \alpha h^\top).\nonumber
\end{align}
We note that these logical Pauli operators  belong to one of three sets, which we label with superscripts (and which are akin to the primed and unprimed blocks in Ref.~\cite{bravyi2024high}). To see that these are logical operators, note that $X(P,Q,S)$ commutes with all $Z$ stabilizers whenever $CQ + BS = CP + AS = BP + AQ = 0$ (due to Lemma~\ref{lem:operator_commutation}). Hence $X^{(1)}_\alpha$ and $X^{(3)}_\alpha$ clearly commute with all $Z$ stabilizers, because of Eqn.~\ref{eqn:f_poly_def}, while $X^{(2)}_\alpha$ does so, from Eqn.~\ref{eqn:hjg_poly_def}. $Z(P^\top, Q^\top, S^\top)$ commutes with all $X$ stabilizers whenever $AP + BQ + CS = 0$. Hence, it can be seen that $Z^{(i)}_\alpha$ commutes with all $X$ stabilizers, for $i=1,2,3$. 

It can be seen that all $\bar{X}^{(i)}_\alpha$ commute with all $\bar{Z}^{(j\neq i)}_\beta$. For example, $\bar{X}^{(2)}_\alpha$ and $\bar{Z}^{(1)}_\beta$ commute because $\alpha \beta^\top gh + \alpha \beta^\top hg + \alpha j 0^\top = 0 $. 
Note that the particular choice of logical groupings into the three sets is not unique. And indeed, the logical operators $\bar{X}_\alpha^{(i)}$ and $\bar{Z}_\beta^{(j)}$ will only be able to anti-commute if $fh\neq 0$. This is because $\bar{X}^{(i)}_\alpha$ and $\bar{Z}^{(i)}_\beta$ form an anti-commuting pair whenever $\alpha\beta^\top \in fh$. If $fh=0$, we will not be able to find any conjugate logical operators within the sets chosen above. In such cases, we can find two more choices for the logical sets: one which assumes $fg\neq 0$ and another which assumes $fj\neq 0$. If $fg\neq 0$, we have:
\begin{align}\label{eqn:logical_block_2}
    &\widetilde{X}_\alpha^{(1)} = X(0, \alpha f, 0),\; \widetilde{Z}_\alpha^{(1)} = Z(\alpha h^\top, \alpha g^\top, 0),\nonumber\\
    &\widetilde{X}_\alpha^{(2)} = X(\alpha g, \alpha h, \alpha j),\; \widetilde{Z}_\alpha^{(2)} = Z(\alpha f^\top , 0, 0),\\
    &\widetilde{X}_\alpha^{(3)} = X(0, 0, \alpha f),\; \widetilde{Z}^{(3)}_\alpha = Z(\alpha j^\top, 0, \alpha g^\top),\nonumber
\end{align}
while if $fj\neq 0$ we have:
\begin{align}\label{eqn:logical_block_3}
    &\widehat{X}_\alpha^{(1)} = X(\alpha f, 0, 0),\; \widehat{Z}_\alpha^{(1)} = Z(\alpha j^\top, 0, \alpha g^\top),\nonumber\\
    &\widehat{X}_\alpha^{(2)} = X(\alpha g, \alpha h, \alpha j),\; \widehat{Z}_\alpha^{(2)} = Z(0 , 0, \alpha f^\top),\\
    &\widehat{X}_\alpha^{(3)} = X(0, \alpha f, 0),\; \widehat{Z}^{(3)}_\alpha = Z(0, \alpha j^\top, \alpha h^\top).\nonumber
\end{align}

Not all codes we find admit a complete decomposition into one of the choices of logical sets given above. All codes we investigate have some choice of $f$ and $[g^\top,h^\top,j^\top]^\top$ that obey Equations~\ref{eqn:f_poly_def} and \ref{eqn:hjg_poly_def} respectively, but it may be the case that none of the corresponding logical sets (Equations~\ref{eqn:logical_block_1}, \ref{eqn:logical_block_2} or \ref{eqn:logical_block_3}) cover all logical qubits in the code. In Table~\ref{tab:codes_2} of  Appendix~\ref{app:TT_code_tables}, we report whether each code found admits a decomposition of a full set of $k$ logicals into one of the logical sets above (Equations~\ref{eqn:logical_block_1}, \ref{eqn:logical_block_2} or \ref{eqn:logical_block_3}).

\subsection{Shift Automorphisms}

Owing to the translational invariance of the TT codes, any shift automorphism that maps L, C, and R qubits $\alpha \mapsto \beta\alpha$ is a valid logical operation. This is a ``shift" automorphism because it shifts the entire lattice (Fig.~\ref{fig:checks}), mapping each qubit $\alpha$ in each block to a new location, $\beta \alpha$. Consider an $X$-check corresponding to monomial $\alpha\in \mathcal{M}$. This can be written as $X(A\alpha, B\alpha, C\alpha)$, since it acts non-trivially on those data qubits in the L block connected to it by terms in $A$, and similarly for data qubits in the C and R blocks, with polynomials $B$ and $C$, respectively. The shift automorphism transforms this $X$ stabilizer generator as:
\begin{align}
    X(A\alpha,B\alpha,C\alpha) \mapsto X(A\beta\alpha, B\beta\alpha, C\beta\alpha).
\end{align}
Similarly, observing the rows of $H_Z$, we may see how $Z_a$-, $Z_b$-, and $Z_c$-stabilizer generators are written. These are transformed by the shift automorphism as:
\begin{align}
    Z_a(0,C^\top\alpha,B^\top\alpha) &\mapsto Z_a(0, C^\top\beta\alpha,B^\top\beta\alpha),\nonumber\\
    Z_b(C^\top\alpha,0,A^\top\alpha) &\mapsto Z_b(C^\top\beta\alpha,0,A^\top\beta\alpha),\\
    Z_c(B^\top\alpha, A^\top\alpha,0) &\mapsto Z_c(B^\top\beta\alpha, A^\top\beta\alpha, 0).\nonumber
\end{align}
Notice that the automorphism simply permutes checks within a class ($X$, $Z_a$, $Z_b$ or $Z_c$), and hence preserves the code space.

If we can find a decomposition of the logical operators into one of the three logical sets introduced above (Equations~\ref{eqn:logical_block_1}, \ref{eqn:logical_block_2} or \ref{eqn:logical_block_3}), then the shift automorphisms can be understood as transforming between the logical operators within a set. For example, the $\alpha\mapsto \beta\alpha$ automorphism maps $\bar{X}_\alpha^{(i)} \mapsto \bar{X}_{\beta\alpha}^{(i)}$ and $\bar{Z}_\alpha^{(i)} \mapsto \bar{Z}_{\beta\alpha}^{(i)}$ for $i=1,2,3$. 
However, these automorphisms cannot map between operators in different sets.

To implement such shift automorphisms via a low-depth circuit, we perform SWAP gates between data qubits and $Z$-measure qubits in the following pattern:
\begin{align}
    &q(\text{L},\alpha) \leftrightarrow q_Z(a, \beta_1\alpha) \leftrightarrow q(\text{L},\beta_2\beta_1\alpha)\\
    &q(\text{C},\alpha) \leftrightarrow q_Z(b,\beta_1\alpha) \leftrightarrow q(\text{C},\beta_2\beta_1\alpha)\\
    &q(\text{R},\alpha) \leftrightarrow q_Z(c,\beta_1\alpha) \leftrightarrow q(\text{R},\beta_2\beta_1\alpha),
\end{align}
where we label data qubits $q(\text{B},\alpha)$ for block label B and monomial label $\alpha$, and we label $Z$-measure qubits $q_Z(b,\alpha)$, where $b$ is the $Z$-measure block label. 
This series of SWAP gates permutes $Z$-measure qubits within each block ($a$, $b$ and $c$) and performs a shift of the data qubits $\alpha\mapsto \beta_2\beta_1\alpha$ within each of the data qubit blocks (L, C and R).

If we wish to avoid adding additional connections between qubits, we must alter this pattern of SWAP gates slightly. Firstly, note that, from the necessary structure of the syndrome measurement circuits, we can apply gates between $q(\text{L},\alpha)$ and $q_X(A_i^\top \alpha)$ (for any $i$) and similarly between $q(\text{C},\alpha)$ and $q_X(B_i^\top \alpha)$ and $q(\text{R},\alpha)$ and $q_X(C_i^\top \alpha)$, where we once again label $X$-measure qubits with only a monomial label. We may also perform gates between $q(\text{L},\alpha)$ and $q_Z(b,C_i\alpha)$ or $q_Z(c,B_i\alpha)$ and so on. Suppose we wish to perform a shift automorphism of the data qubits given by $\alpha\mapsto A_j A_i^\top \alpha$ (for all data qubit blocks). We can perform the following depth-2 circuit of SWAP gates using existing connections to implement this:
\begin{align}
    &q(\text{L},\alpha) \leftrightarrow q_X(A_i^\top\alpha) \leftrightarrow q(\text{L},A_jA_i^\top \alpha)\\
    &q(\text{C},\alpha) \leftrightarrow q_Z(c,A_j \alpha) \leftrightarrow q(\text{C},A_i^\top A_j \alpha)\\
    &q(\text{R},\alpha) \leftrightarrow q_Z(b, A_j\alpha) \leftrightarrow q(\text{R},A_i^\top A_j \alpha).
\end{align}
One can easily find a similar, depth-2 pattern of SWAP gates for shifts by $B_jB_i^\top$ and $C_jC_i^\top$. If we assume the Tanner graph is connected, these shift automorphisms generate the entire group of shifts $\alpha\mapsto \beta$.

\subsection{Logical $CZ$ gates between code blocks}

We now focus on logical gates that can be implemented \textit{between} identical code blocks. Recent qLDPC codes have been found that allow for certain Clifford gates to be performed within or between code blocks \cite{malcolm2025computingefficientlyqldpccodes}. For BB codes, Ref.~\cite{breuckmann2024cupsgatesicohomology} explains methods for verifying that a code possesses $CZ$ gates between code blocks via the cup product (see below for an introduction), while in Ref. \cite{eberhardt2024logicaloperatorsfoldtransversalgates}, the authors present fold-transversal gates that exist in BB codes. In this section, we introduce an approach to finding $CZ$ gates in multi-block codes, including BB and TT codes, that is distinct from these previous approaches.

Consider the following depth-1 circuit of $CZ$ gates, where $\beta\in\mathcal{M}$ is arbitrary:
\begin{align}
\begin{split}
    \overline{CZ}_{\text{CR},\beta} = \prod_{\alpha\in\mathcal{M}} CZ[ q(\text{C},\alpha) ,q(\text{R},\beta \alpha^{\top})] \\
    CZ[q(\text{R},\alpha), q(\text{C},\beta \alpha^\top)],
\end{split}
\end{align}
where the first (second) argument of $CZ[\cdot,\cdot]$ refers to a data qubit in the first (second) code block. Similarly to above, $q(\text{C},\alpha)$ refers to the C data qubit labelled by monomial $\alpha$. Let us consider how an $X$ stabilizer from the first code block is transformed by this circuit:
\begin{align}
\begin{split}
    \lbrace X(\alpha A, \alpha B, \alpha C)\rbrace_1 \mapsto \lbrace X(\alpha A, \alpha B, \alpha C)\rbrace_1 \\
    \lbrace Z(0, \beta \alpha^\top C^\top, \beta\alpha^\top B^\top)\rbrace_2,
\end{split}
\end{align}
where we use curly brackets and a subscript to indicate the code block on which the Pauli operator acts. The $Z$-type operator is a stabilizer of the second code block: it corresponds to the $\beta\alpha^\top$-labelled $Z_a$-check. The $X$ stabilizers of the second code block are similarly transformed. Hence, the above transversal gate preserves the code space. We have two other sets of transversal gates formed in the same way:
\begin{align}
    \overline{CZ}_{\text{LR},\beta} &= \prod_\alpha CZ[ q(\text{L},\alpha) ,q(\text{R},\beta\alpha^{\top})] \nonumber \\
    & \qquad \qquad CZ[q(\text{R},\alpha), q(\text{L},\beta\alpha^\top)]\\
    \overline{CZ}_{\text{LC},\beta} &= \prod_\alpha CZ[ q(\text{L},\alpha) ,q(\text{C},\beta\alpha^{\top})] \nonumber\\
    &\qquad \qquad CZ[q(\text{C},\alpha), q(\text{L},\beta\alpha^\top)].
\end{align}
We note that gates of the form $\overline{CZ}_{\text{LR},\beta}$ are similarly transversal logical gates of BB codes.

If we can find a decomposition of the logicals into three commuting sets (Equations~\ref{eqn:logical_block_1}--\ref{eqn:logical_block_3}), then the actions of two of these gates on the logical operators becomes particularly simple. Consider, for example, the $X$-logicals from Equation~\ref{eqn:logical_block_1} and the gates $\overline{CZ}_{\text{LC},\beta}$ and $\overline{CZ}_{\text{CR},\beta}$:
\begin{align}
    \overline{CZ}_{\text{LC},\beta}:\lbrace \bar{X}^{(1)}_\alpha\rbrace_1 &\mapsto \lbrace \bar{X}^{(1)}_\alpha\rbrace_1 \lbrace \bar{Z}^{(2)}_{\beta\alpha^\top}\rbrace_2\\
    \lbrace \bar{X}^{(2)}_\alpha\rbrace_1 &\mapsto \lbrace \bar{X}^{(2)}_\alpha\rbrace_1 \lbrace \bar{Z}^{(1)}_{\beta\alpha^\top}\rbrace_2\nonumber\\
    \lbrace \bar{X}^{(3)}_\alpha\rbrace_1 &\mapsto \lbrace \bar{X}^{(3)}_\alpha\rbrace_1;\nonumber\\
    \overline{CZ}_{\text{CR},\beta}: \lbrace \bar{X}_\alpha^{(3)}\rbrace_1 &\mapsto \lbrace \bar{X}_\alpha^{(3)}\rbrace_1 \lbrace \bar{Z}_{\beta \alpha^\top}^{(2)}\rbrace_2\nonumber \\
    \lbrace \bar{X}_\alpha^{(2)}\rbrace_1 &\mapsto \lbrace \bar{X}_\alpha^{(2)}\rbrace_1 \lbrace \bar{Z}_{\beta \alpha^\top}^{(3)}\rbrace_2\\
    \lbrace \bar{X}^{(1)}_\alpha\rbrace_1 &\mapsto \lbrace \bar{X}^{(1)}_\alpha\rbrace_1.\nonumber
\end{align}
The transformation of the $X$-logicals from block two is equivalent, owing to the symmetry of the $CZ$ gates. Hence, these gates act as logical $\overline{CZ}$ gates on certain logical qubits. For example, $\overline{CZ}_{\text{LC},\beta}$ is a logical $\overline{CZ}$ between qubit $\alpha$ in the first set and qubit $\beta\alpha^\top$ in the second set (for all $\alpha$) in both blocks of code, while acting as the identity on all qubits in the third set. The third type of $\overline{CZ}$ gate does not have such a simple representation as the above, generically performing some circuit of logical $CZ$ gates between all logical qubits. We note that TT codes also possess transversal CNOT gates owing to the fact that they are CSS codes. We also note that we numerically observe these $CZ$ gates to have non-trivial logical action even in the case in which the logical qubits cannot be fully decomposed into one of the sets from Equations~\ref{eqn:logical_block_1}--\ref{eqn:logical_block_3}.

\subsection{Constant-depth $CCZ$ circuits}\label{sec:CCZ_gates}

\begin{table*}
\begin{center}
\def\arraystretch{1.5}
\[\begin{array}{|c|c|c|c|c|c|}
\hline
    \begin{array}{c}\textbf{Code} \\ \textbf{parameters}\end{array} & \mathbf{\frac{n(d)}{n_{\text{TC}}(d)}} & \mathbf{\ell, m, p} & \mathbf{d_Z, d_X} & \textbf{Polynomials} & \begin{array}{c}
         \textbf{No. Logical}\\
         \textbf{CCZs} 
    \end{array}\\
    \hline \hline
    [[36, 3, 3]] & 4/9& 3, 2, 2 & 3, 8 & \begin{array}{c}
         A = 1 + xyz, \;
         B = 1 + x^2z,\;
         C = 1 + x^2y
    \end{array} & 6\\
    \hline
    [[48, 3, 4]] & 1/4 & 4, 2, 2 & 4, 8 & \begin{array}{c}
         A = 1 + x, \;
         B = 1 + xz, \;
         C = 1 + xy
    \end{array} & 6\\
    \hline
    [[54, 3, 4]] & 9/32 & 3, 3, 2 & 4, 9 & \begin{array}{c}
         A = 1 + yz,\;
         B = 1 + xz,\;
         C = 1 + xyz
    \end{array} & 6\\
    \hline
    [[60, 3, 4]] & 5/16 & 5, 2, 2 & 4, 12 & \begin{array}{c}
         A = 1 + xz,\;
         B = 1 + xy,\;
         C = 1 + xyz
    \end{array} & 6\\
    \hline
    [[90, 3, 5]] & 6/25 & 5, 3, 2 & 5, 15 & \begin{array}{c}
         A = 1 + x,\;
         B = 1 + xy,\;
         C = 1 + x^2y^2z
    \end{array} & 6\\
    \hline
\end{array}\]
\end{center}
\caption{Examples of $(2,2,2)$ TT codes with $CCZ$ gates that have non-trivial logical action. Second column: ratio of $n$ for the TT code with distance $d$ to that of the 3D toric code with equivalent distance. Final column: we enumerate the number of logical $\overline{CCZ}$ gates implemented by the depth-2 $CCZ$ circuit from Fig.~\ref{fig:CCZ_cube}. Other columns are equivalent to those from Table~\ref{tab:mycodes}.\label{tab:222_CCZ_codes}}
\end{table*}

\begin{figure}
    \centering
    \includegraphics[width=\linewidth]{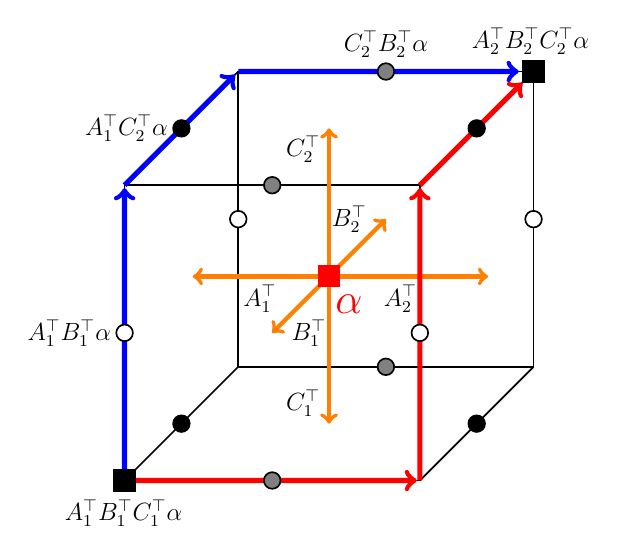}
    \caption{Prescription for applying $CCZ$ gates within each cube (labelled by meta-check $\alpha\in\mathcal{M}$) in order to implement a logical non-Clifford gate in a TT code with polynomials $A = A_1 + A_2$, $B= B_1 + B_2$ and $C = C_1 + C_2$. Two example paths that correspond to $CCZ$ gates are shown in blue and red.}
    \label{fig:CCZ_cube}
\end{figure}


We finally investigate the possibility of fault-tolerant non-Clifford gates in TT codes. Fault-tolerant logical CCZ gates have been implemented on certain QLDPC codes using a construction derived from the cohomological cup product ~\cite{zhu2024noncliffordparallelizablefaulttolerantlogical,Chen2023,Wang2024,breuckmann2024cupsgatesicohomology}. This technique has recently been applied to hypergraph product codes and homological products of LDPC codes, showing favourable asymptotic scaling \cite{zhu2025topological, zhu2025transversal}.

We show below that any TT code whose checks are based on weight-2 polynomials, $A = A_1 + A_2$, $B=B_1+B_2$, and $C=C_1+C_2$, admits a logical $\overline{CCZ}$. Crucially, this gate is implementable via a depth-2 physical CCZ circuit, similar to what is used for the 3D Toric code on a cubic lattice. We call such a code a (2,2,2) TT code, with the numbers referring to the weights of the polynomials A, B and C, respectively; we detail this construction in the next section.

To explain the circuit that implements the logical $\overline{CCZ}$ gate, consider a meta-check labelled by monomial $\alpha$ (recall the meta-check labelling introduced in Section~\ref{sec:3D_layout}). In the 3D toric code and (2,2,2) TT codes, this corresponds to a cube in the lattice (note this lattice may have highly irregular boundary conditions). A logical gate can be achieved by a constant-depth circuit of $CCZ$ gates illustrated in Fig.~\ref{fig:CCZ_cube} (for all cubes in the lattice).
For the cube shown, we first draw an oriented path from $X$-check $A_1^\top B_1^\top C_1^\top \alpha$ to $X$-check $A_2^\top B_2^\top C_2^\top \alpha$ (two examples of such paths are shown in the figure). A CCZ gate is applied for each arrow between the three qubits it intersects, with the first such qubit being in the first code block, the second being in the second block and the third being in the third. For example, the blue path in Fig.~\ref{fig:CCZ_cube} corresponds to the $CCZ$ gate:
\begin{align}
    CCZ[q(\text{R}, A_1^\top B_1^\top \alpha), q(\text{C}, A_1^\top C_2^\top \alpha), q(\text{L}, C_2^\top B_2^\top \alpha)].
\end{align}

In \Cref{tab:222_CCZ_codes}, we enumerate (2,2,2) codes found by performing exhaustive searches over low dimension sizes, $\ell$, $m$ and $p$. Each of these codes admits a logical $\overline{CCZ}$ gate. All codes found have $k=3$, which can be understood from the fact that they are equivalent to 3DTCs on a lattice with permuted boundary conditions (see Ref.~\cite{voss2025multivariatebicyclecodes} for an analysis of BB codes on 2D lattices with similarly permuted boundaries). In Table~\ref{tab:222_CCZ_codes}, we compare the qubit overhead of the corresponding code found to that of the 3DTC with equivalent distance. We also numerically verify that, for all codes found, the $CCZ$ circuit from Fig.~\ref{fig:CCZ_cube} implements a logical transformation of the same form as for the 3DTC. Let us label the three logical qubits of a (2,2,2) TT code $a$, $b$ and $c$ respectively. In the following, we will let $\overline{CCZ}(a,b,c)$ refer to a logical $CCZ$ gate acting on logical qubit $a$ from the first code block, $b$ from the second, and $c$ from the third. We find that the logical action of the $CCZ$ circuit is then given by:
\begin{align}
\begin{split}
    \overline{CCZ}(a,b,c)\overline{CCZ}(a,c,b)\overline{CCZ}(b,a,c)\\
    \overline{CCZ}(b,c,a)\overline{CCZ}(c,a,b)\overline{CCZ}(c,b,a).
\end{split}
\end{align}
That is, for all codes, the $CCZ$ circuit implements 6 logical $\overline{CCZ}$ gates. We compare this with other codes found in the subsequent section.

We note that the $X$ and $Z$ distances for the codes found are more balanced than they are for the 3DTC, with $d_X < d_Z^2$. The codes presented in Table~\ref{tab:222_CCZ_codes} were examples of those with the best parameters found after performing an exhaustive search over certain small dimension sizes. We numerically verified that the $CCZ$ circuit preserves the stabilizer groups of the codes found (see Appendix~\ref{app:CCZ_gates}).

\subsubsection{TT Codes with Cup Products}

The existence of logical $\overline{CCZ}$ gates in (2,2,2) TT codes can be understood via the formalism of the cup product (see Ref.~\cite{breuckmann2024cupsgatesicohomology}). Whenever one can define a cup product on a TT code, one can find a constant-depth circuit of $CCZ$ gates that preserves the stabilizer~\cite{breuckmann2024cupsgatesicohomology}. Meanwhile, for TT codes, we can define a cup product whenever the polynomials obey certain conditions which we will now outline. We can divide the terms in each polynomial $A$, $B$ and $C$ into three sets: $A = A_\text{in} + A_\text{out} + A_\text{free}$ and similarly for $B$ and $C$. Let us focus on the $A$ polynomial for now. 
If we let $|f|$ denote the number of terms in polynomial $f$, and $|f\cap g|$ denote the number of terms that are shared by polynomials $f$ and $g$, then a cup product exists when the following conditions hold (see Appendix~\ref{app:CCZ_gates}):
\begin{align}
    &|A_\text{in} (\alpha_1)| + |A_\text{out} (\alpha_1)| = 0 \label{eqn:cup_cond_1}\\
    &|A_\text{in} (\alpha_1) \cap A_\text{in} (\alpha_2)| = 0 \label{eqn:cup_cond_in}\\
    &|A_\text{out} (\alpha_1)\cap A_\text{out}(\alpha_2)| + |A_\text{free} (\alpha_1)\cap A_\text{out}(\alpha_2)| = 0 \label{eqn:cup_cond_3}\\
    &|A_\text{free} (\alpha_1)\cap A_\text{in}(\alpha_2)| = 0 \label{eqn:cup_cond_2}\\
    &|A (\alpha_1)\cap A_\text{in}(\alpha_2)\cap A_\text{in}(\alpha_3)| + \nonumber \\
    &\qquad \qquad |A_\text{out} (\alpha_1)\cap A(\alpha_2)\cap A_\text{in}(\alpha_3)| = 0 \label{eqn:cup_cond_4}
\end{align}
for $\alpha_1$, $\alpha_2$ and $\alpha_3$ all distinct monomials, and where all addition is done modulo 2. Note these conditions need to be satisfied for $A$, $B$ and $C$.

These conditions can be trivially satisfied by setting $|A|=|B|=|C| =2$, with $A = A_\text{in} + A_\text{out}$, etc. This results in (2,2,2) TT codes, which we discussed above. We also have the following more non-trivial example of polynomials that allow us to define a cup product: $A_\text{in} = A_1 + A_2$, $A_\text{out} = A_3 + A_4$ and $A_\text{free} = 0$, where $A_1^\top A_2$ and $A_3^\top A_4$ are equal and self-inverse. This is an example of a more general type of polynomial that satisfies the conditions from Equations~\ref{eqn:cup_cond_1} -- \ref{eqn:cup_cond_4}. We summarise this with the following:
\begin{lem}\label{lem:cup_product}
    A TT code with all three polynomials ($A$, $B$, and $C$) either weight-2 or of the form $P = \sum_{k=1}^{\text{ord}(g)}g^kf$, where $g\in\mathcal{M}$ is such that $\text{ord}(g)$ is even, and $f$ is an arbitrary (non-zero) polynomial, admits a cup product.
\end{lem}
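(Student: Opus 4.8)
The plan is to verify that the polynomial families described satisfy the cup-product conditions in Equations~\ref{eqn:cup_cond_1}--\ref{eqn:cup_cond_4}, since by the earlier-cited result of Ref.~\cite{breuckmann2024cupsgatesicohomology} this is sufficient to guarantee a constant-depth $CCZ$ circuit preserving the stabilizer. There are really two cases. The weight-2 case is immediate: writing $A = A_\text{in} + A_\text{out}$ with $A_\text{free}=0$, every term involving $A_\text{free}$ vanishes, Equation~\ref{eqn:cup_cond_1} holds because each of $|A_\text{in}(\alpha_1)|$ and $|A_\text{out}(\alpha_1)|$ is $0$ or $1$ and for distinct $\alpha_1$ it cannot happen that either one-term piece contains $\alpha_1$ in a way contributing both --- actually the cleanest phrasing is that $A_\text{in}$ and $A_\text{out}$ are each monomials, so $A_\text{in}(\alpha_1)$ meaning ``the coefficient of $\alpha_1$ in $A_\text{in}$'' is supported on a single monomial, and each of Equations~\ref{eqn:cup_cond_1}--\ref{eqn:cup_cond_4} reduces to a statement about a single pair or triple of monomials that is automatically consistent modulo 2. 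I would simply remark that this reproduces the (2,2,2) construction already discussed.

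The substantive case is $P = \sum_{k=1}^{\text{ord}(g)} g^k f$ for $g\in\mathcal{M}$ of even order and $f$ an arbitrary nonzero polynomial. First I would fix the decomposition: let $N = \text{ord}(g)$ (even), set $P_\text{in} = \sum_{k \text{ odd}} g^k f$ and $P_\text{out} = \sum_{k \text{ even}} g^k f$ (equivalently, split by parity of $k$), and $P_\text{free}=0$. The key structural facts to extract are: (i) $P_\text{out} = g\, P_\text{in}$ and $P_\text{in} = g\, P_\text{out}$ as polynomials in $\mathbb{F}_2[G]$ (using $g^{N}=1$, $N$ even), so the two halves are translates of one another by the invertible monomial $g$; (ii) each of $P_\text{in}$ and $P_\text{out}$ is a ``full orbit sum over the subgroup $\langle g^2\rangle$, shifted by $f$'', which is what makes the shared-term counts collapse. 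With $P_\text{free}=0$, Equations~\ref{eqn:cup_cond_3} and \ref{eqn:cup_cond_2} lose their $A_\text{free}$ terms, so \ref{eqn:cup_cond_3} becomes $|P_\text{out}(\alpha_1)\cap P_\text{out}(\alpha_2)| = 0$ and \ref{eqn:cup_cond_2} is vacuous, and \ref{eqn:cup_cond_4} keeps only its two surviving terms.

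The heart of the argument is then a counting lemma: for the specific $P_\text{in}, P_\text{out}$ above and distinct monomials $\alpha_1,\alpha_2$ (and $\alpha_3$), compute $|P_\text{in}(\alpha_1)\cap P_\text{in}(\alpha_2)|$, $|P_\text{out}(\alpha_1)\cap P_\text{out}(\alpha_2)|$ and the triple intersections appearing in Equation~\ref{eqn:cup_cond_4}, all modulo 2. Here $P_\text{in}(\alpha)$ should be read as the set of monomials appearing in $\alpha\cdot P_\text{in}$ (the ``$\alpha$-shifted'' support), so $|P_\text{in}(\alpha_1)\cap P_\text{in}(\alpha_2)|$ counts monomials $\mu$ with $\mu = \alpha_1 g^{2i} f_a = \alpha_2 g^{2j} f_b$ for terms $f_a, f_b$ of $f$; I would parametrize these solutions, show the solution set carries a free involution (for instance, shifting the pairing index within the even-order cyclic orbit, or swapping the roles encoded by $P_\text{out}=gP_\text{in}$), hence has even cardinality, giving $0 \bmod 2$ in each case. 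The main obstacle I expect is bookkeeping this involution cleanly and uniformly: one must handle possible coincidences among terms of $f$ (if $f$ itself has repeated structure under $g$) and make sure the pairing is genuinely fixed-point-free; a safe route is to first reduce, via Lemma~\ref{lem:code_equivalence}, to $f$ having no common monomial factor, and then argue that because $\langle g\rangle$ acts freely on $G$ (as $g$ has order $N$ in the abelian group), the multiset $\{g^k f : k\}$ has a transparent orbit structure. Once the counting lemma is in hand, substituting into Equations~\ref{eqn:cup_cond_1}--\ref{eqn:cup_cond_4} and invoking the cup-product criterion for all three polynomials $A$, $B$, $C$ completes the proof.
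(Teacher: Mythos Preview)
Your parity-of-$k$ decomposition is the wrong one, and the involution you sketch does not exist in general. Take the smallest nontrivial case $\text{ord}(g)=2$ and $f=1+y$ in, say, $G=\mathbb{Z}_2\times\mathbb{Z}_3$ with $g$ the order-$2$ generator. Your choice gives $P_\text{in}=gf=\{g,gy\}$ and $P_\text{out}=f$. With $\alpha_1=1$, $\alpha_2=y$ one has $P_\text{in}\alpha_1\cap P_\text{in}\alpha_2=\{gy\}$, of odd size, so Equation~\ref{eqn:cup_cond_in} fails. The issue is structural: $P_\text{in}$ is only invariant under $\langle g^2\rangle$, whose orbits have size $\text{ord}(g)/2$, which need not be even; and multiplying by $g$ sends $P_\text{in}\alpha\cap P_\text{in}\beta$ to $P_\text{out}\alpha\cap P_\text{out}\beta$, not to itself, so it is not an involution on the set you need. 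Your counting lemma therefore cannot be proved for this decomposition.

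The paper's proof uses a different split: write $G_g=\sum_{k=1}^{\text{ord}(g)}g^k$ so $P=G_g f$, partition the \emph{terms of $f$} as $f=f_\text{in}+f_\text{out}+f_\text{free}$ with $|f_\text{in}|=|f_\text{out}|=\lfloor|f|/2\rfloor$, and set $P_\bullet=G_g f_\bullet$. The point is that every $P_\bullet$ is now a union of full $\langle g\rangle$-orbits, so any intersection $P_\bullet\alpha\cap P_\circ\beta$ is $\langle g\rangle$-invariant and hence has size a multiple of $\text{ord}(g)$, which is even. That single observation dispatches Equations~\ref{eqn:cup_cond_in}--\ref{eqn:cup_cond_4} uniformly (and the evenness of $\text{ord}(g)$ also gives $|P_\text{in}|\equiv|P_\text{out}|\pmod 2$). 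So the fix is not to repair your involution but to change the pre-orientation: keep the full orbit sum intact and split $f$ instead.
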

\noindent
We prove this Lemma in Appendix~\ref{app:CCZ_gates}.

We focus on polynomials of the form $(1+g)(1+\alpha)$ with $g,\alpha\in\mathcal{M}$ and $g^2=1$, which is the lowest-weight non-trivial polynomial of the form from Lemma~\ref{lem:cup_product}. We choose $A$ to be of this form, while $B$ and $C$ are chosen to be weight-2 (in Appendix~\ref{app:CCZ_gates}, we include some examples of codes with one or both of $B$ and $C$ also chosen to be weight-4). We refer to these as (4,2,2) codes. The resulting codes have weight-8 $X$-checks and weight-6 or weight-4 $Z$-checks. The $CCZ$ circuit that preserves the stabilizer for such codes is given in Appendix~\ref{app:CCZ_gates}.

For reasons not fully understood, for all codes found, either the $CCZ$ gate resulting from the cup product construction has trivial logical action, or the code distance is 2. Note that in the former case, the code distance can be larger than 2. This is the case for all (4,2,2), (4,4,2) and (4,4,4) codes found, along with all (6,2,2) codes found (wherein we set $A$ to have weight 6). However, the full set of minimum-weight logicals for the codes found often includes several that are higher weight than 2. We detail all $Z$-logical weights, along with the $X$-distance of the codes, in \Cref{tab:422_codes}. For all codes, we test the logical action of the $CCZ$ circuit, which is always a collection of $\overline{CCZ}$ logical gates between three code blocks. For any such gate, $\overline{CCZ}(a,b,c)$, the distance-2 logical qubit(s) features in one of the code blocks. Hence, gauge fixing this logical qubit results in a trivial logical action. However, the distance-2 logical qubits were not the only ones affected by the $\overline{CCZ}$ gates. In \Cref{tab:422_codes} we report the number of logical $\overline{CCZ}$ gates applied by the non-Clifford circuit, in the logical basis given by the minimum-weight $Z$-logicals from the third column of the table.

We finish this section by commenting that, while the codes in \Cref{tab:422_codes} are only error-detecting (owing to their distance of 2), by gauge-fixing the distance-2 logical qubit, they can result in \textit{error-correcting} codes that have an \textit{additional} $CZ$ gate, on top of those found in the previous section. We note that the logical $\overline{CCZ}$ action for the codes found can involve the distance-2 logical qubit(s) along with two other logical qubits of higher distance. We can form an error-correcting code with an additional $CZ$ gate as outlined below. For concreteness, we focus on the $[[36, 6, 2]]$ code from \Cref{tab:422_codes}. Observe that the $CCZ$ circuit implements (among others) the following logical gates (for one choice of basis with $Z$-logical weights given in \Cref{tab:422_codes}): $\overline{CCZ}(5,1,3)\overline{CCZ}(5,3,1)\overline{CCZ}(5,2,4)\overline{CCZ}(5,4,2)$. In this circuit, logical qubit $5$ has $Z$ logical with minimum-weight 2, while all others have $Z$ logicals with minimum-weight 3. All other logical gates involve qubit 5 from code blocks 2 and/or 3. The steps to obtain a series of $\overline{CZ}$ gates on a gauge-fixed code are:
\begin{enumerate}
    \item Gauge fix the second and third blocks of $[[36, 6, 2]]$ code so that the distance-2 qubit (qubit 5) is in the $\ket{0}$ state. This results in a $[[36, 5, 3]]$ code. Note that this gauge fixing involves measuring $\bar{Z}_5$, which commutes with the $CCZ$ circuit.
    \item Consider the gate $\mathcal{C}_{CCZ} \bar{X}_5\otimes \bar{\mathds{1}} \otimes \bar{\mathds{1}} \mathcal{C}_{CCZ}$, where $\mathcal{C}_{CCZ}$ is the $CCZ$ circuit obtained from the cup product (given in Appendix~\ref{app:ccz_circuit}), and $\bar{X}_5\otimes \bar{\mathds{1}} \otimes \bar{\mathds{1}}$ is the logical $X$ operator for qubit 5 in the first code block. The logical action of this gate between code blocks 2 and 3 is: $\overline{CZ}(1,3)\overline{CZ}(3,1)\overline{CZ}(2,4)\overline{CZ}(4,2)$.
    \item Delete the first code block and truncate circuit $\mathcal{C}_{CCZ} \bar{X}_5\otimes \bar{\mathds{1}} \otimes \bar{\mathds{1}} \mathcal{C}_{CCZ}$ to a circuit of $CZ$s acting on code blocks 2 and 3. 
\end{enumerate}
In this way, one can obtain a larger number of Clifford gates in gauge-fixed versions of the codes from \Cref{tab:422_codes}, which will have higher distances.

\begin{table*}
    \begin{center}
    \def\arraystretch{1.5}
    \[\begin{array}{|c|c|c|c|c|c|}
    \hline
        \textbf{Code parameters} & \ell, m, p & \begin{array}{c}\textbf{Z-logical} \\ \textbf{weights}\end{array} & \mathbf{d_X} & \textbf{Polynomials} & \begin{array}{c}
             \textbf{No. Logical}\\
             \textbf{CCZs} 
        \end{array}\\
        \hline \hline
        [[36, 6, 2]] & 3, 2, 2 & 3,3,3,3,3,2 & 3 
        & \begin{array}{c}
            A = (1+z)(1+x)\\
            B = 1 + x\\
            C= 1+xyz
        \end{array} & 16\\[2.0em]
        \hline
        [[48, 6, 2]] & 4, 2, 2 & 5,5,4,4,4,2 & 4 
        & \begin{array}{c} 
        A = (1+x^2yz)(1+xz)\\
        B = 1 + x^3\\
        C = 1 + x^3yz
        \end{array} & 10\\[2.0em]
        \hline
        [[54, 9, 2]] & 3, 3, 2 & \begin{array}{c}5,5,3,3,\\3,3,3,2,2\end{array} & 4 
        & \begin{array}{c} 
        A = (1+z)(1+y)\\
        B = 1 + xyz\\
        C = 1 + x^2y^2
        \end{array} & 36\\[2.0em]
        \hline
        [[108, 6, 2]] & 4, 3, 3 & 7,6,6,6,6,2 & 6 
        & \begin{array}{c} 
        A = (1+x^2)(1+xz)\\
        B = 1 + x^2y^2\\
        C = 1 + x^2y^2z^2
        \end{array} & 10\\[2.0em]
        \hline
        [[108, 18, 2]] & 4, 3, 3 & \begin{array}{c}\lbrace 7\rbrace \times 5, \lbrace 6\rbrace \times 2,\\ \lbrace 2\rbrace \times 11 \end{array} & 4 
        & \begin{array}{c} 
        A = (1+x^2)(1+xy^2z^2)\\
        B = 1 + x^2y^2z\\
        C = 1 + x^2y^2z
        \end{array} & 114\\
        \hline
    \end{array}\]
    \end{center}
\caption{Examples of (4,2,2) codes with $CCZ$ gates that have non-trivial logical action. Third column: Rather than simply reporting the $Z$-distance (which is 2 in all cases), we report the minimum weights found for all independent $Z$-logical operators. Fifth column: The $A$ polynomial is chosen to have weight-4, and to be of the form $(1+g)(1+a)$, where $g^2=1$ and $a\neq 1$ is arbitrary. Sixth column: Number of logical $\overline{CCZ}$ gates implemented by the constant-depth circuit of physical $CCZ$ gates in a basis defined by the $Z$-logical operators with weights given in the third column.\label{tab:422_codes}}
\end{table*}

\section{Conclusion}\label{sec:conclusion}

In this work, we have investigated a new code construction, the trivariate tricycle codes, which combine several properties useful for future fault-tolerant quantum computation: high thresholds under circuit-level noise, single-shot decodability and fault-tolerant logical gates. The construction includes codes with encoding rates and relative distances that outperform topological codes based on 2D- or 3D-local connectivity. By way of example, our $[[432, 12, 12]]$ code outperforms the 3D toric code encoding 12 logical qubits at $d_Z = 12$ by a factor of 48.  Furthermore, the $X$- and $Z$-distances of many codes found are naturally more balanced for the TT codes than they are for the 3DTC, meaning we can obtain the benefits of a 3D code without such a large distance asymmetry. While the code parameters do not reach those attained by bivariate bicycle codes, we note that the inherent distance asymmetry means the $X$-distances outperform those of other qLDPC codes (for similar $n$ and $k$). While the distance asymmetry could be eliminated by distance balancing techniques~\cite{hastings2016weightreductionquantumcodes}, it could indeed be useful for code platforms that have a biased noise profile, making either $Z$ or $X$ errors more common. Future work could look into distance balancing or further adaptations of the TT codes to biased noise. 

The TT codes possess meta-checks in one basis, with a meta-check distance equal to $d_Z$. This allows for partial single-shot decodability, which we demonstrate using a circuit-level noise simulation and an overlapping window decoding strategy. We demonstrate the high thresholds and pseudo-thresholds of our codes using the syndrome extraction circuit of Appendix~\ref{app:SM_circuits}. The high performance is quite surprising given the large depth of this circuit. We note that this circuit was found analytically and we have not made an attempt to optimise its performance -- it is possible that further improvements could be made by attempting to maximise the circuit-level distance. The $Z$ memory performance of the codes is particularly high, with pseudo-thresholds around $1\%$. The large meta-check distance also could allow for single-shot generalised lattice-surgery protocols to be performed~\cite{hillmann2024singleshot}. Usually, such protocols require $d$ rounds of measurement, but this could be exchanged for a constant number of measurement rounds. Future work could investigate the spacetime overheads for such protocols using TT codes. It would also be instructive to perform an investigation into any potential single-shot properties of these codes in the $X$ basis memory experiments.

We investigated several proposals for fault-tolerant Clifford and non-Clifford gates in TT codes. We introduced shift automorphisms and transversal $CZ$ gates between code blocks, identifying a logical basis in which many of these gates have a simple logical action. And we identified several polynomial constructions that allowed for a constant-depth $CCZ$ circuit to be defined which preserves the code space. These included all weight-2 polynomials, along with polynomials of the form covered in Lemma~\ref{lem:cup_product}. 

We presented several codes in \Cref{tab:222_CCZ_codes} with logical $\overline{CCZ}$ gates (over three code blocks) implementable by constant-depth circuits. These codes outperform the regular 3DTC with the same $Z$-distance by around a factor of 4, and have lower distance asymmetry than the 3DTC between $d_Z$ and $d_X$. We also present several codes with logical $\overline{CCZ}$'s that have better encoding rates but a $Z$-distance of only 2 for one or more of the logical qubits. 

Here we comment on several avenues for future work relating to non-Clifford gates in TT codes. We have found that, while not all polynomials of the form from Lemma~\ref{lem:cup_product} result in codes with distance 2, those that have distance higher than 2 have trivial logical action resulting from the $CCZ$ circuit presented in Appendix~\ref{app:ccz_circuit} -- it acts as the logical identity. Understanding this feature better and exploring other polynomial constructions that may result in better codes is a crucial avenue for future work. 

Despite their limitations, the distance-2 codes could be useful in certain magic state distillation protocols~\cite{Bravyi_2005,Bravyi_2012,Litinski_2019,MSD_nature_2025}, because they could allow for higher-fidelity magic states to be created with post-selection, as they can detect errors in one basis and correct errors in the other. There are trade-offs between the number of logical qubits, the level of error-correction that can be performed, and the number of logical $\overline{CCZ}$ gates that can be implemented (and hence the amount of magic that can be injected into the code space). The way these trade-offs play out in a chosen scheme would influence code selection. This is another avenue for future work. Furthermore, by gauge-fixing the distance-2 logical qubit(s) of these codes, we obtain higher-distance codes with an additional circuit of $CZ$ gates that has a non-trivial logical action. Future work could explore four-block codes with cup products that could be gauge-fixed in a similar way so that they have non-Clifford logical gates and larger distances.

\textit{Note added:} While this work was being completed, related work appeared in Ref.~\cite{lin2025abelianmulticyclecodessingleshot}. There the authors present multi-block constructions for group algebra codes, which include a construction similar to equations~\ref{eqn:PCM_MZ} -- \ref{eqn:PCM_X}. They do not, however, study the trivariate tricycle codes, or their logic gates, presented here.

\section{Acknowledgements}

The authors would like to thank Nikolas Breuckmann, Riley Chien, Oscar Higgott, Timo Hillmann, Rebecca Radebold, Tom Scruby, Mackenzie Shaw, and Mark Webster for enlightening discussions. AJ was supported by the Engineering and Physical Sciences Research Council [grant number EP/S021582/1]. CM and DEB were supported by the Engineering and Physical Sciences Research Council [grant number EP/Y004620/1]. CM acknowledges support from the Intelligence Advanced Research Projects Activity (IARPA), under the Entangled Logical Qubits program through Cooperative Agreement Number W911NF-23-2-0223.

\appendix

\section{TT code examples, properties, and logical sets}\label{app:TT_code_tables}

In this Appendix, we present more TT code examples, along with their properties. In \Cref{tab:codes_2}, we present several codes obtained by performing a random search through weight-3 polynomials, setting the first term of each to 1, without loss of generality (see Lemma~\ref{lem:code_equivalence}). Along with the code parameters, dimensions, and polynomials, we state whether there exists one of the logical sets from Equations~\ref{eqn:logical_block_1}--\ref{eqn:logical_block_3} that covers all logical qubits in the code. This is based on an exhaustive search through the three logical sets, and through all possible combinations of polynomials $f$, and $[g,h,j]$, and monomials $\alpha\in\mathcal{M}$ that feature in each of the logical sets. In the case that we cannot find a logical set that covers all logical qubits, we state the maximum number of logical Pauli $\bar{X}$ and $\bar{Z}$ operators that can be covered by a single logical set (i.e., a single choice of $f$ and $[g,h,j]$, and for all choices of $\alpha\in\mathcal{M}$). Note also that, in the cases in which we cannot find a single set that covers all logicals, we still observe that we can cover all logicals by varying $f$ and $[g,h,j]$ in the construction of the sets. That is, for each logical Pauli $\bar{X}$ or $\bar{Z}$, there exists a choice of $f$, $g$, $h$, $j$, and $\alpha$ such that the logical operator is of one of the forms in Equations~\ref{eqn:logical_block_1}--\ref{eqn:logical_block_3}.

\begin{table*}
    \begin{center}
        \def\arraystretch{1.3}
        \[\begin{array}{|c|c|c|c|c|c|c|}
        \hline
        \mathbf{[[n, k, d]]}  & \mathbf{\ell, m, p} & \mathbf{d_X} & \mathbf{A} & \mathbf{B} & \mathbf{C} & \begin{array}{c} \textbf{Complete} \\ \textbf{logical sets?}\end{array} \\
        \hline \hline 
        [[36,6,4]] &  3,2,2 & 8 & 1+ x + x^2z & 1 + xy  + x^2y & 1 + xyz + x^2 & \begin{array}{c} \text{\ding{55}}\\ \text{(max. logicals 6)}\end{array} \\
        \hline
        [[72,6,6]] &  4,3,2 & 12 & 1+ y + xy^2 & 1 + yz  + x^2y^2 & 1 + xy^2z + x^2y & \begin{array}{c} \text{\ding{55}}\\ \text{(max. logicals 6)}\end{array}\\
        \hline
        [[81,6,6]] &  3,3,3 & 12 & 1+ x + xy & 1 + y  + yz & 1 + z + x & \text{\ding{51}}\\
        \hline
        [[126,6,8]] &  7,3,2 & 22 & 1+ y^2 + x^4y & 1 + xy  + x^4y^2z & 1 + x^2yz + x^2y^2 & \text{\ding{51}} \\
        \hline
        [[135,12,6]] &  5,3,3 & 14 & 1+ x + x^3y^2 & 1 + xz^2  + x^2yz & 1 + xy^2z + x^2yz & \text{\ding{51}}\\
        \hline
        [[180,12,8]] &  5,4,3 & 20 & 1+ x^2y^3z + x^4y & 1 + x^3  + x^4z^2 & 1 + x^3y^3 + x^4yz^2 & \text{\ding{51}}\\
        \hline
        [[288,6,10]] &  8,4,3 & \leq 48 & 1+ yz + x^7yz^2 & 1 + x^2yz  + x^5z^2 & 1 + x^2y^3z + x^6y^2z^2 & \begin{array}{c} \text{\ding{55}} \\ \text{(max. logicals 6)}\end{array}\\
        \hline
        [[432,12,12]] &  6,6,4 & \leq 36 & 1+ xyz^3 + x^3y^4z^2 & 1 + x^3yz^2  + x^3y^2z^3 & 1 + x^4y^3z^3 + x^5z^2 & \begin{array}{c} \text{\ding{55}} \\ \text{(max. logicals 12)}\end{array}\\
        \hline
        [[588,9,14]] &  7,7,4 & \leq 92 & 1+ yz + x^3y^2z & 1 + xy^4z  + x^4y^4z^2 & 1 + x^2y^2z + x^2y^4 & \text{\ding{51}} \\
        \hline
        [[648,6,16]] &  6,6,6 & \leq 72 & 1+ xz^4 + x^3yz^4 & 1 + xy^4z  + x^5yz^5 & 1 + x^2yz^2 + x^2y^2z^3 & \text{\ding{51}} \\
        \hline
        [[648,12,14]] &  6,6,6 & \leq 72 & 1+ x + x^4z & 1 + y  +xyz^4 & 1 + z + x^3y^2z & \begin{array}{c} \text{\ding{55}} \\ \text{(max. logicals 12)}\end{array} \\
        \hline
        [[735,18,14]] &  7,7,5 & \leq 80 & 1+ y^2 + x^6y^5 & 1 + x^2yz  + x^2y^4z^3 & 1 + x^4y^5z^2 + x^5y & \text{\ding{51}} \\
        \hline
        [[840,9, 16]] &  8,7,5 & ?? & 1+ y^3z + xyz^2 & 1 + y^5z  + x^3y^4z & 1 + xy^3z^4 + x^7yz & \text{\ding{51}} \\
        \hline
        [[1029,18, 16]] &  7,7,7 & ?? & 1+ y^2z^3 + x^5y^5z & 1 + x^2y^5z  + x^4z^5 & 1 + x^4y^6 + x^5y^4z^6 & \text{\ding{51}} \\
        \hline
        \end{array}\]
    \end{center}
    \caption{Example trivariate tricycle codes and their properties. First column: parameters $[[n,k,d]]$; second column: dimensions of $x$, $y$ and $z$, respectively; third column: $X$-distance found; fourth - sixth columns: polynomials defining the code; seventh column: we report whether one can find a logical set (see Section~\ref{sec:logic_gates}) that covers all logical qubits in the code.  \label{tab:codes_2}}
\end{table*}

We also state (without proofs, since they are analogous to those found in Ref.~\cite{bravyi2024high}) Lemmas relating to the connectivity of the Tanner graphs and 3D-toric layout (see Section~\ref{sec:3D_layout}) of TT codes:
\begin{lem}\label{lem:connectivity_tanner_graph}
A TT code based on matrices $A$, $B$ and $C$ has a connected Tanner graph if $\lbrace A_i A_j^\top | i,j \in \lbrace 1,2,3\rbrace \cup \lbrace B_i B_j^\top | i,j \in \lbrace 1,2,3\rbrace \cup \lbrace C_i C_j^\top | i,j \in \lbrace 1,2,3\rbrace$ generate $\mathcal{M}$.
\end{lem}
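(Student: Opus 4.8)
\textbf{Proof proposal for Lemma~\ref{lem:connectivity_tanner_graph}.}

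The plan is to analyze connectivity of the Tanner graph by tracking which vertices (qubits and checks) are reachable from a fixed starting vertex, exploiting the translational structure that each vertex carries a monomial label in $\mathcal{M}$. First I would fix, without loss of generality, the data qubit $q(\mathrm{L},1)$ in the L block labelled by the identity monomial, and show that every other vertex in the graph lies in its connected component. The key observation is that the Tanner graph edges are exactly those prescribed by the nonzero entries of $H_X$ and $H_Z$: from a data qubit $q(\mathrm{L},\alpha)$ one can reach $X$-checks $q_X(A_i^\top\alpha)$ for each term $A_i$ of $A$, and $Z$-checks of types $Z_b$ and $Z_c$ via terms of $C^\top$ and $B^\top$ respectively; conversely from an $X$-check $q_X(\gamma)$ one reaches L qubits $q(\mathrm{L}, A_i\gamma)$, C qubits $q(\mathrm{C}, B_i\gamma)$, and R qubits $q(\mathrm{R}, C_i\gamma)$; and analogously for the $Z$-checks. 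Reading off Equations~\ref{eqn:PCM_Z} and \ref{eqn:PCM_X}.

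The core of the argument is then: within the L block, the set of monomials $\alpha$ such that $q(\mathrm{L},\alpha)$ is reachable from $q(\mathrm{L},1)$ is closed under multiplication by any $A_iA_j^\top$ — go from $q(\mathrm{L},\alpha)$ up to $q_X(A_j^\top\alpha)$ and back down to $q(\mathrm{L},A_iA_j^\top\alpha)$. Similarly, routing through a $Z_c$-check connects $q(\mathrm{L},\alpha)$ to $q(\mathrm{L},B_iB_j^\top\alpha)$ (and through $Z_b$-checks to $q(\mathrm{L},C_iC_j^\top\alpha)$), because the $Z_c$ row couples L and C qubits via $B^\top$ and $A^\top$ — wait, one must be careful here: the $Z_c$-check couples L qubits (via $B^\top$) to C qubits (via $A^\top$), so a round trip through a single $Z_c$-check actually connects an L qubit to a C qubit, not back to an L qubit. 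The right way to generate L-to-L moves by a factor $B_iB_j^\top$ is to go L $\to$ $X$-check is wrong too; instead one uses the meta-check redundancy or composes two hops. Let me restate cleanly: from $q(\mathrm{L},\alpha)$ hop to $q_Z(c, B_j^\top\alpha)$, then to $q(\mathrm{C}, A_k^\top B_j^\top\alpha)$ — this is a cross-block move. The cleanest route staying in L: $q(\mathrm{L},\alpha) \to q_X(A_j^\top\alpha) \to q(\mathrm{L}, A_iA_j^\top\alpha)$ gives $A$-moves; to get $B$ and $C$ moves in the L block one instead argues via the cross-block hops that the C and R blocks are also reached and carry the analogous closure properties, and the $X$-checks glue all three blocks together. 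So the actual statement to prove is that the orbit of $1$ under the semigroup generated by $\{A_iA_j^\top\}\cup\{B_iB_j^\top\}\cup\{C_iC_j^\top\}$, acting appropriately across blocks via the $X$-checks, is all of $\mathcal{M}$ — which holds precisely when that set generates the group $\mathcal{M}$ (being a finite Abelian group, a generating set for it as a semigroup containing inverses suffices, and $A_iA_j^\top$ together with $A_jA_i^\top = (A_iA_j^\top)^\top = (A_iA_j^\top)^{-1}$ provides inverses since these are monomials).

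I would then finish by verifying that once all $\ell m p$ monomial positions in the L block are reached, every $X$-check is reached (each $X$-check $q_X(\gamma)$ neighbours $q(\mathrm{L},A_i\gamma)$ for reachable $A_i\gamma$), and every C and R qubit and every $Z$-check of all three types is reached through those $X$-checks, so the graph is connected. The main obstacle, and the step requiring genuine care rather than routine bookkeeping, is exactly the one I stumbled on above: correctly identifying which round-trip paths through checks produce same-block versus cross-block moves, and thereby showing that closure under the monomials $\{A_iA_j^\top, B_iB_j^\top, C_iC_j^\top\}$ — rather than some larger or different set — is the right hypothesis; once the bookkeeping of which checks connect which blocks (from Eqns.~\ref{eqn:PCM_Z}--\ref{eqn:PCM_X}) is pinned down, the rest is an orbit-counting argument of the same flavour as the BB-code connectivity lemma in Ref.~\cite{bravyi2024high}.
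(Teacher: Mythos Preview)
The paper does not supply an explicit proof of this lemma; it only remarks that the argument is analogous to the corresponding connectivity lemma for bivariate bicycle codes in Ref.~\cite{bravyi2024high}. Your overall strategy---fix $q(\mathrm{L},1)$, compute which monomial labels are reachable by two-hop round trips through checks, and show the reachable set is the subgroup of $\mathcal{M}$ generated by $\{A_iA_j^\top, B_iB_j^\top, C_iC_j^\top\}$---is exactly that analogy, and is correct.

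Your momentary confusion (the ``wait'' passage) is the only real issue, and it stems from undercounting the neighbours of a $Z$-check. A $Z_c$-check labelled $\gamma$ has support $Z_c(B^\top\gamma, A^\top\gamma, 0)$, so it is adjacent to \emph{all three} L qubits $B_1^\top\gamma, B_2^\top\gamma, B_3^\top\gamma$, not just one. Hence the two-hop path $q(\mathrm{L},\alpha)\to q_Z(c, B_j\alpha)\to q(\mathrm{L}, B_i^\top B_j\alpha)$ stays inside the L block and realises the move $\alpha\mapsto B_i^\top B_j\alpha$; as $i,j$ range over $\{1,2,3\}$ this gives precisely the $B_iB_j^\top$ generators. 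The same argument through $Z_b$-checks (which touch L qubits via $C^\top$) gives the $C_iC_j^\top$ moves, and through $X$-checks gives the $A_iA_j^\top$ moves, as you already had. No cross-block detour is needed for this step. Once that is fixed, your closing paragraph---reach all L labels, then all $X$-checks, then all C, R qubits and $Z$-checks---completes the proof.
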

\begin{lem}\label{lem:3D_toric_layout}
    A TT code based on matrices $A$, $B$ and $C$ has a 3D toric layout if there exists $i,j,g,h,k,o\in \{1,2,3\}$ such that:

    (i) $\langle A_i A_j^\top , B_g B_h^\top, C_k C_o^\top\rangle = \mathcal{M}$ and

    (ii) $\text{ord}(A_iA_j^\top)\text{ord}(B_gB_h^\top) \text{ord}(C_kC_o^\top) = \ell m p$.
\end{lem}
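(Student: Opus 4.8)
The plan is to mirror the proof of the 2D-toric-layout criterion for bivariate bicycle codes in Ref.~\cite{bravyi2024high}, upgraded by one dimension and with the meta-check layer included. Recall that a ``3D toric layout'' means the Tanner graph of the code---whose vertices are the $3\ell m p$ data qubits, the $\ell m p$ $X$-checks, the $3\ell m p$ $Z$-checks and the $\ell m p$ meta-checks, and whose edges are the incidences read off from $H_X$, $H_Z$ and $M_Z$ in Equations~\ref{eqn:PCM_MZ}--\ref{eqn:PCM_X}---contains a spanning subgraph isomorphic to the Cayley graph of $\mathbb{Z}_\mu\times\mathbb{Z}_\lambda\times\mathbb{Z}_\nu$ with the generating set that realizes a cubic-lattice grid; equivalently, that spanning subgraph is the Tanner graph of the 3D toric code with polynomials $1+u$, $1+v$, $1+w$ for $u,v,w$ a standard generating set of the group.

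The first step is purely group-theoretic. Writing $p_A = A_iA_j^\top$, $p_B = B_gB_h^\top$, $p_C = C_kC_o^\top$ and $H = \langle p_A\rangle$, $K = \langle p_B\rangle$, $L = \langle p_C\rangle$, condition (i) says $HKL = \mathcal{M}$ and condition (ii) says $|H|\,|K|\,|L| = \ell m p = |\mathcal{M}|$. Since in an abelian group $|HKL| = |H|\,|K|\,|L|/(|H\cap K|\cdot|HK\cap L|)$, equality forces $|H\cap K| = |HK\cap L| = 1$, hence $\mathcal{M} = H\times K\times L \cong \mathbb{Z}_{\text{ord}(p_A)}\times\mathbb{Z}_{\text{ord}(p_B)}\times\mathbb{Z}_{\text{ord}(p_C)}$, with $p_A,p_B,p_C$ identified with the three standard generators. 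So every monomial $\alpha\in\mathcal{M}$ has unique ``coordinates'' $(a,b,c)$ with $\alpha = p_A^{a} p_B^{b} p_C^{c}$.

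The second step is to transport the cubic-lattice structure along this identification. I would define a vertex bijection from the 3DTC Tanner graph on $\mathbb{Z}_{\text{ord}(p_A)}\times\mathbb{Z}_{\text{ord}(p_B)}\times\mathbb{Z}_{\text{ord}(p_C)}$ to the TT Tanner graph, sending $X$-checks to $X$-checks, $Z_a/Z_b/Z_c$-checks to $Z_a/Z_b/Z_c$-checks, L/C/R data qubits to L/C/R data qubits and meta-checks to meta-checks, in each case re-indexing through $(a,b,c)\mapsto p_A^{a} p_B^{b} p_C^{c}$ and translating by a fixed monomial chosen per vertex type. The content is that the singled-out terms $A_i,A_j$ play the role of the terms $1,x$ of the 3DTC polynomial $A$: an L qubit is incident to exactly the two $X$-checks $A_i^\top\alpha$ and $A_j^\top\alpha$, which are $p_A$-translates of one another, so L qubits behave as the $x$-edges of the lattice; likewise C qubits (via $B_g,B_h$) behave as $y$-edges and R qubits (via $C_k,C_o$) as $z$-edges. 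The $Z$-check incidences match the square faces of the lattice in the same way---e.g.\ a $Z_c$-check is incident to the L qubits $B_g^\top\alpha,B_h^\top\alpha$ (a $p_B$-pair) and the C qubits $A_i^\top\alpha,A_j^\top\alpha$ (a $p_A$-pair), exactly the boundary of an $xy$-face---and the meta-check incidences match the cubes, since $M_Z = [A^\top\;B^\top\;C^\top]$ makes meta-check $\alpha$ incident to the $p_A$-pair of $Z_a$-checks $A_i^\top\alpha,A_j^\top\alpha$, and analogously for $Z_b,Z_c$. Counting shows both Tanner graphs have $8\ell m p$ vertices (using (ii)), so the vertex map is a bijection; verifying that it carries every 3DTC edge to a TT edge then exhibits the required spanning subgraph.

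The main obstacle is the bookkeeping in this last step: one must choose the per-type translating monomials so that all incidence families hold simultaneously under a single global labeling---the three families of $X$-check/data-qubit edges, the three families of data-qubit/$Z$-check edges (each $Z$-check straddling two distinct qubit blocks), and the $Z$-check/meta-check edges---exactly as in the BB-code proof but now with a third lattice direction and the extra meta-check level. Once the labeling is fixed, each individual incidence is a one-line monomial identity and the group-theoretic counting is routine; the care lies entirely in making the choices mutually consistent.
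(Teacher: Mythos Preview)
Your proposal is correct and follows exactly the route the paper itself indicates: the authors state this lemma \emph{without} proof, saying only that it is ``analogous to those found in Ref.~\cite{bravyi2024high}'', i.e.\ the 2D toric-layout criterion for BB codes lifted by one dimension. Your two-step outline---first using (i) and (ii) to force the internal direct product $\mathcal{M}\cong\langle p_A\rangle\times\langle p_B\rangle\times\langle p_C\rangle$, then transporting the 3DTC incidence structure through the resulting coordinate bijection and checking edges type-by-type---is precisely that analogue, and the bookkeeping you flag (choosing compatible per-type monomial offsets) is the only real work, just as in the BB case.

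One small point: you include the meta-checks in the Tanner graph and count $8\ell m p$ vertices. The paper's phrasing (``the code's Tanner graph'') and the companion Lemma~\ref{lem:connectivity_tanner_graph} suggest the standard CSS Tanner graph with data qubits and $X$/$Z$-checks only ($7\ell m p$ vertices). Your argument works equally well with meta-checks omitted---the $M_Z$ edges simply add the 3-cell layer on top of an already-established spanning cubic lattice---so this is a harmless definitional choice, but you may wish to align it with the paper's convention.
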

All the TT codes we consider in this paper have connected Tanner graphs but most do not have a 3D toric layout. Some exceptions include the $[[81,6,6]]$ and $[[648, 12, 14]]$ codes from Table~\ref{tab:codes_2}.

We finally prove Lemma~\ref{lem:operator_commutation}, important for the construction of the logical sets, which we repeat here:
\setcounter{lem}{3}
\begin{lem}
    $X(P, Q, S)$ and $Z(\bar{P}, \bar{Q}, \bar{S})$ anti-commute if and only if $ 1\in P\bar{P}^\top + Q\bar{Q}^\top + S\bar{S}^\top$.
\end{lem}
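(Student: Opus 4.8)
The plan is to compute the commutator of $X(P,Q,S)$ and $Z(\bar P,\bar Q,\bar S)$ directly by counting the number of qubits on which both operators act non-trivially, modulo 2. Recall that qubits in each of the three blocks (L, C, R) are labelled by monomials in $\mathcal{M}$, so the support of $X(P,Q,S)$ is the set of monomials appearing in $P$ (in the L block), those in $Q$ (in the C block), and those in $S$ (in the R block); similarly $Z(\bar P,\bar Q,\bar S)$ has support given by the monomials in $\bar P$, $\bar Q$, $\bar S$ across the three blocks. Two multi-qubit Pauli operators of $X$- and $Z$-type anti-commute precisely when they share an odd number of qubits in their supports, so it suffices to compute $|P\cap \bar P| + |Q\cap \bar Q| + |S\cap \bar S| \bmod 2$, where $|f\cap g|$ counts monomials common to $f$ and $g$.

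The key step is then the algebraic identity that, for polynomials $f,g\in\mathbb{F}_2[G]$ with $G$ abelian, the number of shared monomials modulo 2 equals the coefficient of the identity monomial $1$ in the product $f g^\top$, where $g^\top$ denotes the polynomial obtained by inverting every monomial in $g$ (equivalently, the image of $g$ under the antipode/transpose map used throughout the paper for the shift matrices). Indeed, writing $f = \sum_\alpha f_\alpha \alpha$ and $g = \sum_\beta g_\beta \beta$, the coefficient of $1$ in $f g^\top = \sum_{\alpha,\beta} f_\alpha g_\beta \alpha\beta^{-1}$ is $\sum_{\alpha} f_\alpha g_\alpha$, which is exactly $|f\cap g| \bmod 2$ since all arithmetic is over $\mathbb{F}_2$. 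Applying this to each of the three blocks and summing, $X(P,Q,S)$ and $Z(\bar P,\bar Q,\bar S)$ anti-commute if and only if the coefficient of $1$ in $P\bar P^\top + Q\bar Q^\top + S\bar S^\top$ is nonzero, i.e. $1\in P\bar P^\top + Q\bar Q^\top + S\bar S^\top$.

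I would organise the write-up as: (1) state the support description of $X(P,Q,S)$ and $Z(\bar P,\bar Q,\bar S)$ in terms of monomial labels; (2) recall that $X$- and $Z$-type Paulis anti-commute iff their supports overlap in an odd number of qubits; (3) prove the identity that the overlap count of two polynomials mod 2 is the identity-coefficient of their ``twisted'' product $f g^\top$; (4) sum over the three blocks and conclude. The main obstacle — really just a bookkeeping point rather than a genuine difficulty — is being careful about the transpose/antipode convention: one must check that the $\top$ operation appearing in the statement is the same monomial-inversion map that makes the identity-coefficient extraction work, which is consistent with how $H_Z$ and $M_Z$ were defined from transposed polynomials earlier. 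Since the paper notes this proof is analogous to the corresponding one for bivariate bicycle codes in Ref.~\cite{bravyi2024high}, the expectation is that each of these steps is routine once the conventions are pinned down.
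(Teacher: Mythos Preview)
Your proposal is correct and follows essentially the same approach as the paper's proof: both expand the polynomials in monomial coefficients, identify that anti-commutation is governed by the parity of the shared support, and show that this parity equals the coefficient of $1$ in $P\bar P^\top + Q\bar Q^\top + S\bar S^\top$ by direct computation of the product. The only difference is organisational --- you isolate the identity ``overlap count mod $2$ equals the identity-coefficient of $fg^\top$'' as a standalone step, whereas the paper computes it inline --- but the substance is identical.
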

\begin{proof} 
One can write:
\begin{align}
    P=\sum_{\alpha \in \mathcal{M}} p_\alpha \alpha, \quad \bar{P}=\sum_{\alpha \in \mathcal{M}} \bar{p}_\alpha \alpha, \nonumber \\
    \quad Q=\sum_{\alpha \in \mathcal{M}} q_\alpha \alpha, \quad \bar{Q}=\sum_{\alpha \in \mathcal{M}} \bar{q}_\alpha \alpha,\\ \quad S=\sum_{\alpha \in \mathcal{M}} s_\alpha \alpha, \quad \bar{S}=\sum_{\alpha \in \mathcal{M}} \bar{s}_\alpha \alpha, \nonumber
\end{align}
where $p_\alpha, \bar{p}_\alpha, q_\alpha, \bar{q}_\alpha, s_\alpha, \bar{s}_\alpha, \in \mathbb{F}_2$ are coefficients. Pauli operators $Z(\bar{P},\bar{Q},\bar{S})$ and $X(P,Q,S)$
overlap on a qubit $q(L, \alpha)$ iff $p_\alpha \bar{p}_\alpha = 1$, $q(C, \alpha)$ iff $q_\alpha \bar{q}_\alpha = 1$ and $q(R, \alpha)$ iff $s_\alpha \bar{s}_\alpha = 1$. 
This means:
\begin{align}
    P \bar{P}^\top=\sum_{\alpha \in \mathcal{M}} p_\alpha \bar{p}_\alpha 1+\ldots \text {,} \nonumber \\
    Q \bar{Q}^\top=\sum_{\alpha \in \mathcal{M}} q_\alpha \bar{q}_\alpha 1 +\ldots \text {,} \\
    S\bar{S}^\top=\sum_{\alpha \in \mathcal{M}} s_\alpha \bar{s}_\alpha 1 + \ldots \text {,} \nonumber
\end{align}
By the matrix structure of the monomials the transpose operation is equal to the inverse, and the existence of 1 indicates operators are acting on the same qubit. The dots represent all monomials different from 1, i.e. non-overlap qubits. Then by linearity, 
\begin{equation}
    P\bar{P} + Q\bar{Q} + S\bar{S} = \sum_{\alpha \in \mathcal{M}} (p_\alpha \bar{p}_\alpha + q_\alpha \bar{q}_\alpha + s_\alpha\bar{s}_\alpha)1 + ...
\end{equation}
Thus proving the statement. 
\end{proof}

\section{Syndrome measurement circuits}\label{app:SM_circuits}

\begin{figure*}
    \centering
    \includegraphics[width=\linewidth]{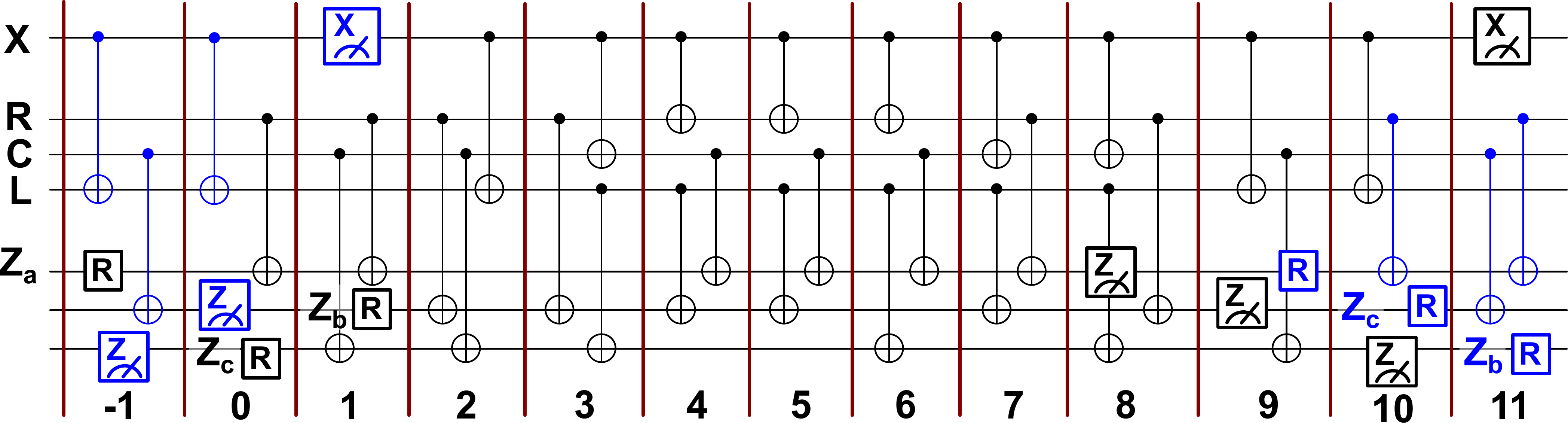}
    \caption{Syndrome measurement circuit for a trivariate tricycle code, with weight-9 $X$-checks and weight-6 $Z$-checks. Gate layers are indicated with numbers below the circuit and delineated with red lines between gates. Lines indicate blocks of data qubits (labelled R, C, and L) or measure qubits (labelled $X$, $Z_a$, $Z_b$, or $Z_c$). Blue gates, resets and measurements indicate circuit components from the previous or next syndrome measurement round. The final two qubit lines switch roles between $Z_b$- and $Z_c$-measure qubits each round, as indicated.}
    \label{fig:SM_circuit}
\end{figure*}

In this Appendix, we illustrate our syndrome measurement circuits and prove that they perform the required check measurements and preserve the logical operators of the code. The circuit is summarised in Fig.~\ref{fig:SM_circuit}. We label blocks of qubits (represented by wires in the figure) by $X$ (for $X$-check measure qubits), R, C, L (for the right, center, and left blocks of data qubits), $Z_a$, $Z_b$ and $Z_c$ (for the three blocks of $Z$-check measure qubits, corresponding to the three blocks of rows in $H_Z$). Each CNOT gate in the diagram represents a collection of CNOT gates, all of which can be performed in parallel, between data and measure qubits in the corresponding blocks (see below). 

Note that these circuits have the complication that the $Z_b$ and $Z_c$ blocks of measure qubits switch labels every round -- this is indicated after the resets in Fig.~\ref{fig:SM_circuit}. In order to not introduce an extra idling layer, we omit explicit resets on the $X$-measure qubits, instead performing a Pauli frame update dependent on the outcomes of measurements of the $X$-measure qubits (alternatively, one could assume having an additional $n/3$ $X$-measure qubits available). The lack of resets can be accommodated by updating the definitions of detectors (see, e.g., Ref.~\cite{gehér2025resetresetquestion}). This should not drastically affect quantum memory performance, though it decreases the time-like distance of certain logical operations, like (generalised) lattice surgery (see, e.g., Ref.~\cite{gehér2025resetresetquestion,harper2025characterisingfailuremechanismserrorcorrected}). For simplicity, in our simulations, we perform a noiseless reset rather than update the definitions of detectors. As drawn, the circuits have depth 13. The first reset in a layer is on the $Z_a$-measure qubits and the final measurement is on the $X$-measure qubits. If we perform resets on the $X$-measure qubits, the circuits become depth-14, however the 2-qubit gate depth remains the same. 

The ordering of the matrices in the gates is important and noted in the equations below. We prove that the syndrome measurement circuits perform as required by observing how they update the $X$ and $Z$ stabilizer tableaux (see also Ref.~\cite{bravyi2024high}). Let us consider the $X$-tableau first. We require the syndrome measurement circuit to update this tableau in the following way:
\begin{align}
    \begin{pmatrix}
        I & 0       & 0     & 0      & 0 & 0 & 0\\
        0 & A  & B & C & 0 & 0 & 0
    \end{pmatrix} \longrightarrow \begin{pmatrix}
        I & A  & B & C & 0 & 0 & 0\\
        0 & A  & B & C & 0 & 0 & 0
    \end{pmatrix}
\end{align}
where the columns of the tableau are labelled $X$, $L$, $C$, $R$, $Za$, $Zb$, and $Zc$, in order. 

For the $Z$-tableau, we require the following transformation:
\begin{align}
\begin{pmatrix}
    0 & 0         & 0 & 0               & I     & 0     & 0  \\
    0 & 0   & 0       & 0               & 0     & I     & 0 \\
    0 & 0    & 0 & 0                    & 0     & 0     & I\\
    0 & 0         & C^\top & B^\top    & 0     & 0     & 0\\
    0 & C^\top   & 0       & A^\top   & 0     & 0     & 0\\
    0 & B^\top    & A^\top & 0         & 0     & 0     & 0
\end{pmatrix} \longrightarrow
    \begin{pmatrix}
        0 & 0         & C^\top & B^\top    & I     & 0     & 0  \\
        0 & C^\top   & 0       & A^\top   & 0     & I     & 0 \\
        0 & B^\top    & A^\top & 0         & 0     & 0     & I   \\
        0 & 0         & C^\top & B^\top    & 0     & 0     & 0\\
        0 & C^\top   & 0       & A^\top   & 0     & 0     & 0\\
        0 & B^\top    & A^\top & 0         & 0     & 0     & 0
    \end{pmatrix}
\end{align}
where the columns are labelled similarly to above. We show that the circuit below performs both of these transformations.

For the $X$-tableau, we will omit the first column since this is unchanged throughout the circuit. The notation CX$_{B_1}(R,Za)$ means we perform a CNOT gate between qubit $i$ in the $R$ block of data qubits, and $Za[B_1[i]]$, in the block of $Z$ measure qubits, for $i=1,\ldots, n/3$, where $B_1[i] = j$ if and only if row $i$ and column $j$ of $B_1$ is $1$.
\begin{widetext}
    \begin{align}
        &\text{Round 0:}
        \quad \begin{pmatrix}
            0       & 0     & 0      & 0              & 0 & 0\\
            A  & B & C & 0              & 0 & 0
        \end{pmatrix} \xrightarrow{\text{CX}_{B_1}(R,Za)} \begin{pmatrix}
            0       & 0     & 0      &      0         & 0 & 0\\
            A  & B & C & CB_1  & 0 & 0
        \end{pmatrix}\\ \nonumber
        \\
        &\text{Round 1:}\xrightarrow[\text{CX}_{A_1}(C,Zc)]{\text{CX}_{B_3}(R,Za)}
        \begin{pmatrix}
            0       & 0     & 0      &      0                   & 0 & 0\\
            A  & B & C & C(B_1+B_3)  & 0 & BA_1
        \end{pmatrix}\\ \nonumber
        \\
        &\text{Round 2:}\xrightarrow[\begin{subarray}{c}\text{CX}_{A_3}(C,Zc)\\ \text{CX}_{A_2}(X,L)\end{subarray}]{\text{CX}_{A_1}(R,Zb)} \begin{pmatrix}
            A_2 & 0     & 0      &      0                  & 0 & 0\\
            A   & B & C & C(B_1+B_3) & CA_1 & B(A_1+A_3)
        \end{pmatrix}\\ \nonumber
        \\
        &\text{Round 3:}\xrightarrow[\begin{subarray}{c}\text{CX}_{B_2}(X,C)\\ \text{CX}_{B_1}(L,Zc)\end{subarray}]{\text{CX}_{A_3}(R,Zb)} \begin{pmatrix}
            A_2 & B_2 & 0      &      0                  & 0 & B_1A_2\\
            A   & B   & C & C(B_1+B_3) & C(A_1+A_3) & B(A_1+A_3) + B_1A
        \end{pmatrix}\\ \nonumber
        \\
        &\text{Round 4:}\xrightarrow[\begin{subarray}{c}\text{CX}_{C_1}(C,Za)\\ \text{CX}_{C_1}(L,Zb)\end{subarray}]{\text{CX}_{C_2}(X,R)} \begin{pmatrix}
            A_2 & B_2 & C_2 & C_1B_2 & C_1A_2 & B_1 A_2\\
            A   & B   & C   & C(B_1+B_3) + C_1B & C(A_1+A_3) + C_1A & B(A_1+A_3) + B_1A
        \end{pmatrix}\\ \nonumber
        \\
        &\text{Round 5:}\xrightarrow[\begin{subarray}{c}\text{CX}_{C_2}(C,Za)\\ \text{CX}_{C_2}(L,Zb)\end{subarray}]{\text{CX}_{C_1}(X,R)} \begin{pmatrix}
            A_2 & B_2 & C_1+C_2 & (C_1+C_2)B_2 & (C_1+C_2)A_2 & B_1 A_2\\
            A   & B   & C   & CB_2 + C_3B & CA_2 + C_3A & B(A_1+A_3) + B_1A
        \end{pmatrix}\\ \nonumber
        \\
        &\text{Round 6:}\xrightarrow[\begin{subarray}{c}\text{CX}_{C_3}(C,Za)\\ \text{CX}_{B_2}(L,Zc)\end{subarray}]{\text{CX}_{C_3}(X,R)} \begin{pmatrix}
            A_2 & B_2 & C & CB_2 & (C_1+C_2)A_2 & (B_1 +B_2) A_2\\
            A   & B   & C   & CB_2 & CA_2 + C_3A & BA_2 + B_3A
        \end{pmatrix}\\ \nonumber
        \\
        &\text{Round 7:}\xrightarrow[\begin{subarray}{c}\text{CX}_{B_1}(X,C)\\ \text{CX}_{C_3}(L,Zb)\end{subarray}]{\text{CX}_{B_2}(R,Za)} \begin{pmatrix}
            A_2 & B_1+B_2 & C & 0 & CA_2 & (B_1 +B_2) A_2\\
            A   & B   & C & 0 & CA_2 & BA_2 + B_3A
        \end{pmatrix}\\ \nonumber
        \\
        &\text{Round 8:}\xrightarrow[\begin{subarray}{c}\text{CX}_{B_3}(X,C)\\ \text{CX}_{B_3}(L,Zc)\end{subarray}]{\text{CX}_{A_2}(R,Zb)} \begin{pmatrix}
            A_2 & B & C & 0 & 0 & B A_2\\
            A   & B & C & 0 & 0 & BA_2
        \end{pmatrix}\\ \nonumber
        \\
        &\text{Round 9:}\xrightarrow[\text{CX}_{A_1}(X,L)]{\text{CX}_{A_2}(C,Zc)} \begin{pmatrix}
            A_1+A_2 & B & C & 0 & 0 & 0\\
            A   & B & C & 0 & 0 & 0
        \end{pmatrix}\\ \nonumber
        \\
        &\text{Round 10:}\xrightarrow{\text{CX}_{A_3}(X,L)}\begin{pmatrix}
            A & B & C & 0 & 0 & 0\\
            A   & B & C & 0 & 0 & 0
        \end{pmatrix}
    \end{align}
    The $Z$-tableau works similarly. We will omit the last three columns since they are unchanged throughout the circuit.
    \begin{align}
        &\text{Round 0:}
        \quad \begin{pmatrix}
            0       & 0         & 0       & 0   \\
            0       & 0         & 0       & 0   \\
            0       & 0         & 0       & 0   \\
            0       & 0         & C^\top & B^\top\\
            0       & C^\top   & 0       & A^\top\\
            0       & B^\top    & A^\top & 0
        \end{pmatrix} \xrightarrow{\text{CX}_{B_1}(R,Za)} \begin{pmatrix}
            0       & 0         & 0       & B^\top_1   \\
            0       & 0         & 0       & 0   \\
            0       & 0         & 0       & 0   \\
            0       & 0         & C^\top & B^\top\\
            0       & C^\top   & 0       & A^\top\\
            0       & B^\top    & A^\top & 0
        \end{pmatrix}\\ \nonumber
        \\
        &\text{Round 1:}\xrightarrow[\text{CX}_{A_1}(C,Zc)]{\text{CX}_{B_3}(R,Za)}
        \begin{pmatrix}
            0       & 0         & 0           & B^\top_1+B^\top_3  \\
            0       & 0         & 0           & 0   \\
            0       & 0         & A^\top_1   & 0   \\
            0       & 0         & C^\top     & B^\top\\
            0       & C^\top   & 0           & A^\top\\
            0       & B^\top    & A^\top     & 0
        \end{pmatrix}\\ \nonumber
        \\ 
        &\text{Round 2:}\xrightarrow[\begin{subarray}{c}\text{CX}_{A_3}(C,Zc)\\ \text{CX}_{A_2}(X,L)\end{subarray}]{\text{CX}_{A_1}(R,Zb)} \begin{pmatrix}
            0                & 0         & 0                     & B^\top_1+B^\top_3  \\
            0                & 0         & 0                     & A^\top_1  \\
            0                & 0         & A^\top_1+A^\top_3   & 0   \\
            0                & 0         & C^\top               & B^\top\\
            A^\top_2C^\top & C^\top   & 0                     & A^\top\\
            A^\top_2B^\top  & B^\top    & A^\top               & 0
        \end{pmatrix}\\ \nonumber
        \\ 
        &\text{Round 3:}\xrightarrow[\begin{subarray}{c}\text{CX}_{B_2}(X,C)\\ \text{CX}_{B_1}(L,Zc)\end{subarray}]{\text{CX}_{A_3}(R,Zb)} \begin{pmatrix}
            0                                 & 0         & 0                     & B^\top_1+B^\top_3  \\
            0                                 & 0         & 0                     & A^\top_1+A^\top_3  \\
            B^\top_2(A^\top_1+A^\top_3)     & B^\top_1  & A^\top_1+A^\top_3   & 0   \\
            B^\top_2C^\top                   & 0         & C^\top               & B^\top\\
            A^\top_2C^\top                  & C^\top   & 0                     & A^\top\\
            A^\top_2B^\top + A^\top B^\top_2 & B^\top    & A^\top               & 0
        \end{pmatrix}\\ \nonumber
        \\ 
        &\text{Round 4:}\xrightarrow[\begin{subarray}{c}\text{CX}_{C_1}(C,Za)\\ \text{CX}_{C_1}(L,Zb)\end{subarray}]{\text{CX}_{C_2}(X,R)} \begin{pmatrix}
            C^\top_2(B^\top_1+B^\top_3)      & 0         & C^\top_1           & B^\top_1+B^\top_3  \\
            C^\top_2(A^\top_1+A^\top_3)    & C^\top_1 & 0                   & A^\top_1+A^\top_3 \\
            B^\top_2(A^\top_1+A^\top_3)     & B^\top_1  & A^\top_1+A^\top_3 & 0   \\
            B^\top_2C^\top + B^\top C^\top_2 & 0         & C^\top             & B^\top\\
            A^\top_2C^\top+A^\top C^\top_2 & C^\top   & 0                   & A^\top\\
            A^\top_2B^\top + A^\top B^\top_2 & B^\top    & A^\top             & 0
        \end{pmatrix}\\ \nonumber
        \\
        &\text{Round 5:}\xrightarrow[\begin{subarray}{c}\text{CX}_{C_2}(C,Za)\\ \text{CX}_{C_2}(L,Zb)\end{subarray}]{\text{CX}_{C_1}(X,R)} \begin{pmatrix}
            (C^\top_1+C^\top_2)(B^\top_1+B^\top_3)  & 0                   & C^\top_1+C^\top_2 & B^\top_1+B^\top_3  \\
            (C^\top_1+C^\top_2)(A^\top_1+A^\top_3)& C^\top_1+C^\top_2 & 0                   & A^\top_1+A^\top_3 \\
            B^\top_2(A^\top_1+A^\top_3)             & B^\top_1  & A^\top_1+A^\top_3 & 0   \\
            B^\top_2C^\top + B^\top(C^\top_1+C^\top_2) & 0         & C^\top             & B^\top\\
            A^\top_2C^\top+A^\top(C^\top_1+C^\top_2) & C^\top   & 0                   & A^\top\\
            A^\top_2B^\top + A^\top B^\top_2 & B^\top    & A^\top             & 0
        \end{pmatrix}\\{\nonumber}
        \\
        &\text{Round 6:}\xrightarrow[\begin{subarray}{c}\text{CX}_{C_3}(C,Za)\\ \text{CX}_{B_2}(L,Zc)\end{subarray}]{\text{CX}_{C_3}(X,R)} \begin{pmatrix}
            C^\top(B^\top_1+B^\top_3)  & 0                   & C^\top             & B^\top_1+B^\top_3  \\
            C^\top(A^\top_1+A^\top_3)& C^\top_1+C^\top_2 & 0                   & A^\top_1+A^\top_3 \\
            B^\top_2(A^\top_1+A^\top_3)& B^\top_1 +B^\top_2 & A^\top_1+A^\top_3 & 0   \\
            B^\top_2C^\top + B^\top C^\top   & 0         & C^\top             & B^\top\\
            (A^\top_1+A^\top_3)C^\top   & C^\top   & 0                   & A^\top\\
            A^\top_2B^\top + A^\top B^\top_2 & B^\top    & A^\top             & 0
        \end{pmatrix}
    \end{align}
    
    \begin{align}
        &\text{Round 7:}\xrightarrow[\begin{subarray}{c}\text{CX}_{B_1}(X,C)\\ \text{CX}_{C_3}(L,Zb)\end{subarray}]{\text{CX}_{B_2}(R,Za)} \begin{pmatrix}
            C^\top B^\top_3                         & 0                   & C^\top             & B^\top  \\
            C^\top(A^\top_1+A^\top_3)            & C^\top & 0                   & A^\top_1+A^\top_3 \\
            (B^\top_1+B^\top_2)(A^\top_1+A^\top_3)& B^\top_1 +B^\top_2  & A^\top_1+A^\top_3 & 0   \\
            B^\top_3C^\top   & 0         & C^\top             & B^\top\\
            (A^\top_1+A^\top_3)C^\top   & C^\top   & 0                   & A^\top\\
            A^\top_2B^\top + A^\top(B^\top_1+B^\top_2) & B^\top    & A^\top             & 0
        \end{pmatrix}\\ \nonumber
        \\
        &\text{Round 8:}\xrightarrow[\begin{subarray}{c}\text{CX}_{B_3}(X,C)\\ \text{CX}_{B_3}(L,Zc)\end{subarray}]{\text{CX}_{A_2}(R,Zb)} \begin{pmatrix}
            0                                       & 0                   & C^\top             & B^\top  \\
            C^\top(A^\top_1+A^\top_3)            & C^\top & 0                   & A^\top \\
            B^\top(A^\top_1+A^\top_3)& B^\top  & A^\top_1+A^\top_3 & 0   \\
            0                                       & 0         & C^\top             & B^\top\\
            (A^\top_1+A^\top_3)C^\top            & C^\top   & 0                   & A^\top\\
            (A^\top_1+ A^\top_3)B^\top            & B^\top    & A^\top             & 0
        \end{pmatrix}\\ \nonumber
        \\
        &\text{Round 9:}\xrightarrow[\text{CX}_{A_1}(X,L)]{\text{CX}_{A_2}(C,Zc)} \begin{pmatrix}
            0                & 0         & C^\top & B^\top  \\
            C^\top A^\top_3 & C^\top   & 0       & A^\top \\
            B^\top A^\top_3  & B^\top    & A^\top & 0   \\
            0                & 0         & C^\top & B^\top\\
            A^\top_3C^\top & C^\top   & 0       & A^\top\\
            A^\top_3B^\top  & B^\top    & A^\top & 0
        \end{pmatrix}\\ \nonumber
        \\
        &\text{Round 10:}\xrightarrow{\text{CX}_{A_3}(X,L)}\begin{pmatrix}
            0                & 0         & C^\top & B^\top  \\
            0                & C^\top   & 0       & A^\top \\
            0                & B^\top    & A^\top & 0   \\
            0                & 0         & C^\top & B^\top\\
            0                & C^\top   & 0       & A^\top\\
            0                & B^\top    & A^\top & 0
        \end{pmatrix}
    \end{align}
\end{widetext}

Now let us consider the logical operators. We will consider a $Z$-logical first. As in Ref.~\cite{bravyi2024high}, we write this as $(u,w,v)$ with $u$, $w$ and $v$ restrictions of the operator to the $L$, $C$ and $R$ qubits, respectively. The commutation of $(u,w,v)$ with the $X$ parity check matrix means:
\begin{align}
    \begin{bmatrix}
        A & B & C
    \end{bmatrix} \begin{bmatrix}
        u\\
        w\\
        v
    \end{bmatrix} = A u + B w + C v = 0.
\end{align}

Meanwhile, let us consider the action of the SM circuit on the logical. Since this logical has $Z$ support, the CNOT gates controlled on data qubits have no action on the logical operator. Only the CNOTs targeted on data qubits have action; these are those involved in the $X$-check measurements. The support of the updated logical on the $X$-measurement qubits is given by:
\begin{align}
    t = A u + B w + C v = 0.
\end{align}

For an $X$ logical operator, $(u,w,v)$, the vectors obey:
\begin{align}
    C^\top w + B^\top v = C^\top u + A^\top v = B^\top u + A^\top w = 0.
\end{align}
The supports of the logical on qubits $Z_a$, $Z_b$ and $Z_c$ are given by $t_a$, $t_b$ and $t_c$ respectively:
\begin{align}
    t_a & = (C^\top_1 + C^\top_2 + C^\top_3)w + (B^\top_1+B^\top_2+B^\top_3) v = 0\\
    t_b & = C^\top u + A^\top v = 0\\
    t_c & = B^\top u + A^\top w = 0 .
\end{align}
Therefore the logical operators are preserved by the circuit.


\section{Details of logical $CCZ$ gates}\label{app:CCZ_gates}

\subsection{Review of Cup Product Conditions}

Here we review some of the ingredients entering into the cup product construction of Ref.~\cite{breuckmann2024cupsgatesicohomology}, from which we derive the non-Clifford gates acting on (4,2,2) codes and others in this paper. 

Consider a classical group algebra code, for abelian $G$:
\begin{align}
    C^0\xrightarrow{A}C^1,
\end{align}
where $C^0 = C^1 = \mathbb{F}_2[G]$.
Here, $\mathbb{F}_2[G]$ denotes the formal sums of the form $\sum_{g\in G} a_g g$, for $a_g\in \lbrace 0, 1\rbrace$, and $A$ is an element of $\mathbb{F}_2[G]$. For our purposes, we can assume $G = \mathbb{Z}_\ell \times \mathbb{Z}_m \times \mathbb{Z}_p$ and $A$ is a polynomial in the generators of the cyclic subgroups of $G$, as is appropriate for a TT code.

We define a \textit{pre-orientation} on this code by partitioning the terms of $A$ into three disjoint sets: $A = A_\text{in} + A_\text{out} + A_\text{free}$. Let $X_0$ and $X_1$ represent bases for $C^0$, the checks (on which $A$ acts), and $C^1,$ the physical bits of the code, respectively. We then define the cup product, a bi-linear map on the chain complex, via: $\alpha \cup \alpha = \alpha$ for $\alpha \in X_0$; $\alpha \cup \beta = \beta$ for $\alpha \in X_0$ and $\beta \in A_\text{out}\alpha$; $\beta \cup \alpha = \beta$ for $\alpha \in X_0$, $\beta \in A_\text{in}\alpha$; zero on all other basis inputs. We furthermore define the integral on $C^1$ to be $\int_1 f = |f|$ (mod 2), for $f\in C^1$. The integrated Leibniz rule for the 3-fold cup product states that for all $\alpha_i\in C^0$:
\begin{align}
\begin{split}
    \int_1 (A\alpha_1\cup \alpha_2)\cup \alpha_3 + (\alpha_1\cup A\alpha_2)\cup \alpha_3  \\
    + (\alpha_1\cup\alpha_2)\cup A \alpha_3 = 0\quad \text{(mod }2).
\end{split}
\end{align}
Note that the multi-fold cup product is not necessarily associative, and hence we retain brackets in the above expression. Not all choices of pre-orientation obey this rule, but it is crucial to being able to define a cohomology operation (and hence a logical gate) from the (balanced) product of two or more of these classical codes.

By using the properties of the cup product, we can write the integrated Leibniz rule as:
\begin{align}\label{eqn:int_Leibniz_rule_2}
\begin{split}
    |A\alpha_1\cap A_\text{in}\alpha_2 \cap A_\text{in}\alpha_3| + |A_\text{out}\alpha_1\cap A\alpha_2 \cap A_\text{in}\alpha_3| \\
    + \delta_{\alpha_1,\alpha_2}|A_\text{out}\alpha_2 \cap A\alpha_3| = 0\quad \text{(mod } 2),
\end{split}
\end{align}
where $\delta_{\alpha_1,\alpha_2}=1$ if $\alpha_1=\alpha_2$, and is zero otherwise.

We now consider when the integrated Leibniz rule is satisfied. First let $\alpha_1=\alpha_2=\alpha_3$. Then the first term in Equation~\ref{eqn:int_Leibniz_rule_2} is simply $|A_\text{in}\alpha_1|$, the second term is zero and the final term is $|A_\text{out}\alpha_1|$. This reproduces the condition Equation~\ref{eqn:cup_cond_1}. Now suppose $\alpha_1=\alpha_3\neq\alpha_2$. Only the first term is potentially non-zero, and we find:
\begin{align}\label{eqn:in_in_intersection}
    |A_\text{in}\alpha_1 \cap A_\text{in}\alpha_2| = 0 \quad (\text{mod }2).
\end{align}
Now suppose $\alpha_1 = \alpha_2\neq \alpha_3$. We find, using Equation~\ref{eqn:in_in_intersection}:
\begin{align}
    &|A_\text{out}\alpha_2 \cap A_\text{in}\alpha_3| + |A_\text{out}\alpha_2 \cap A \alpha_3|\nonumber \\
    &= |A_\text{out}\alpha_2 \cap A_\text{free} \alpha_3| + |A_\text{out}\alpha_2 \cap A_\text{out} \alpha_3| \\
    &= 0\quad (\text{mod }2). 
\end{align}
If $\alpha_1 \neq \alpha_2 = \alpha_3$ we have:
\begin{align}
    &|A\alpha_1 \cap A_\text{in}\alpha_2| + |A_\text{out}\alpha_1 \cap A_\text{in}\alpha_2|\nonumber \\
    &= |A_\text{free}\alpha_2 \cap A_\text{in}\alpha_3| = 0 \quad (\text{mod }2).
\end{align}
Finally, if all $\alpha_i$ are distinct, we have:
\begin{align}
\begin{split}
    |A\alpha_1 \cap A_\text{in}\alpha_2 \cap A_\text{in}\alpha_3| + |A_\text{out}\alpha_1 \cap A \alpha_2 \cap A_\text{in}\alpha_3| \\
    = 0\quad (\text{mod} 2).
\end{split}
\end{align}
We therefore obtain all conditions from Equations~\ref{eqn:cup_cond_in}--\ref{eqn:cup_cond_4}.

The formulas for the cup product of the balanced product of three group algebra codes can be found in Ref.~\cite{breuckmann2024cupsgatesicohomology}. However, we point out that it is only non-zero for inputs of the form: $\alpha_\text{L} \cup \beta_\text{C} \cup \gamma_\text{R}$ (or for other permutations of the three data qubit blocks). For our purposes, only the form of the $CCZ$ circuits implementing the logical non-Clifford gate will be important. We introduce these below.

\subsection{Proof of Lemma~\ref{lem:cup_product}}

We now prove Lemma~\ref{lem:cup_product}, which we restate here:
\setcounter{lem}{4}
\begin{lem}
A TT code with all three polynomials ($A$, $B$, and $C$) either weight-2 or of the form $P = \sum_{k=1}^{\text{ord}\, g}g^k f$, where $g\in\mathcal{M}$ is such that $\text{ord}\, g$ is even, and $f$ is an arbitrary (non-zero) polynomial, admits a cup product.
\end{lem}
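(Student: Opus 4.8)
The plan is to reduce the statement to a per‑polynomial verification. As recalled in the first subsection of this Appendix, a cup product on the TT code exists as soon as one can choose pre‑orientations $A = A_\text{in}+A_\text{out}+A_\text{free}$, $B = B_\text{in}+B_\text{out}+B_\text{free}$ and $C = C_\text{in}+C_\text{out}+C_\text{free}$ satisfying Equations~\ref{eqn:cup_cond_1}--\ref{eqn:cup_cond_4}, and these conditions are imposed on each of $A$, $B$, $C$ independently. So it is enough to show: for any polynomial $P$ that is either weight‑$2$ or of the form $P=\sum_{k=1}^{\text{ord}(g)}g^k f$ with $\text{ord}(g)$ even, there is a valid pre‑orientation. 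I would treat the two cases separately (they are genuinely different, since a weight‑$2$ polynomial $P_1+P_2 = P_1(1+P_1^\top P_2)$ is of the $(1+h)$‑form only when $(P_1^\top P_2)^2=1$).

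For a weight‑$2$ polynomial $P=P_1+P_2$, take $P_\text{in}=P_1$, $P_\text{out}=P_2$, $P_\text{free}=0$. Then Equation~\ref{eqn:cup_cond_1} reads $|P_1|+|P_2| = 1+1 = 0$. For the remaining conditions the key point is that $P_\text{in}\alpha$ and $P_\text{out}\alpha$ are singletons: an intersection such as $P_\text{in}\alpha_1\cap P_\text{in}\alpha_2$ with $\alpha_1\neq\alpha_2$ is empty, which handles Equations~\ref{eqn:cup_cond_in}--\ref{eqn:cup_cond_2}. For Equation~\ref{eqn:cup_cond_4} one checks the two triple‑intersections are empty using distinctness of $\alpha_1,\alpha_2,\alpha_3$ — e.g. a common element of $P_\text{out}\alpha_1$, $P\alpha_2$ and $P_\text{in}\alpha_3$ would force $P_2\alpha_1\in\{P_1\alpha_2,P_2\alpha_2\}$ together with $P_2\alpha_1=P_1\alpha_3$, hence either $\alpha_1=\alpha_2$ or $\alpha_2=\alpha_3$. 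This small bookkeeping step is the only subtlety in the weight‑$2$ case (it also matches the remark in the main text that weight‑$2$ polynomials satisfy the conditions trivially).

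For the second form, set $N=\text{ord}(g)$ and $h=g^{N/2}$. First record the identity $P=\sum_{k=1}^{N}g^k f=(1+h)\bigl(\sum_{k=1}^{N/2}g^k\bigr)f$, so $hP=P$ in $\mathbb{F}_2[G]$; equivalently $\mathrm{supp}(P)$ is invariant under left multiplication by $h$. Moreover $h$ has order exactly $2$ (it is nontrivial since $0<N/2<N$), and left multiplication by $h$ on $\mathcal{M}$ is fixed‑point‑free (a fixed point would force $h=1$). Now take $P_\text{in}=P$ and $P_\text{out}=P_\text{free}=0$. The one fact I need is an evenness lemma: for any monomials $\alpha_1,\dots,\alpha_r$, the set $\bigcap_{i=1}^r P\alpha_i=\{m\in\mathcal{M}:m\alpha_i^\top\in\mathrm{supp}(P)\ \forall i\}$ is invariant under $m\mapsto hm$ (using $h$‑invariance of $\mathrm{supp}(P)$ and that $G$ is abelian), and $m\mapsto hm$ is a fixed‑point‑free involution on it, so $\bigl|\bigcap_i P\alpha_i\bigr|$ is even. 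Applying this with $r=1,2,3$ gives Equations~\ref{eqn:cup_cond_1}, \ref{eqn:cup_cond_in} and \ref{eqn:cup_cond_4} respectively, while Equations~\ref{eqn:cup_cond_3} and \ref{eqn:cup_cond_2} are trivial since $P_\text{out}=P_\text{free}=0$. (If $f$ is chosen so that $P=0$ the claim is vacuous, so one may assume $P\neq 0$; note the weight‑$4$ example $(1+g)(1+a)$ with $g^2=1$ discussed in the text is exactly the $\text{ord}(g)=2$, $f=1+a$ instance.)

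The main obstacle is really the opening reduction — being precise that the cup‑product conditions decouple over the three factor polynomials, so the problem splits into the two per‑polynomial checks above — together with the fiddly use of distinctness in Equation~\ref{eqn:cup_cond_4} for the weight‑$2$ case. The $(1+h)f$ case is short once one notices the free $\mathbb{Z}_2$‑action $m\mapsto hm$; in particular no cancellation analysis of $P$ is needed, since only $\mathrm{supp}(P)$ and its $h$‑invariance enter the argument.
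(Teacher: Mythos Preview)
Your proof is correct and actually somewhat cleaner than the paper's, but it takes a genuinely different route for the non-weight-$2$ case. The paper does \emph{not} set $P_\text{in}=P$: instead it splits the terms of $f$ roughly in half into $f_\text{in}$ and $f_\text{out}$ (with at most one leftover term in $f_\text{free}$), sets $P_\bullet = Gf_\bullet$ where $G=\sum_{k=1}^{\text{ord}(g)}g^k$, and proves evenness of the relevant intersections using the full $\langle g\rangle$-orbit structure rather than just the order-$2$ element $h=g^{N/2}$. Your observation that $h$-invariance of $\mathrm{supp}(P)$ already forces every intersection $\bigcap_i P\alpha_i$ to have even size is a nice shortcut that makes Equations~\ref{eqn:cup_cond_1}, \ref{eqn:cup_cond_in}, \ref{eqn:cup_cond_4} immediate and Equations~\ref{eqn:cup_cond_3}, \ref{eqn:cup_cond_2} vacuous.

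The trade-off is practical rather than logical. With your choice $P_\text{out}=0$, the $CCZ$ circuit of Equation~\ref{eqn:ccz_circuit} becomes empty for any block ordering in which such a $P$ sits in position $Q_2$ or $Q_3$, since the products over $\beta\in\mathcal{Q}_{2,\text{out}}\mathcal{Q}_{1,\text{in}}^\top\alpha$ or $\gamma\in\mathcal{Q}_{3,\text{out}}\mathcal{Q}_{2,\text{in}}^\top\beta$ are vacuous. In particular, if all three polynomials are of the $\sum g^k f$ type and you orient each with $P_\text{out}=0$, the resulting circuit is the identity. The paper's balanced split keeps $P_\text{out}\neq 0$ and is the pre-orientation actually used to build the non-trivial circuits reported in Tables~\ref{tab:422_codes} and \ref{tab:442_444_codes}. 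So your argument establishes the lemma as stated (existence of a cup product satisfying the integrated Leibniz rule), but the paper's pre-orientation is the one with operational content downstream.
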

\begin{proof}
    It suffices to show that the polynomials of weight 2 or of the form of $P$ satisfy the cup product conditions, since $A$, $B$ and $C$ must satisfy these conditions individually. To see that a weight-2 polynomial, $P_1+P_2$, satisfies the conditions, simply let $P_\text{in} \coloneqq P_1$ and $P_\text{out} \coloneqq P_2$. All conditions are thereby easily verified.

    Let $G = \sum_{k=1}^{\text{ord}\, g} g^k$. We first divide the terms of $f$ into in, out and free terms: $f=f_\text{in} + f_\text{out} + f_\text{free}$. If $|f|= 1$, then $P$ is weight-2, so let us assume $|f|\geq 2$. We will choose $\lfloor |f|/2\rfloor$ terms to belong to $f_\text{in}$, arbitrarily, and $\lfloor |f|/2\rfloor$ terms to belong to $f_\text{out}$, with any remaining term belonging to $f_\text{free}$. Note that this is not a unique choice. We will then choose $P_\text{in} = Gf_\text{in} \equiv G(f_\text{in}^{(1)}+f_\text{in}^{(2)} +\ldots)$, and similarly for $P_\text{out}$ and $P_\text{free}$. The fact that $\text{ord}\, g$ is even ensures that $|P_\text{in}\alpha| = |P_\text{out}\alpha|$ (mod 2). 

    We begin by showing that $|Gh_1\,\alpha \cap Gh_2\,\beta| = 0$ (mod 2) when $\alpha\neq \beta$, where $h_1$ and $h_2$ are arbitrary polynomials. Let us suppose that $Gh_1\, \alpha\cap G h_2\, \beta \neq \emptyset$, so that $g^k h_1^{(i)}\alpha = g^l h_2^{(j)}\beta$, for some integers $k,l,i,j$, where $h_1^{(i)}$ is the $i^\text{th}$ term in $h_1$, etc. But this implies the following relations also hold:
    \begin{align*}
        g^{k+1}h_1^{(i)}\alpha &= g^{l+1}h_2^{(j)}\beta\\
        g^{k+2}h_1^{(i)}\alpha &= g^{l+2}h_2^{(j)}\beta\\
        &\ldots\\
        g^{k+\text{ord}\,g-1}h_1^{(i)}\alpha &= g^{l+\text{ord}\,g-1}h_2^{(j)}\beta.
    \end{align*}
    Since $\text{ord}\,g$ is even, and all the above are distinct, there are an even number of such relations. If we can find another relation, $g^{k'}h_1^{(i')}\alpha = g^{l'}h_2^{(j')}\beta$, this will produce a set of relations either distinct from the above entirely, or it will reproduce the set of above relations. This is because, if we reproduce one of the above relations, then $g^{k+M}h_1^{(i)}\alpha = g^{l+M}h_2^{(j)}\beta = g^{k'}h_1^{(i')}\alpha = g^{l'}h_2^{(j')}\beta$, for some integer $M$, and it is easy to see that all relations will be reproduced. We conclude therefore, $|Gh_1\alpha\cap Gh_2\beta| = K\text{ord}\, g = 0$ (mod 2), where $K$ is an integer.

    From the above, we conclude that $|P_\text{in}\alpha\cap P_\text{in}\beta|= |P_\text{in}\alpha\cap P_\text{free}\beta| = |P_\text{out}\alpha\cap P_\text{out}\beta| = |P_\text{free}\alpha\cap P_\text{out}\beta| = 0$ (mod 2), for $\alpha\neq \beta$. To show that Equation~\ref{eqn:cup_cond_4} holds, we note that $P_\text{in}\alpha_2\cap P_\text{in}\alpha_3 = G h$, for some (potentially zero) polynomial $h$, and similarly for $P_\text{out}\cap P_\text{out}$ and $P_\text{in}\cap P_\text{free}$. We can use this, along with the above, to show that Equation~\ref{eqn:cup_cond_4} holds.
\end{proof}

\subsection{Circuit implementing logical non-Clifford gates}\label{app:ccz_circuit}

We now explain the constant-depth $CCZ$ circuit that preserves the stabilizer for codes that admit a cup product. We apply the following gates to the data qubits:
\begin{align}\label{eqn:ccz_circuit}
\prod_{\substack{\alpha\in\mathcal{M},\\
\beta\in \mathcal{Q}_{2,\text{out}} \mathcal{Q}_{1,\text{in}}^\top \alpha,\\
\gamma\in \mathcal{Q}_{3,\text{out}}\mathcal{Q}_{2,\text{in}}^\top \beta}} CCZ(q(\text{Q}_1,\alpha),q(\text{Q}_2,\beta),q(\text{Q}_3,\gamma))
\end{align}
for all choices of data qubit blocks Q$_1$, Q$_2$, Q$_3$, which are all distinct elements from $\lbrace \text{L}, \text{C}, \text{R}\rbrace$. In the above, we use $\mathcal{Q}_i$ to refer to the polynomial from block Q$_i$ in $H_X$. I.e., if Q$_i = \text{L}$, $\mathcal{Q}_i = A$, if Q$_i = \text{C}$, $\mathcal{Q}_i = B$, and if Q$_i = \text{R}$, $\mathcal{Q}_i = C$.

\subsection{Details of numerics used to find logical action of $CCZ$ circuits}

Here, we detail the methods used to obtain the results from \Cref{sec:CCZ_gates}. For each code with a non-trivial $CCZ$ circuit, we make an assignment of terms in $A$, $B$ and $C$ into in, out and free subsets, as detailed above. We then build a description of the $CCZ$ circuit from Equation~\ref{eqn:ccz_circuit}: for each gate in the circuit, we include in the circuit description a tuple of the form $(q_1,q_2,q_3)$, for row vectors $q_i$ corresponding to qubits in the support of the gate, and where the three slots of the tuple correspond to the three blocks of code. Then for every pair of $X$-checks in different code blocks, we check that the $CCZ$ circuit maps their product to a $Z$-stabilizer of the final code block. We describe this procedure in Algorithm~\ref{alg:ccz_circuit_check}.

The intuition for the algorithm is that it checks whether the $CCZ$ circuit conjugates $X$-stabilizers in any two code blocks to a $Z$-stabilizer in the third block. We note in the algorithm that rows$(A)$ corresponds to the set of rows of matrix $A$, while rs$(A)$ corresponds to the row space of $A$ (i.e., the span of all rows of $A$ over the field $\mathbb{F}_2$). We also note that it suffices to pick one stabilizer (labelled $\texttt{row}_1$ in Algorithm~\ref{alg:ccz_circuit_check}) of the first code block arbitrarily before looping over only the second-code-block $X$-stabilizer ($\texttt{row}_2$). This is because all other first-block $X$-stabilizers are related to $\texttt{row}_1$ by a monomial. This monomial can be absorbed into the product over $\alpha\in \mathcal{M}$ in the circuit (Equation~\ref{eqn:ccz_circuit}) and hence if the circuit conjugates $\texttt{row}_1$ and $\texttt{row}_2$ to a $Z$-stabilizer (for all choices of $\texttt{row}_2$) it does so for all choices of $\texttt{row}_1$. This is simply a result of the translation invariance of the code.

We next check which logical qubits are acted upon non-trivially by the logical gate implemented by the $CCZ$ circuit. We begin by producing a minimum-weight set of $Z$-logicals, $L_Z$, whose weights are given in \Cref{tab:422_codes}, using the QDistRnd algorithm~\cite{Pryadko2022}. We then produce a set of $X$-logicals, $L_X$, that are conjugate to those minimum-weight $Z$-logicals: $L_X L_Z^\top = \mathds{1}_k$. We then determine the logical action of the $CCZ$ circuit on this set of $X$-logicals by determining the ``intersection number" for each triple of logicals, which counts the number of $CCZ$ gates in circuit~\ref{eqn:ccz_circuit} that intersect $l_1$ in code block 1, $l_2$ in block 2 and $l_3$ in block 3. If and only if this intersection number $I(l_1,l_2,l_3) = 1$ (mod 2), the $CCZ$ circuit implements the logical gate $\overline{CCZ}(l_1,l_2,l_3)$ between the three code blocks. We enumerate the number of logical triples that produce a non-trivial intersection number in Tables~\ref{tab:222_CCZ_codes}, \ref{tab:422_codes}, and \ref{tab:442_444_codes}. We verify that a non-trivial intersection number is obtained only if one of the logicals $l_i$ is conjugate to a weight-2 $Z$-logical.

\begin{algorithm}
\caption{Check $CCZ$ Circuit Preserves Code's Stabilizer Group}
\label{alg:ccz_circuit_check}

\KwIn{Matrices $H_X$, $H_Z$; List of $CCZ$ gates in circuit~\ref{eqn:ccz_circuit}, $L$ (each is a length-3 tuple of qubit indices corresponding to the support of that gate in the three code blocks)}

\KwOut{\textbf{true} if $CCZ$ circuit is valid, \textbf{false} otherwise}

Initialize \texttt{result} $\gets$ \textbf{true}\;

\For{$i$ in $(1,2,3)$ and j in $(1,2,3)$ such that $i\neq j$}{
$\texttt{row}_1 \gets $ First row of $H_X$\;
Ind($\texttt{row}_1$) $\gets$ List of indices corresponding to non-zero elements of $\texttt{row}_1$\;
$n \gets$ number of columns in $H_X$\;
    \For{each $\texttt{row}_2$ in rows($H_X$)}{
        Ind($\texttt{row}_2$) $\gets$ List of indices corresponding to non-zero elements of $\texttt{row}_2$\;
        $Z$-\texttt{stabilizer candidate} $\gets$ row vector of zeros with length $n$\; 
        
        \For{each $\texttt{qubit}_1$ in Ind($\texttt{row}_1$) and $\texttt{qubit}_2$ in Ind($\texttt{row}_2$)}{
            \If{$\texttt{qubit}_1$ and $\texttt{qubit}_2$ are in support of the same $CCZ$ gate $G\in L$, with $\texttt{qubit}_1$ ($\texttt{qubit}_2$) in the $i^\text{th}$ ($j^\text{th}$) argument of $G$}{
                $\texttt{qubitvec}$ $\gets$ row vector of length $n$ with 1 in the column corresponding to the qubit in the $k^\text{th}$ argument of $G$ (where $k\neq i, k\neq j$), and zeros in all other columns\;
                $Z$-\texttt{stabilizer candidate} $\gets$  $Z$-\texttt{stabilizer candidate} $+$ $\texttt{qubitvec}$\;
            }
        }
        \If{$Z$-\texttt{stabilizer candidate} is not in rs($H_Z$)}{
                    \texttt{result} $\gets$ \textbf{false}\;
                    \Return{\texttt{result}}\;
                }
    }
}

\Return{\texttt{result}}\;

\end{algorithm}






\section{(4,4,2) and (4,4,4) code examples}

In Table~\ref{tab:442_444_codes}, we detail just two examples of codes with two or three polynomials set to have weight-4 (and of the form from Lemma~\ref{lem:cup_product}). These codes all admit a cup product and hence a constant-depth circuit of the form given in Equation~\ref{eqn:ccz_circuit} which acts non-trivially on the code space. The codes listed have many weight-2 $Z$-logicals and very large numbers of logical $\overline{CCZ}$ gates enacted by the circuit of physical $CCZ$s. These particular codes are also found to have weight-2 $X$-logicals.

\begin{table*}[t]
    \begin{center}
    \def\arraystretch{1.5}
    \[\begin{array}{c|c|c|c|c|c}
        \textbf{Code parameters} & \ell, m, p & \begin{array}{c}\textbf{Z-logical} \\ \textbf{weights}\end{array} & \begin{array}{c} \textbf{X-logical} \\ \textbf{weights}\end{array} & \textbf{Polynomials} & \begin{array}{c}
             \textbf{No. Logical}\\
             \textbf{CCZs} 
        \end{array}\\
        \hline \hline
        [[108, 18, 2]] & 4, 3, 3 & \begin{array}{c} \lbrace 7\rbrace \times 5, \lbrace 6\rbrace\times 2,\\\lbrace 2\rbrace\times 11 \end{array} & \begin{array}{c} \lbrace13\rbrace\times 5, \lbrace 6\rbrace \times 8\\\lbrace 2\rbrace \times 5 \end{array} & \begin{array}{c} 
        A = (1+x^2)(1+xyz^2)\\
        B = (1+x^2)(1+x^3y^2z)\\
        C = 1 + x^2y^2z^2
        \end{array} & 204\\
        \hline
        [[108,60,2]] & 4,3,3 & \begin{array}{c}\lbrace 5\rbrace\times 2, \lbrace 4\rbrace\times 8,\\ \lbrace 3\rbrace\times 10, \lbrace 2\rbrace\times 40,  \end{array} & \begin{array}{c}\lbrace 9\rbrace\times 2, \lbrace 6\rbrace\times 14,\\ \lbrace 3\rbrace\times 6, \lbrace 2\rbrace\times 38,  \end{array} & \begin{array}{c} 
        A = (1+x^2)(1+z)\\
        B = (1+x^2)(1+x^2yz^2)\\
        C = (1+x^2)(1 + z^2)
        \end{array} & 552
    \end{array}\]
    \end{center}
\caption{Examples of (4,4,2) and (4,4,4) codes with $CCZ$ gates with non-trivial logical action.\label{tab:442_444_codes}}
\end{table*}

\bibliography{ref}

\end{document}